\tikzset{>={Latex[width=2mm,length=2mm]}}
\newtheorem{theorem}	 			{Theorem}[section]
\newtheorem{corollary}		[theorem]	{Corollary}
\newtheorem{prop}		[theorem]	{Proposition}
{\theorembodyfont{\rmfamily} \newtheorem{definition}
[theorem]	{Definition}}
{\theorembodyfont{\rmfamily} \newtheorem{remark}		[theorem]
{Remark}}
{\theorembodyfont{\rmfamily} \newtheorem{example}		[theorem]
{Example}}
{\theorembodyfont{\rmfamily} }
{\theorembodyfont{\rmfamily} }
{\theorembodyfont{\rmfamily} }
{\theorembodyfont{\rmfamily} }
{\theorembodyfont{\rmfamily} }
\theoremstyle{break}
{\theorembodyfont{\rmfamily} }
\newenvironment{proof}{\noindent {\em {Proof:}}}{$\blacksquare$\vskip
\belowdisplayskip}
 \gdef\tfn@fnt{0}%
\newlist{exlist}{enumerate}{1}
\setlist[exlist]{label=(\alph*)}
\newcommand{\prob}[2][]{\text{\bf Pr}\ifthenelse{\not\equal{}{#1}}{_{#1}}{}\!\left[#2\right]}
\newcommand{\expect}[2][]{\text{\bf E}\ifthenelse{\not\equal{}{#1}}{_{#1}}{}\!\left[#2\right]}
\newcommand{\var}[2][]{\text{Var}\ifthenelse{\not\equal{}{#1}}{_{#1}}{}\!\left[#2\right]}
\newcommand{\dev}[2][]{\text{StdDev}\ifthenelse{\not\equal{}{#1}}{_{#1}}{}\!\left[#2\right]}
\def\sse{\subseteq}
\def\var{\mbox{Var}}
\newcommand{\tfm}{(\allocs,\prices,\burns)}
\newcommand{\fpa}{(\allocs^f,\prices^f,\burns^f)}
\newcommand\fnsep{\textsuperscript{,}}
\newcommand{\bid}{b}
\newcommand{\bids}{{\mathbf \bid}}
\newcommand{\bidt}[1][t]{{\bid_{#1}}}
\newcommand{\epnebid}{b^*}
\newcommand{\val}{v}
\newcommand{\valt}[1][t]{{\val_{#1}}}
\newcommand{\alloc}{x}
\newcommand{\allocs}{{\mathbf \alloc}}
\newcommand{\alloct}[1][t]{\alloc_{#1}}
\newcommand{\price}{p}
\newcommand{\prices}{{\mathbf \price}}
\newcommand{\pricet}[1][t]{\price_{#1}}
\newcommand{\burn}{q}
\newcommand{\burns}{{\mathbf \burn}}
\newcommand{\burnt}[1][t]{\burn_{#1}}
\newcommand{\cost}{\gamma}
\newcommand{\oca}{(\bids,\bm{\tau})}
\newcommand{\nine}{*}
\newcommand{\ninetfm}{(\allocs^{\nine},\prices^{\nine},\burns^{\nine})}
\title{Transaction Fee Mechanism Design for the Ethereum Blockchain:
  An Economic Analysis of EIP-1559\thanks{This work was 
funded by the Decentralization Foundation
(\protect\url{https://d24n.org/}).  
This report has benefited from comments by and discussions with a
number of people: Maryam Bahrani, Abdelhamid Bakhta,
Tim Beiko, Vitalik Buterin, Matheus Ferreira, Danno Ferrin,
James Fickel, Hasu, Georgios
Konstantopoulos, Andrew Lewis-Pye,
Barnab\'e Monnot, Daniel Moroz, Mitchell Stern, Alex Tabarrok,
and Peter Zeitz.
Thanks to James also for introducing me to the problem.}}
\author{Tim
Roughgarden\thanks{Author's permanent position: Professor of Computer Science,
Columbia University, 500 West 120th Street, New York, NY 10027.
Email: {\tt tim.roughgarden@gmail.com}.  Disclosure: I have
no financial interests in Ethereum, either long or short.}}
\date{\today}
\begin{document}

\maketitle

\begin{abstract}
EIP-1559 is a proposal to make several tightly coupled additions to
Ethereum's transaction fee mechanism, including variable-size blocks
and a burned base fee that rises and falls with demand.  This report
assesses the game-theoretic strengths and weaknesses of the proposal
and explores some alternative designs.
\end{abstract}

\setcounter{tocdepth}{2}
\tableofcontents

\section{TL;DR}

\subsection{A Brief Description of EIP-1559}

In the Ethereum protocol, the transaction fee mechanism is the
component that determines, for every transaction added to the Ethereum
blockchain, the price paid by its creator.  Since its inception,
Ethereum's transaction fee mechanism has been a {\em first-price
  auction}: Each transaction comes equipped with a bid,
corresponding to the gas limit times the gas price, which is 
transferred from its creator to the miner of the block that includes
it.

EIP-1559 proposes a major change to Ethereum's transaction fee
mechanism.  Central to the design is a {\em base fee}, which plays the
role of a reserve price and is meant to match supply and demand.  Every
transaction included in a block must pay that block's base fee (per
unit of gas), and this payment is burnt rather than transferred to the
block's miner.  Blocks are allowed to grow as large as double a target
block size; for example, with a target of 12.5M gas, the maximum block
size would be 25M gas.  The base fee is adjusted after every block,
with larger-than-target blocks increasing it and smaller-than-target
blocks decreasing it.  Users seeking special treatment, such as
immediate inclusion in a period of rapidly increasing demand or a
specific position within a block, can supplement the base fee with a
transaction tip that is transferred directly to the miner of the block
that it includes it.

\subsection{Ten Key Takeaways}

The following list serves as an executive summary for 
busy readers as well as a road map for those wanting to dig deeper.
\begin{enumerate}

\item No transaction fee mechanism, EIP-1559 or otherwise, is likely
  to substantially decrease average transaction fees; persistently
  high transaction fees is a scalability problem, not a mechanism
  design problem.  (See Section~\ref{sss:toohigh} for details.)

\item EIP-1559 should decrease the variance in transaction fees
and the delays experienced by some users  through the flexibility of  variable-size blocks.  (Section~\ref{sss:benefits})

\item EIP-1559 should improve the user experience through easy fee
  estimation, in the form of an ``obvious optimal bid,'' outside of
  periods of rapidly increasing demand.  (Section~\ref{ss:1559uic})

\item The short-term incentives for miners to carry out the protocol
  as intended are as strong under EIP-1559 as with first-price
  auctions.  (Sections~\ref{ss:1559mmic} and~\ref{ss:1559oca})

\item The game-theoretic impediments to double-spend attacks,
  censorship attacks,
denial-of-service attacks, and
long-term revenue-maximizing strategies such as base fee
manipulation appear as strong under EIP-1559 as with first-price
auctions.  (Section~\ref{ss:1559collusion})

\item EIP-1559 should at least modestly decrease the rate of ETH inflation
 through the burning of transaction fees.  (Section~\ref{ss:benefits})

\item The seemingly orthogonal goals of easy fee estimation and fee
  burning are inextricably linked through the threat of
  off-chain agreements. (Sections~\ref{ss:refund}--\ref{ss:fpa_burn})

\item Alternative designs include paying base fee revenues forward to
  miners of future blocks rather than burning them; and replacing
  variable user-specified tips by a fixed hard-coded
  tip. (Sections~\ref{ss:forward} and~\ref{ss:tradeoff})

\item EIP-1559's base fee update rule is somewhat arbitrary and should
  be adjusted over time. (Section~\ref{ss:update})

\item Variable-size blocks enable a new (but expensive) attack
  vector: overwhelm the network with a sequence of maximum-size
  blocks.  (Sections~\ref{sss:chooserate}--\ref{sss:elastic})

\end{enumerate}

\subsection{Organization of Report}

Section~\ref{s:tfm} reviews Ethereum's current transaction fee
mechanism and provides a detailed description of the changes proposed
in EIP-1559.  Section~\ref{s:market} considers the market for
computation on the Ethereum blockchain and the basic forces of supply
and demand at work.  Section~\ref{s:ux} formalizes the concepts of a
``good user experience'' and ``easy fee estimation'' via posted-price
mechanisms.  Section~\ref{s:defs} defines several desirable
game-theoretic guarantees at the time scale of a single block, and
Section~\ref{s:mm} delineates the extent to which the  transaction fee
mechanism proposed in EIP-1559 satisfies them.
Section~\ref{s:collusion} investigates the possibility of collusion by
miners over long time scales.  Section~\ref{s:alt} spells out the
fatal flaws with some natural alternative designs and identifies 
worthy directions for further design experimentation.
Section~\ref{s:add} covers
additional benefits of
the mechanism proposed in EIP-1559, along with a short discussion of EIP-2593
(the ``escalator'').
Section~\ref{s:conc} concludes.

Sections~\ref{s:tfm}--\ref{s:ux}, \ref{s:collusion}, and
\ref{s:add}--\ref{s:conc} are relatively non-technical and meant for a
general audience.  Sections~\ref{s:defs}--\ref{s:mm} and~\ref{s:alt}
are more mathematically intense and aimed at readers who have at
least a passing familiarity with mechanism design theory (see
e.g.~\cite{f13} for the relevant background).\footnote{Other economic
  analyses of EIP-1559 include~\cite{aijan,1559analysis,pintail,qureshi,MZ51}.}

\section{Transaction Fee Mechanisms in Ethereum: Present and Future}\label{s:tfm}

This section reviews the economically salient properties of Ethereum
transactions (Section~\ref{ss:tx}), the status quo of a first-price
transaction fee mechanism (Section~\ref{ss:fpa}), the nuts and bolts
of the new transaction fee mechanism proposed in EIP-1559
(Section~\ref{ss:1559}), and the intuition behind the proposal
(Section~\ref{ss:1559intuition}).

\subsection{Transactions in Ethereum}\label{ss:tx}

The Ethereum blockchain, through its Ethereum virtual machine (EVM), 
maintains state (such as account balances) and carries out
instructions that change this state (such as transfers of the native
currency, called ether
(ETH)).  A {\em transaction} specifies a sequence of instructions to
be executed by the EVM.  The creator of a transaction is responsible
for specifying, among other fields, a {\em gas limit} and a {\em gas
  price} for the transaction.  The gas limit is a measure of the cost
(in computation, storage, and so on) imposed on the Ethereum
blockchain by the transaction.  The gas price specifies how much the
transaction creator is willing to pay (in ETH) per unit of gas.  For
example, the most basic type of transaction (a simple transfer) requires
21,000 units of gas; more complex transactions require more gas.
Typical gas prices reflect the current demand for EVM computation and
have varied over time by orders of magnitude; readers wishing to keep a
concrete gas price in mind could use, for example, 100 gwei (where one
gwei is~$10^{-9}$ ETH).
The total amount that the creator of
a transaction offers to pay for its execution is then the gas limit
times the gas price:
\begin{equation}\label{eq:bid}
\text{amount paid} := \text{gas limit} \times \text{gas price}.
\end{equation}
For example, for a 21,000-gas transaction with a gas price of 100
gwei, the corresponding payment would be $2.1 \times 10^{-3}$ ETH (or
1.26 USD at an exchange rate of 600 USD/ETH).

A {\em block} is an ordered sequence of transactions and associated
metadata (such as a reference to the predecessor block).  There is a
cap on the total gas consumed by the transactions of a block, which we
call the {\em maximum block size}.  The maximum block size
has increased over time and is currently~12.5M gas, enough
for roughly 600 of the simplest transactions.  Blocks are created and
added to the blockchain by {\em miners}.  
Each miner maintains a {\em mempool} of outstanding transactions and
collects a subset of them into a block.  To add a block to the
blockchain, a miner provides a proof-of-work in the form of a solution
to a computationally difficult cryptopuzzle; the puzzle
difficulty is adjusted over time to maintain a target rate of block
creation (roughly one block per 13 seconds).  Importantly, the miner
of a block has dictatorial control over which outstanding transactions
are included and their ordering within the block.  Transactions are
considered confirmed once they are included in a block that is added
to the blockchain.  The current state of the EVM is then the result of
executing all the confirmed transactions, in the order they appear
in the blockchain.\footnote{Technically, a longest-chain rule is used
  to resolve forks (that is, two or more blocks claiming a common
  predecessor). The confirmed transactions are then defined as those
  in the blocks that are well ensconced in the longest chain (that is,
  already extended by sufficiently many subsequent blocks).}

The {\em transaction fee mechanism} is the part of the protocol that
determines the amount that a creator of a confirmed transaction 
pays, and to whom that payment is directed.  

\subsection{First-Price Auctions}\label{ss:fpa}

Ethereum's transaction fee mechanism is and always has been a {\em
  first-price auction}~\cite{ethereum}.\footnote{First-price auctions
  are also used in Bitcoin~\cite{bitcoin}.}
\begin{mdframed}[style=offset,frametitle={First-Price Auctions}]
\begin{enumerate}

\item {\em Who pays what?}  The creator of a confirmed transaction
  pays the specified gas limit times the specified gas
  price (as in~\eqref{eq:bid}). 

\item {\em Who gets the payment?}  The entire payment is transferred
  to the miner of the block that includes the transaction.\tablefootnote{We will
    ignore details concerning transactions that run out of gas or
    complete with unused gas.}

\end{enumerate}
\end{mdframed}

A user submitting a transaction 
is sure to pay either the amount in~\eqref{eq:bid}
(if the transaction is confirmed) or~0 (otherwise).  A miner who mines
a block is sure to receive as revenue 
the amount in~\eqref{eq:bid} from each of the transactions it chooses
to include.  Accordingly,
many miners pack blocks up to 
the maximum block size, greedily prioritizing the outstanding
transactions with the highest gas prices.\footnote{Technically, because
  different transactions
  have different gas limits, selecting the revenue-maximizing set of
  transactions is a knapsack problem (see e.g.~\cite{ai3}).  The
  minor distinction between optimal and greedy knapsack solutions is
  not important for this report.}\fnsep\footnote{We use the word ``greedy''
  without judgment---``greedy algorithm'' is a standard term for a
  heuristic that is
  based on a   sequence of myopic decisions.}

\subsection{EIP-1559: The Nuts and Bolts}\label{ss:1559}

\subsubsection{Burning a History-Dependent Base Fee}

EIP-1559, following Buterin~\cite{vb14,vb16,vb1559}, proposes a mechanism that
makes several tightly coupled changes to the status quo.
\begin{mdframed}[style=offset,frametitle={EIP-1559: Key Ideas (1--3 of~8)}]
\begin{enumerate}

\item Each block has a protocol-computed reserve price (per unit of
  gas) called the {\em base fee}.  Paying the base fee is a
  prerequisite for inclusion in a block.\tablefootnote{Technically,
    a miner can also include a transaction unwilling to pay the full
    base fee, but it must then dip into its block reward to make up the difference.
    We ignore this detail in this report.}

\item The base fee is a function of the preceding blocks only, and
does not depend on the transactions included in the current
block.

\item All revenues from the base fee are burned---that is, permanently
  removed from the circulating supply of ETH.

\end{enumerate}
\end{mdframed}
Removing ETH from the supply increases the value of every
ether still in circulation.  
Fee-burning can therefore be viewed as a lump-sum
refund to ETH holders (\`{a} la stock buybacks).

The second point is underspecified; how, exactly, is the base fee
derived from the preceding blocks?  Intuitively, increases and
decreases in demand should put upward and downward pressure on the
base fee, respectively.\footnote{In the economics literature, such
  demand-dependent price adjustment is called ``t\^{a}tonnement''
  (French for ``groping'').}
But the blockchain records only the confirmed
transactions, not the transactions that were priced out.  If miners
publish a sequence of full (12.5M gas) blocks, how can the protocol
distinguish whether the current base fee is too low or exactly
right?

\subsubsection{Variable-Size Blocks}

The next key idea is to relax the constraint that every block has size
at most 12.5M gas and instead require only that the {\em average} block
size is at most 12.5M gas.\footnote{More generally, EIP-1559 is
  parameterized by a
  target block size, which is adjusted by miners over time 
(like the maximum block size is now).
For concreteness, throughout this report we assume a target block
size of 12.5M gas, the current maximum block size.}
The mechanism in EIP-1559 then uses past block sizes as an on-chain
measure of demand, with big blocks (more than 12.5M gas) and small
blocks (less than 12.5M gas) signaling increasing and decreasing demand,
respectively.\footnote{The flexibility provided by variable block
  sizes can also
  reduce the variance in equilibrium transaction fees and the delays
  experienced by some users; see Section~\ref{ss:toohigh}.}  
Some finite maximum block size is still needed to control network
congestion; the current EIP-1559 spec~\cite{1559spec} proposes using
twice the average block size.
\begin{mdframed}[style=offset,frametitle={EIP-1559: Key Ideas (continued)}]
\begin{enumerate}

\item [4.] Double the maximum block size (e.g., from 12.5M gas
  to 25M gas), with the old maximum (e.g., 12.5M gas) serving as the
  {\em target} block size.

\item [5.] Adjust the base fee upward or downward whenever the
  size of the latest block is bigger or smaller than the target
  block size, respectively.

\end{enumerate}
\end{mdframed}
The specific adjustment rule proposed in the EIP-1559
spec~\cite{1559spec} computes the base
fee $r_{cur}$ for the current block from the base fee $r_{pred}$ and
size $s_{pred}$ of the predecessor block using the following formula,
where~$s_{target}$ denotes the target block size:\footnote{For
  simplicity, we ignore numerical details such as rounding the base
  fee to an integer.}
\begin{equation}\label{eq:update}
r_{cur} := r_{pred} \cdot \left(1 + \frac{1}{8} \cdot \frac{s_{pred} -
    s_{target}}{s_{target}} \right).
\end{equation}
For example, the base fee increases by 12.5\% after a maximum-size
block (i.e., double the target size) and decreases by 12.5\% after an
empty block.  A maximum-size block followed by an empty block (or vice
versa) leaves the base fee at $\tfrac{9}{8} \cdot \tfrac{7}{8} =
\tfrac{63}{64} \approx 98.4\%$ of its prior value.\footnote{See
also Table~\ref{table:basefee} in Section~\ref{sss:benefits} for a more
  complex example of this update rule in action, Monnot~\cite{monnot}
  for detailed simulations, and Filecoin~\cite{filecoin} for a recent
  deployment.}

If the base fee is burned rather than given to miners, why should
miners bother to include any transactions in their blocks at all?
Also, what happens when there are lots of transactions (more than 25M
gas worth) willing to pay the current base fee?  

\subsubsection{Tips}

The transaction fee mechanism proposed in EIP-1559 addresses the
preceding two questions by allowing the creator of a transaction to
specify a {\em tip}, to be paid above and beyond the base fee, which
is transferred to the miner of the block that includes the transaction
(as in a first-price auction).  Small tips should be sufficient to
incentivize a miner to include a transaction during a period of stable
demand, when there is room in the current block for all the
outstanding transactions that are willing to pay the base fee.  Large
tips can be used to encourage special treatment of a transaction, such
as a specific positioning within a block, or the immediate inclusion
in a block in the midst of 
a sudden demand spike.
\begin{mdframed}[style=offset,frametitle={EIP-1559: Key Ideas (continued)}]
\begin{enumerate}

\item [6.] Rather than a single gas price, a transaction now includes
a {\em tip} and a {\em fee cap}.  
A transaction will be included in a block only if its fee cap is at
least the block's base fee.

\item [7.] {\em Who pays what?} If a transaction with tip~$\delta$, fee cap $c$, and gas
  limit~$g$ is included in a block with base fee $r$, the transaction
  creator pays
  $g \cdot \min\{ r+\delta, c\}$ ETH.

\item [8.] {\em Who gets the payment?}  
Revenue from the base fee (that is, $g \cdot r$) is burned and the
remainder ($g \cdot \min\{ \delta, c-r\}$) is transferred to the miner
of the block.

\end{enumerate}
\end{mdframed}
For example, consider a block with base fee~100 (in gwei per unit of
gas).  If the
block's miner includes a transaction with tip~4 and fee cap~200, the
creator of that transaction will pay~104 gwei per unit of gas (100 of which
is burned, 4 of which goes to the miner).  An included transaction
with tip~10 and fee cap~105 would pay~105 gwei per unit of gas (100 of
which is burned, 5 of which goes to the miner).

A user submitting a transaction with tip~$\delta$ and fee cap~$c$ is
sure to pay at most~$c$ gwei per unit of gas, and will pay less whenever
the current base fee is small (i.e., less than $c-\delta$).  A miner
who mines a block is sure to receive all the revenue from the tips of
the transactions it chooses to include.  Accordingly, one might expect
a typical miner to include all the transactions with fee cap greater
than the base fee.
If the total gas consumed by such transactions exceeds the maximum
block size of 25M gas, one might expect the miner to pack its block
full, greedily prioritizing the outstanding transactions with the
highest tips.

\subsection{An Informal Argument for EIP-1559}\label{ss:1559intuition}

The number of new ideas in EIP-1559 can be overwhelming.  Why so many
changes at once?  Does one of the changes necessitate the rest?  We
next outline one narrative of why EIP-1559 might have to look more or
less the way that it does, taking as given the goal of making fee
estimation far easier for users than in the status quo.  
The remainder of this report will interrogate this narrative
mathematically and explore some alternative designs.
\begin{mdframed}[style=offset,frametitle={Why EIP-1559 Looks the Way
    That It Does (Informal Argument)}]
\begin{enumerate}

\item First-price auctions are challenging for users to reason about
  because a user's optimal gas price depends on the gas prices offered
  by other users at the same time.

\item Other common auction designs in which the prices charged depend
  on the set of included transactions, such as second-price (a.k.a.\
  Vickrey) auctions, can be easily manipulated by miners through fake transactions.

\item Simple fee estimation, in which users are not forced to
  reason about other users' behavior, therefore seems to require a
  base fee---a price that is set independently of the 
  transactions included in the current block.

\item The ideal base fee would result in blocks filled with the
  highest-value transactions.  Demand changes over time, so
  the base fee must respond in kind.

\item The base fee revenues of a block must be burned or otherwise
  withheld from the block's miner, as otherwise the miner could
  collude with users off-chain to costlessly simulate a first-price
  auction.

\item 
Because demand is not recorded on-chain, an on-chain proxy
such as variable block sizes must be used to adjust the base fee.

\item Tips are required to disincentivize miners from publishing empty
  blocks.

\item Tips should be specified by users rather than hard-coded into
  the protocol so that high-value transactions can be identified and
  accommodated during a sudden demand spike.

\item Burning any portion of the tips would drive the tip market
  off-chain, and thus tips may as well be transferred entirely to a
  block's miner.

\end{enumerate}
\end{mdframed}

\section{The Market for Ethereum Transactions}\label{s:market}

This section steps away from the discussion of specific mechanisms and
focuses instead the basic forces of supply and demand at work in the
Ethereum blockchain.  Section~\ref{ss:market} defines a
``market-clearing outcome'' and posits it as the ideal outcome of a
transaction fee mechanism.  Section~\ref{ss:toohigh} emphasizes
that no mechanism can guarantee low transaction fees during periods in
which the demand for EVM computation significantly outstrips its
supply, and clarifies EIP-1559's likely effect on high transaction fees.

\subsection{Market-Clearing Prices and Outcomes}\label{ss:market}

The 12.5M gas available in an Ethereum block is a scarce resource, and
in a perfect world it should be allocated to the transactions that
derive the most value from it.  We can make this idea precise using a
{\em demand curve}, which is a decreasing function that specifies the
total amount of gas demanded by users at a given gas
price.\footnote{For simplicity of
  analysis, throughout this report we assume that demand is
  exogenous and independent of the choice of or actions by a
  transaction fee mechanism.  
Houy~\cite{houy} and Rizun~\cite{rizun} use a similar
formalism to reason about blockchain transaction fee markets.  
Richer models of demand, with pending transactions excluded from one
block persisting to the next, are studied by
Monnot~\cite{monnot,1559sim} in the context of EIP-1559 simulations
and by Easley et al.~\cite{EOB17} and Huberman et al.~\cite{leshno} to
carry out an economic analysis of Bitcoin's transaction fee mechanism.}
For example,
a {\em linear} demand curve has the form $D(p) = \max\{0, b - ap\}$,
where~$p$ denotes the gas price and $a,b \ge 0$ are nonnegative
constants (Figure~\ref{f:demand_curve}).

\begin{figure}
  \centering
    \includegraphics[width=0.5\textwidth]{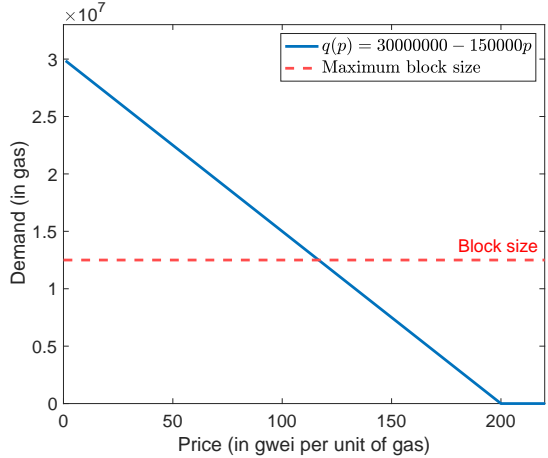}
\caption{An example linear demand curve, with $b = 30\text{M}$ and $a=150\text{K}$.
For example, there is a demand of 30M gas at a gas price of~0; zero
demand at a gas price of~200 gwei; and a demand of 12.5M gas at a gas
price of $116 \tfrac{2}{3}$ gwei.}
\label{f:demand_curve}
\end{figure}

The {\em market-clearing price} is then the price at which the total
amount of gas demanded equals the available supply (i.e., 12.5M gas).
For example, in Figure~\ref{f:demand_curve}, the market clearing price
is~$116 \tfrac{2}{3}$ gwei.  If the demand at price~0 is less than the
supply, we define the market-clearing price as~0.

The market-clearing price is the ideal gas price for a block.
For suppose such a
price~$p^*$ fell magically from the sky and became common knowledge to
all users, with the understanding that all confirmed transactions in
the current block will pay~$p^*$ per unit of gas.  In the resulting
outcome---the {\em market-clearing outcome}---users with maximum
willingness to pay at least~$p^*$ per unit of gas will opt to have
their transactions included, while those with a lower willingness to
pay opt out.  The end result?  The supply of 12.5M gas will be fully
utilized (because~$p^*$ is a market-clearing price), and moreover will
be allocated precisely to the highest-value transactions (those
willing to pay a gas price of at least~$p^*$).\footnote{Or if the
  supply constraint is not binding (and hence $p^*=0$),
  all transactions are included.}  Put differently, the
market-clearing outcome maximizes the value of the current block,
subject to the supply constraint of 12.5M gas.  For this reason, we
adopt the market-clearing outcome as the most desirable one for a
transaction fee mechanism.
\begin{mdframed}[style=offset,frametitle={Ideal Outcome of a
    Transaction Fee Mechanism}]
Every block is fully utilized by the highest-value transactions, with
all transactions paying a gas price equal to the market-clearing
price.
\end{mdframed}
Both the status quo and EIP-1559 transaction fee mechanisms can be
viewed as striving for this ideal, market-clearing outcome.
In first-price auctions, users are expected to estimate
what the current market-clearing price might be and bid accordingly.
In the EIP-1559 mechanism, the protocol continually adjusts the base
fee in search of the market-clearing price.

\begin{remark}[Revenue as a Necessary Evil]\label{rem:revenue}
The purpose of the market-clearing price is to differentiate high-value
and low-value transactions, so that the scarce resource that is an
Ethereum block can be allocated in the most valuable way.  Revenue is
generated in the market-clearing outcome (provided the supply
constraint is binding), but only as a side effect in the service of
economic efficiency.  The revenue-maximizing price is generally higher
than the market-clearing price, and it plays an important role in the
discussion in Section~\ref{s:collusion} of possible attacks by
colluding miners.
\end{remark}

\begin{remark}[Non-Zero Marginal Costs]\label{rem:marg_costs}
The preceding definition of a market-clearing outcome assumes that the
marginal cost to a miner of including an additional transaction in its
block is~0 (or $+\infty$, if including the transaction would
violate the cap of 12.5M gas).  In reality, every transaction imposes 
a small marginal cost on the miner; for example, one factor is that
the probability that a block is orphaned from the main chain (i.e.,
the ``uncle rate'') increases with the block
size~\cite{DW13}.

If the overall marginal cost to a miner is~$\mu$ gwei per unit of gas, then
$\mu$ plays the role of~0 in the more general definitions of
market-clearing prices and outcomes.\footnote{Alternatively, $\mu$ is
  the minimum compensation per unit of gas that a miner is willing to
  accept for including a transaction.}
That is, if the demand at
price~$\mu$ is at most the supply of 12.5M gas, the market-clearing
price is~$\mu$; in the corresponding outcome, all transactions
willing to pay a gas price of at least $\mu$ are included in the
block.
\end{remark}

\subsection{Will EIP-1559 Lower Transaction Fees?}\label{ss:toohigh}

The Ethereum community is justifiably concerned about overly high
transaction fees crowding out all but the most lucrative uses of the
Ethereum blockchain (e.g., DeFi arbitrage opportunities).  
No transaction fee mechanism can be a panacea to this problem.
This section clarifies what effects on transaction fees should and
should not be expected from the adoption of the transaction fee
mechanism proposed in EIP-1559.

\subsubsection{The Problem of High Market-Clearing Prices}\label{sss:toohigh}

First, whatever the mechanism, real transaction fees cannot be
expected to drop significantly below the market-clearing price during
a period of relatively stable demand.  With fees below that price,
demand for gas would exceed supply, resulting in some lower-value
transactions replacing higher-value transactions.  For example, with
the demand curve in Figure~\ref{f:demand_curve}, if typical fees
dropped to~100 gwei per unit of gas, the demand would be~15M gas.  The
2.5M gas worth of excluded transactions will inevitably include some
for which the creator's willingness to pay is at least the
market-clearing price of~$116 \tfrac{2}{3}$ gwei.  Such users should
be expected to push up transaction fees and guarantee inclusion of
their transactions, either on-chain through the transaction fee
mechanism (e.g., by increasing a transaction's gas price in a
first-price auction), or off-chain through a side agreement with a
miner.

But what if the
market-clearing price is already unacceptably high?
The only ways to decrease the market-clearing price are to increase
supply or decrease demand (Figure~\ref{f:shift})---actions that are
generally outside the purview of mechanism design.
\begin{mdframed}[style=offset,frametitle={Scalability vs.\ Mechanism Design}]
Lowering the market-clearing price by increasing supply or decreasing
demand is fundamentally a scalability problem, not a mechanism design
problem.  
\end{mdframed}
For example, typical layer-1 scaling solutions like
sharding,
in which different parts of the blockchain operate in
parallel, increase transaction throughput and therefore decrease the
market-clearing price.  Typical layer-2 scaling solutions like payment
channels
and rollups, which
effectively move some transactions off-chain, decrease demand for EVM
computation and likewise decrease the market-clearing price.
Looking toward the near future, good scaling solutions will be crucial
for keeping transaction fees in check and more generally for
encouraging the growth of the Ethereum network.

\begin{figure}[h]
  \centering
  \subfigure[Increasing the supply]{%
    \includegraphics[width=0.475\textwidth]{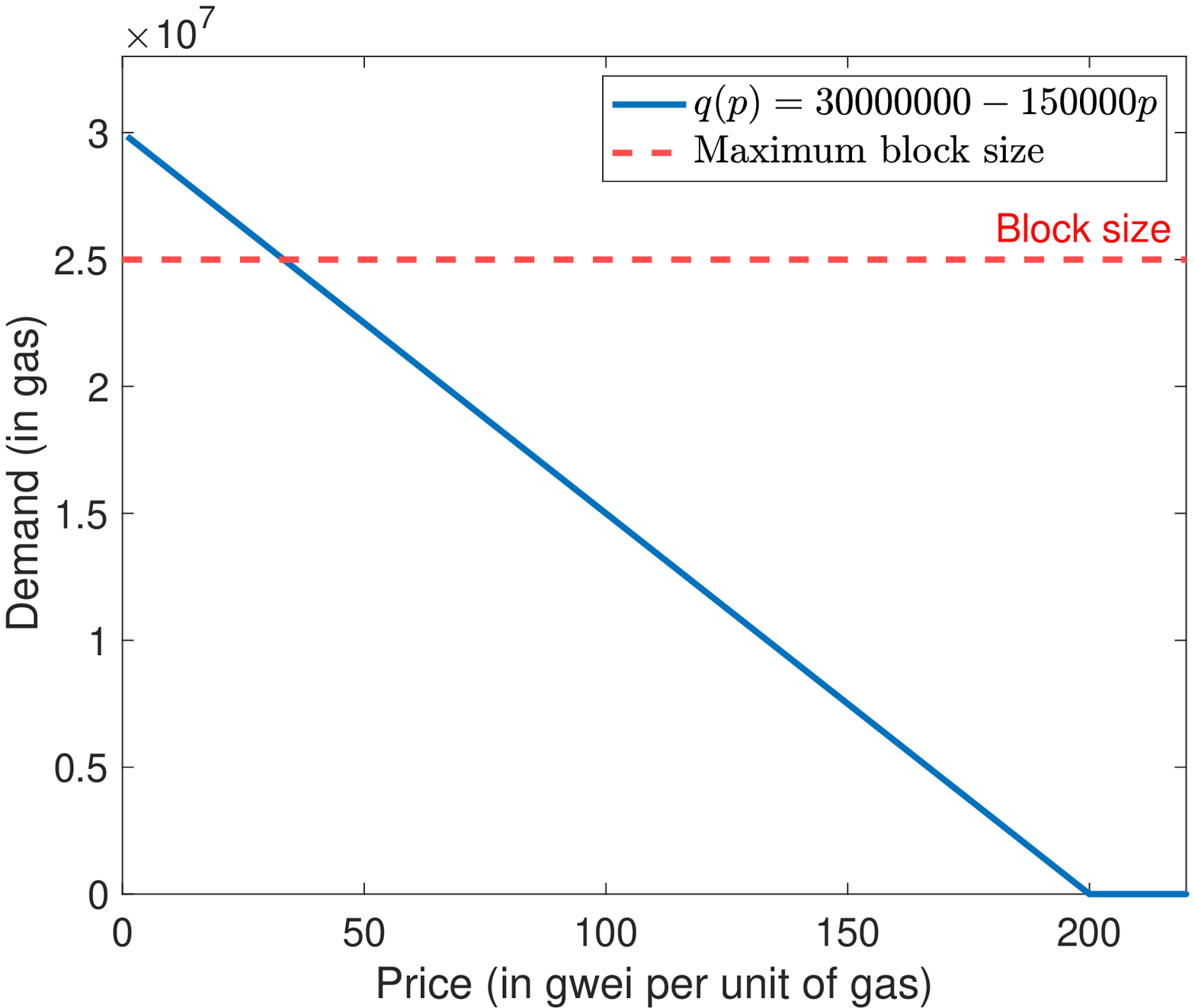}}
\quad
  \subfigure[Decreasing the demand]{%
    \includegraphics[width=0.475\textwidth]{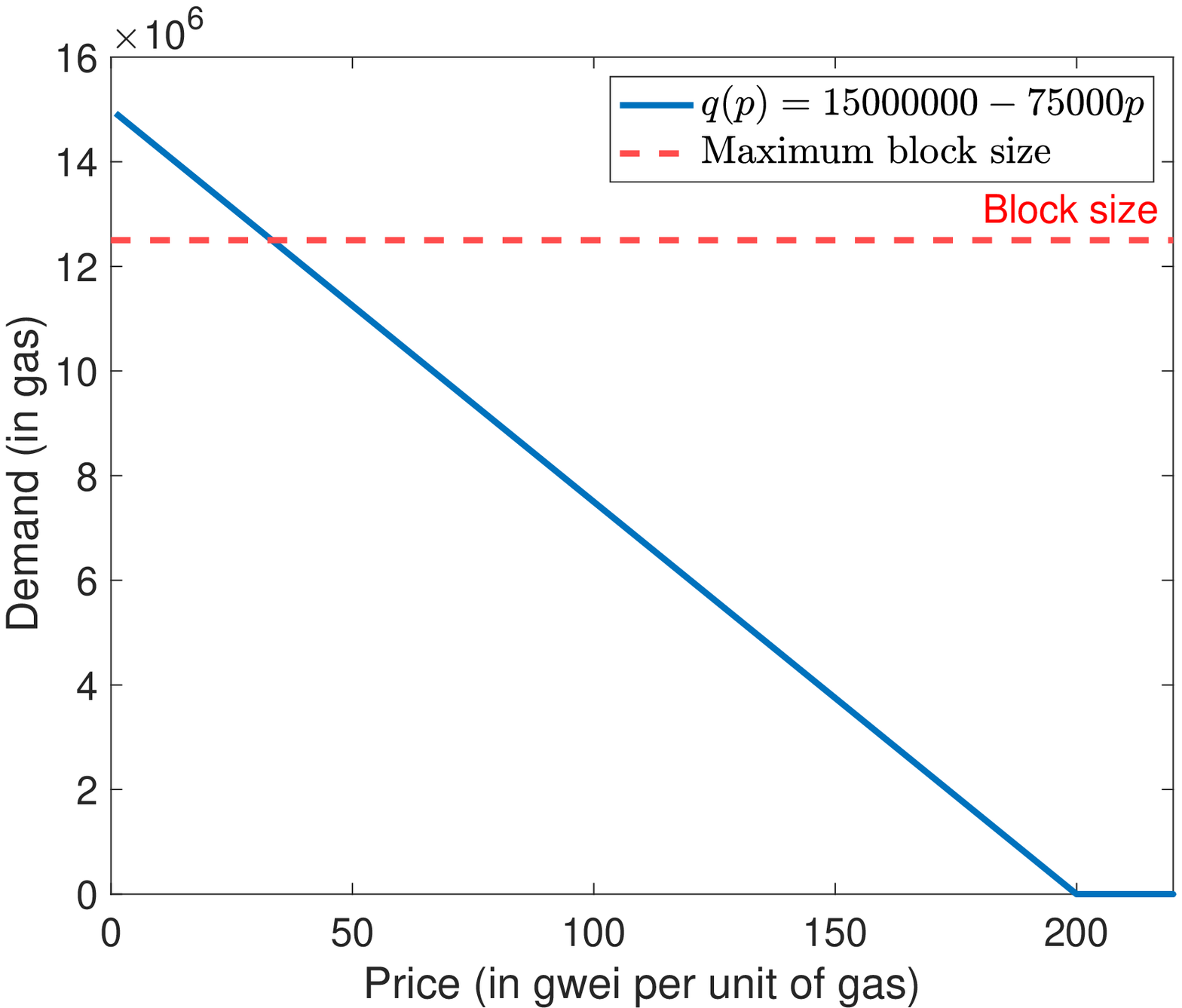}}
\caption{In the example in Figure~\ref{f:demand_curve}, doubling the
  supply (shown in~(a)) or halving the demand (shown in~(b)) cuts the
  market-clearing price from~$116 \tfrac{2}{3}$ gwei to~$33
  \tfrac{1}{3}$ gwei.}\label{f:shift}
\end{figure}

\subsubsection{Two Potential Benefits of EIP-1559}\label{sss:benefits}

The transaction fee mechanism proposed in EIP-1559 
has the potential to partially mitigate high transaction fees in two
different ways.  First, in a period of relatively stable demand, 
users can adopt the base fee as a good known-in-advance proxy for the
market-clearing price; this should lead to less guesswork and
consequent overpayment than in today's first-price auctions.
See also the discussion in Section~\ref{ss:feeest}.

Second, in a period of volatile demand, the mechanism proposed in
EIP-1559 can reduce the {\em 
  variance} in transaction fees experienced by users by exploiting
variable block sizes---in effect, borrowing capacity from the near
future to use in a time of need.  
This flexibility in block sizes can reduce the maximum
transaction fee paid during the period (as well as the delay
experienced by some users).
\begin{example}[Trajectory of EIP-1559]\label{ex:traj}
Consider the trajectory of the EIP-1559 mechanism that is
detailed in Table~\ref{table:basefee} and depicted in
Figure~\ref{f:basefee}.  For this example, we assume that tips are
negligible and that a transaction is included in a block if and only
if its fee cap is at least the current base fee.  Period~1 represents
the end of a long era of stable demand, during which the base fee 
converged to the market-clearing price for the
target block size (12.5M gas).  
Demand doubles for the next six periods.
With a fixed supply of 12.5M gas, the market-clearing price
jumps suddenly from $33\tfrac{1}{3}$ to~$116\tfrac{2}{3}$ after
period~1, and back to~$33 \tfrac{1}{3}$ after period~7.  In the
EIP-1559 mechanism, the base fee---the mechanism's guess at the
current market-clearing price for the target block size---increases
slowly but surely, with larger-than-target blocks absorbing the excess
demand along the way.  Once demand returns to its original level,
blocks will have size smaller than the target as the mechanism's base
fee slowly but surely decreases to the new market-clearing price.  
In this example, the
maximum base fee of
61.69 (in period~8) is only about 53\% of the maximum market-clearing
price with a fixed block size of 12.5M gas ($116\tfrac{2}{3}$, in
periods~2--7).
\end{example}

\begin{table}
\centering
{\footnotesize
\begin{tabular}{|c|c|c|c|c|c|c|c|c|}\hline
& Period 1 & Period 2 & Period 3 & Period 4 & Period 5 & Period 6 & Period 7  & Period 8\\ \hline\hline
Demand & Low & High & High & High & High & High & High & Low \\ \hline
M-C Price (12.5M) & $33.33$
& $116.67$ & $116.67$ & $116.67$ & $116.67$ & $116.67$ & $116.67$ & $33.33$\\ \hline
EIP-1559 Base Fee 
& $33.33$ & $33.33$ & $37.5$ & $41.95$ & $46.65$ & $51.55$ & $56.59$ & $61.69$
\\ \hline
EIP-1559 Block Size 
& 12.5M & 25M & 24.38M & 23.71M & 23M & 22.27M & 21.51M & 10.37M
\\ \hline
\end{tabular}
}
\caption{An example of the EIP-1559 base fee adjustment rule in
  action.  ``Low'' demand means the demand curve
  $D(p)=15000000-75000p$ shown in Figure~\ref{f:shift}(b); ``high''
  means the demand curve $D(p)=30000000-150000p$ shown in
  Figure~\ref{f:demand_curve}.  (Here ``demand'' means the total gas
  consumed by all pending transactions that have a fee cap of~$p$ or
  more.)
The second row shows the market-clearing price for each demand curve
when the supply is fixed at 12.5M gas.  The third and fourth rows show
the joint evolution of the base fee and block size under the EIP-1559 mechanism, assuming that the base fee matches the market-clearing price in period~1 and that all users submit negligible tips.
}\label{table:basefee}
\end{table}

\begin{figure}[h]
  \centering
  \subfigure[Price comparison]{%
    \includegraphics[width=0.475\textwidth]{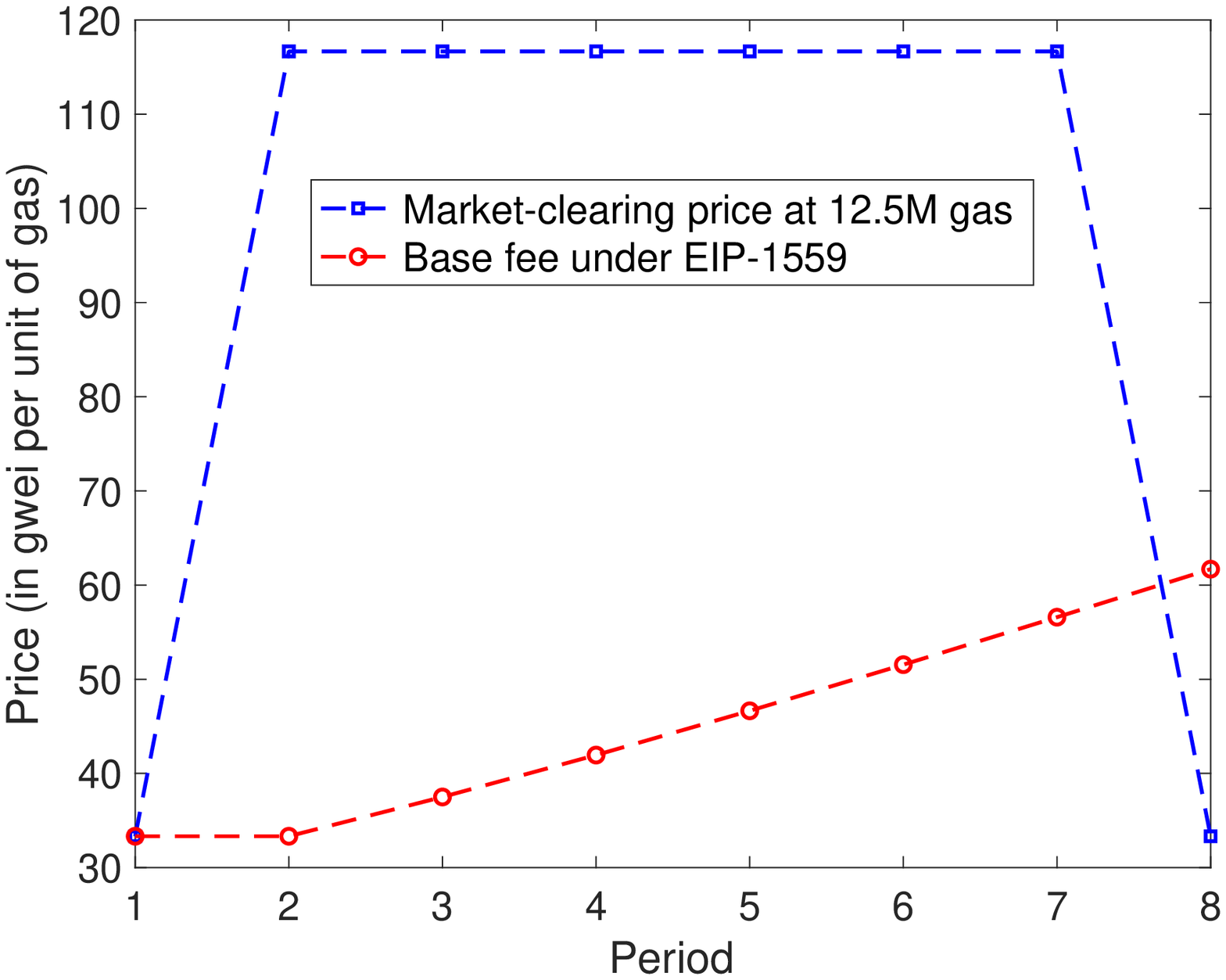}}
\quad
  \subfigure[Block size comparison]{%
    \includegraphics[width=0.475\textwidth]{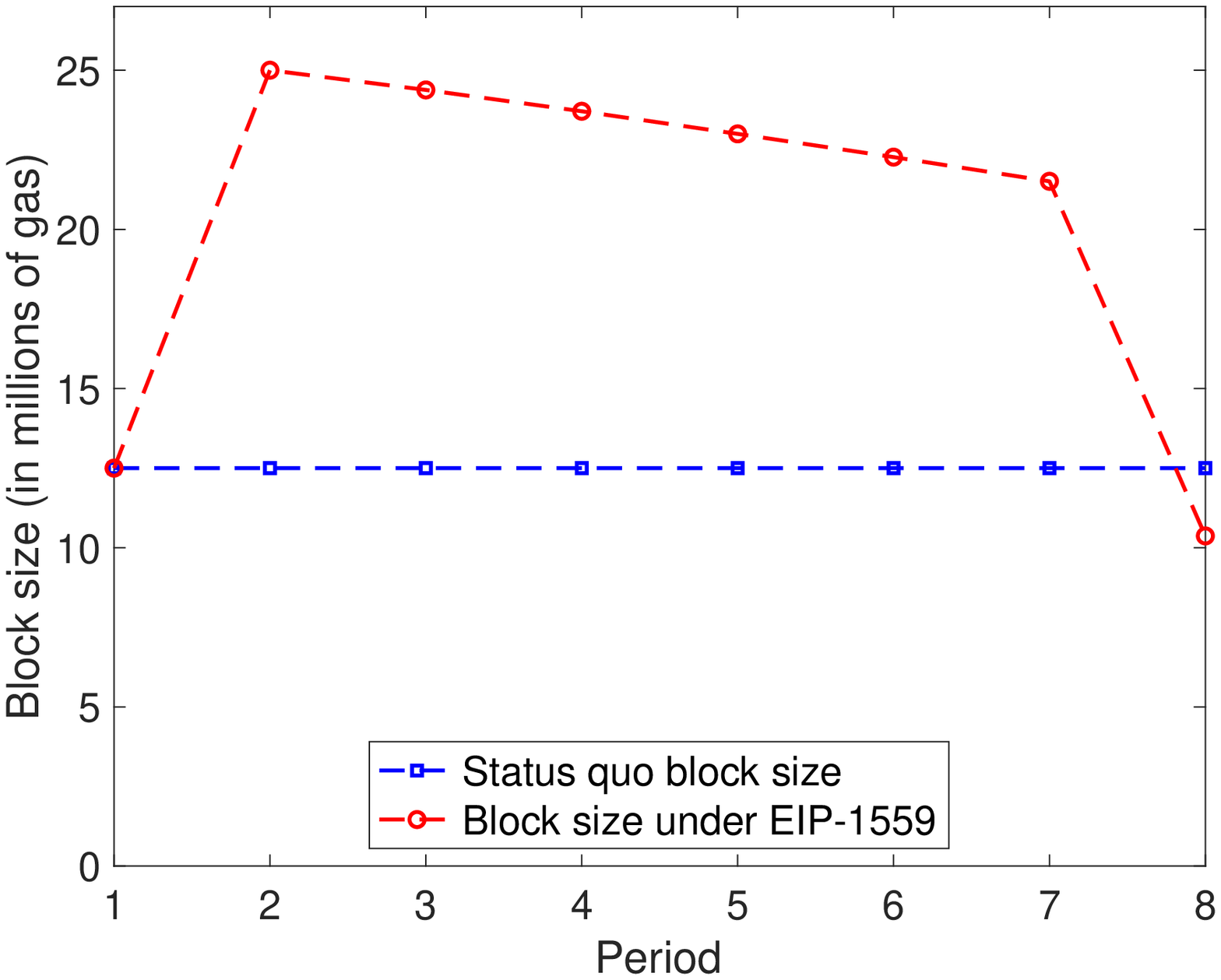}}
\caption{For the example detailed in Table~\ref{table:basefee}, a comparison of
  the price and block size under the status quo and under EIP-1559.
  In the subsequent periods, the base fee and block size under
  EIP-1559 gradually return to $33 \tfrac{1}{3}$ gwei and 12.5M gas,
  respectively.}\label{f:basefee}
\end{figure}

\section{The Purpose of EIP-1559: Easy Fee Estimation}\label{s:ux}

\subsection{The Problem of Fee Estimation}\label{ss:feeest}

With or without EIP-1559, transaction fees will be high whenever the
demand for EVM computation far exceeds its supply
(Section~\ref{ss:toohigh}).  So what's the point of the proposal?
To make transaction fees {\em more predictable} and thereby make the
fee estimation problem---the problem of choosing the optimal gas
price for a transaction---as straightforward as possible.

Ethereum users appear to overpay regularly for EVM computation,
offering gas prices that are significantly larger than the
market-clearing price~\cite{etherscan}.  Part of the problem
may be attributable to poor fee estimation algorithms in wallets, 
which could conceivably improve over time (see e.g.~\cite{lopp,newbery}, in
the similar context of Bitcoin).
But part of the problem is fundamental to first-price auctions, and
addressing it necessitates a major change in the transaction fee
mechanism.\footnote{Bidding in a first-price auction has long been
  known to be a hard problem; see e.g.~\cite{CST80}.}

\begin{mdframed}[style=offset,frametitle={EIP-1559: Improving the User
    Experience with Easy Fee Estimation}]
This report assumes that the primary purpose of EIP-1559 is to improve
the ``user experience (UX)'' of Ethereum users, and to do so
specifically by making the fee estimation problem as easy as possible.
\end{mdframed}
EIP-1559 also offers a number of other benefits (see
Section~\ref{ss:benefits}), which are treated in this report as happy
accidents---byproducts of the proposed UX improvements.\footnote{This
  viewpoint appears consistent with the original motivation for
  EIP-1559.  Buterin~\cite{vb_ec18} writes: ``Our goal is to
  discourage the development of complex miner strategies and complex
  transaction sender strategies in general, including both complex
  client-side calculations and economic modeling as well as various
  forms of collusion.''}

\subsection{Auctions vs.\ Posted-Price Mechanisms}\label{ss:postedprice}

To what extent does EIP-1559 achieve its goal of a ``better UX''?
``User experience'' is a vague term, and it must be defined
mathematically before this question can be answered.  

Definition~\ref{def:uic} presents our formalization of ``good UX,''
and the intuition for it is simple:
Shopping on Amazon is a lot easier than buying a house in a
competitive real estate market.
On Amazon, there's no need to be strategic or second-guess yourself;
you're either willing to pay the listed price for the listed product,
or you're not.  The outcome is economically efficient in that every
user who buys a product has a higher willingness to pay for it than
every user who doesn't buy the product.

When pursuing a house and competing with other potential buyers, you
must think carefully about what price to offer to the seller.  And no
matter how smart you are, you might regret your offer in
hindsight---either because you underbid and were outbid at a price you
would have been willing to pay, or because you overbid and paid more
than you needed to.  The house need not be sold to the potential buyer
willing to pay the most (if that buyer shades their bid too
aggressively), which is a loss in economic efficiency.

Bidding in Ethereum's first-price auctions is like buying a house.
Estimating the optimal gas price for a transaction requires 
making educated guesses about the gas prices chosen for the competing
transactions.
From a user's perspective, any bid may end up looking too high or too
low in hindsight.  From a societal perspective, lower-value transactions
that bid aggressively may displace higher-value transactions that
do not.

Could we redesign Ethereum's transaction fee mechanism so that setting
a transaction's gas price is more like shopping on Amazon?  Ideal
would be a {\em posted-price mechanism}, meaning a mechanism that
offers each user a take-it-or-leave-it gas price for inclusion in the
next block.  We'll see in Section~\ref{ss:1559uic} that the
transaction fee mechanism proposed in EIP-1559 acts like a
posted-price mechanism except when there is a large and sudden
increase in demand (Theorem~\ref{t:1559uic}).

\section{Incentive-Compatible Transaction Fee Mechanisms}\label{s:defs}

This section formalizes three desirable game-theoretic guarantees for
a transaction fee mechanism.  First, miners should be incentivized to
carry out the mechanism as intended, and strongly disincentivized from
including fake transactions (Section~\ref{ss:mmic}).  Second, 
the optimal gas price to specify should be obvious to the creator of a
transaction (Section~\ref{ss:uic}).  Finally, there should be no way
for miners and users to collude and strictly increase their utility by
moving payments off-chain (Section~\ref{ss:oca}).  Sections~\ref{ss:model}
and~\ref{ss:model2} set up the notation and language necessary to
formally state these three definitions.

This and the next section focus on incentives for miners and
users at the time scale of a single block, and on two important types
of attacks that can be carried out at this time scale (the insertion of fake
transactions, and off-chain agreements between miners and users).
Section~\ref{s:collusion} treats incentive issues and attacks that
manifest over longer time scales.

\subsection{The Basic Model}\label{ss:model}

On the supply side, let~$G$ denote the maximum size of a block in gas
(e.g., 12.5M gas in the status quo or 25M gas under EIP-1559), and $\mu \ge 0$
the marginal cost of gas to a miner (as in
Remark~\ref{rem:marg_costs}).\footnote{Equivalently, $\mu$ is the
  minimum gas price that a profit-maximizing miner is willing to
  accept in exchange for transaction inclusion when the maximum block
  size is not a binding constraint.  The
  formal definition of a ``profit-maximizing miner'' is given in Definition~\ref{def:mmutil}.}
For simplicity, we assume that~$\mu$ is the same for
all miners and common knowledge among users.\footnote{Calculations by
  Buterin~\cite{feemarketchange} suggest that~$\mu$ is, at this time of 
  writing, on the order of~0.4--3.3 gwei.  In a proof-of-stake
  blockchain such as ETH 2.0, the parameter~$\mu$ is likely to be even
  smaller.}
On the demand side, let~$M$ denote the set of transactions in
the mempool at the time of the current block's creation.

We associate three parameters with each transaction $t \in M$:
\begin{itemize}

\item a {\em gas limit} $g_t$ in gas;

\item a {\em value} $v_t$ in gwei per unit of gas;

\item a {\em bid} $b_t$ in gwei per unit of gas.

\end{itemize}
The gas limit is the amount of gas required to carry out the
transaction.  The value is the maximum gas price the transaction's
creator would be willing to pay for its execution in the current
block.\footnote{We assume that the value is independent of the
  position in the block, ignoring e.g.\ front-running bots aiming to
  secure the first position in a block (see~\cite{flashboys,RK20}).}
The bid corresponds to the gas
price that the creator actually offers to pay, which in general can be
less (or more) than the value.  With a first-price auction, the bid
corresponds to the gas price specified for a transaction.  In the
transaction fee mechanism proposed in EIP-1559, the bid corresponds to
the minimum of the fee cap
and the sum of the base fee and the tip ($\min \{ r+\delta, c\}$
in the notation of Section~\ref{ss:1559}).
We view the gas limit and value as
immutable properties of a transaction; the bid, by contrast, is under
control of the transaction's creator.
The gas limit and bid of a confirmed transaction are recorded
on-chain; the value of a transaction is known solely to its creator.

\subsection{Allocation, Payment, and Burning Rules}\label{ss:model2}

A transaction fee mechanism decides which transactions should be
included in the current block, how much the creators of those
transaction have to pay, and to whom their payment is directed.  These
decisions are formalized by three functions: an {\em allocation rule},
a {\em payment rule}, and a {\em burning rule}.

\subsubsection{Allocation Rules}

We use
$B_1,B_2,\ldots,B_{k-1}$ to denote the sequence of blocks in the
current longest chain (with~$B_1$ the genesis block and~$B_{k-1}$ the
most recent block) and~$M$ the pending transactions in the mempool.
Generally, bold type (like $\allocs$) will indicate a vector
and regular type (like $\alloct$) one of its components.

\begin{definition}[Allocation Rule]
An {\em allocation rule} is a vector-valued
function~$\allocs$ from the on-chain
history~$B_1,B_2,\ldots,B_{k-1}$ and mempool~$M$ to a
0-1 value $\alloct(B_1,B_2,\ldots,B_{k-1},M)$ for each pending  transaction~$t \in M$.
\end{definition}
A value of~1 for $\alloct(B_1,B_2,\ldots,B_{k-1},M)$
indicates transaction~$t$'s inclusion in the current
block~$B_k$; a value of~0 indicates its exclusion.  We sometimes
write~$B_k = \allocs(B_1,B_2,\ldots,B_{k-1},M)$, with the
understanding that~$B_k$ is the set of transactions~$t$ for which
$\alloct(B_1,B_2,\ldots,B_{k-1},M) = 1$.

We consider only feasible allocation rules, meaning allocation rules
that respect the maximum block size~$G$.
\begin{definition}[Feasible Allocation Rule]
An allocation rule $\allocs$ is {\em feasible} if, for every possible
history $B_1,B_2,\ldots,B_{k-1}$ and mempool~$M$,
\begin{equation}\label{eq:feasible}
\sum_{t \in M} g_t \cdot \alloct(B_1,B_2,\ldots,B_{k-1},M) \le G.
\end{equation}
\end{definition}
We call a set~$T$ of transactions {\em feasible} if they can all be
packed in a single block: $\sum_{t \in T} g_t \le G$.

\begin{remark}[Miners Control Allocations]\label{rem:minerscontrol}
While a transaction fee mechanism is generally designed with a specific
allocation rule in mind, it is important to remember that a miner
ultimately has dictatorial control over the block it creates.
\end{remark}
\begin{example}[First-Price Auction Allocation Rule]\label{ex:fpa_alloc}
The (intended) allocation rule~$\allocs^{f}$ in a first-price
auction is to include a feasible subset of outstanding transactions that
maximizes the sum of the gas-weighted bids, less the gas costs.
That is, the $\alloct^f$'s are assigned 0-1
values to maximize
\begin{equation}\label{eq:fpa_obj}
\sum_{t \in M} \alloct^{f}(B_1,B_2,\ldots,B_{k-1},M) \cdot (b_t-\mu)
  \cdot g_t,
\end{equation}
subject to~\eqref{eq:feasible}.
\end{example}

\subsubsection{Payment and Burning Rules}

The payment rule specifies the revenue earned by the miner from
included transactions.
\begin{definition}[Payment Rule]
A {\em payment rule} is a function~$\prices$ from the current on-chain
history $B_1 ,B_2, \ldots, B_{k-1}$ and transactions~$B_k$ included in
the current block to a nonnegative number
$\pricet(B_1,B_2,\ldots,B_{k-1},B_k)$ for each included transaction~$t
\in B_k$.
\end{definition}
The value of $\pricet(B_1,B_2,\ldots,B_{k-1},B_k)$ indicates the
payment from the creator of an included transaction~$t \in B_k$ to the
miner of the block~$B_k$ (in ETH, per unit of gas).

For example, in a first-price auction, a winner always pays its bid
(per unit of gas), no matter what the blockchain history and other
included transactions.
\begin{example}[First-Price Auction Payment Rule]\label{ex:fpa_payment}
In a first-price auction, 
\[
\pricet^f(B_1,B_2,\ldots,B_{k-1},B_k) = b_t
\]
for all~$B_1,B_2,\ldots,B_k$ and~$t \in B_k$.
\end{example}

Finally, the burning rule specifies the amount of ETH burned---or equivalently,
refunded to ETH holders---for each of the included transactions.
\begin{definition}[Burning Rule]
A {\em burning rule} is a function~$\burns$ from the current on-chain
history~$B_1,B_2,\ldots,B_{k-1}$ and transactions~$B_k$ included in
the current block to a nonnegative number
$\burnt(B_1,B_2,\ldots,B_{k-1},B_k)$ for each included transaction~$t
\in B_k$.
\end{definition}
The value of $\burnt(B_1,B_2,\ldots,B_{k-1},B_k)$ indicates the
amount of ETH burned (per unit of gas) by the creator of an included
transaction~$t \in B_k$.
\begin{example}[First-Price Auction Burning Rule]\label{ex:fpa_burn}
Status quo first-price auctions burn no fees, so
\[
\burnt^f(B_1,B_2,\ldots,B_{k-1},B_k) = 0
\]
for all~$B_1,B_2,\ldots,B_k$ and~$t \in B_k$.
\end{example}

\begin{remark}[The Protocol Controls Payments and Burns]
A miner does not control the payment or burning rule, except inasmuch as
it controls the allocation, meaning the transactions included
in~$B_k$.  Given a choice of allocation, the on-chain payments and fee
burns are completely specified by the protocol.  (Miners might seek out
off-chain payments, however; see Section~\ref{ss:oca}.)
\end{remark}

\begin{remark}[Mempool-Dependence]
The allocation rule~$\allocs$ depends on the mempool~$M$ 
because a miner can base its allocation
decision on the entire set of outstanding transactions.
Payment and burning rules must be computable from the on-chain
information~$B_1,B_2,\ldots,B_k$, and in particular cannot depend on
outstanding transactions of~$M$ excluded from the current block~$B_k$.
\end{remark}

\subsubsection{Transaction Fee Mechanisms}

Formally, a transaction fee mechanism is specified by its allocation,
payment, and burning rules.
\begin{definition}[Transaction Fee Mechanism (TFM)]
A {\em transaction fee mechanism (TFM)} is a triple $\tfm$ in which
$\allocs$ is a feasible allocation rule, $\prices$ is a payment rule,
and $\burns$ is a burning rule.
\end{definition}
For example, a first-price auction is mathematically encoded by the
triple $(\allocs^{f},\prices^{f},\burns^f)$ in which~$\allocs^f$ is the
revenue-maximizing allocation rule (Example~\ref{ex:fpa_alloc}),
$\prices^f$ is the pay-as-bid payment rule (Example~\ref{ex:fpa_payment}),
and $\burns^f$ is the all-zero burning rule (Example~\ref{ex:fpa_burn}).

Finally, we consider only individually rational mechanisms, meaning
TFMs that cannot force users to pay more than  their declared
willingness to pay.
\begin{definition}[Individual Rationality]
A TFM $\tfm$ is {\em individually rational} if, for every
history~$B_1,B_2,\ldots,B_k$,
\[
\underbrace{\pricet(B_1,B_2,\ldots,,B_k) +
  \burnt(B_1,B_2,\ldots,,B_k)}_{\text{total gas price paid by $t$'s creator}} \le b_t
\]
for every transaction~$t \in B_k$.
\end{definition}

\subsection{Incentive Compatibility (Myopic Miners)}\label{ss:mmic}

This section formalizes what it means for a TFM to be
game-theoretically sound from the perspective of miners---intuitively,
that a miner is incentivized to implement the intended allocation rule
and disincentivized from including fake transactions.
As a reminder, our current focus is on incentives at the time scale of
a single block, with longer time scales discussed in
Section~\ref{s:collusion}.

\subsubsection{Myopic Miner Utility Function}

In addition to choosing an allocation
(Remark~\ref{rem:minerscontrol}), we assume that miners can
costlessly add any number of fake transactions to the mempool (with
arbitrary gas limits and bids).
We call a miner {\em myopic} if its {\em utility}---meaning the
quantity that it acts to maximize---equals its net revenue from the
current block.\footnote{We ignore the block reward (currently~2 ETH),
as it is independent of the miner's actions and therefore irrelevant
for the single-block game-theoretic analysis in this and the next section.
The block reward does, of course, affect the security of the Ethereum blockchain (e.g.~\cite{auer,budish}).}
\begin{definition}[Myopic Miner Utility Function]\label{def:mmutil}
For a TFM $\tfm$, on-chain history~$B_1,$ $B_2,\ldots,B_{k-1}$, 
mempool~$M$, fake transactions~$F$, and choice~$B_k \sse M \cup F$ of
included transactions (real and fake), the utility of a {\em myopic
  miner} is
\begin{equation}\label{eq:mmutil}
u(F,B_k) := 
\underbrace{\sum_{t \in B_k \cap M} \pricet(B_1,B_2,\ldots,,B_k) \cdot
  g_t}_{\text{miner's revenue}}
-
\underbrace{\sum_{t \in B_k \cap F} \burnt(B_1,B_2,\ldots,,B_k) \cdot
  g_t}_{\text{fee burn for miner's fake transactions}}
- \underbrace{\mu \sum_{t \in B_k} g_t}_{\text{gas costs}}.
\end{equation}
\end{definition}
The first term sums over only the real included transactions, as for
fake transactions the payment goes from the miner to itself.  The
second term sums over only the fake transactions, as for real
transactions the burn is paid by their creators (not the miner).
In~\eqref{eq:mmutil}, we highlight the dependence of the utility
function on the two arguments that are under a miner's direct control,
the choices of the fake transactions~$F$ and included (real and 
fake) transactions~$B_k$.\footnote{We can assume that $F \sse B_k$,
as there's no point to creating and then excluding a fake
transaction.}

\subsubsection{Incentive-Compatibility for Myopic Miners}

A transaction fee mechanism is generally designed with a specific
allocation rule in mind (Remark~\ref{rem:minerscontrol}), but will
miners actually implement it?
\begin{definition}[Incentive-Compatibility for Myopic Miners (MMIC)]\label{def:mmic}
A TFM $\tfm$ is {\em incentive-compatible for myopic miners (MMIC)}
if, for every on-chain history~$B_1,B_2,\ldots,B_{k-1}$ and
mempool~$M$, a myopic miner maximizes its utility~\eqref{eq:mmutil} by
creating no fake transactions (i.e., setting $F=\emptyset$) and
following the suggestion of the allocation rule $\allocs$ (i.e.,
setting $B_k = \allocs(B_1,B_2,\ldots,B_{k-1},M)$).
\end{definition}
\begin{example}[First-Price Auctions Are MMIC]\label{ex:fpa_mmic}
A status quo first-price auction $\fpa$ is MMIC.  
Because $\burns^f$ is the all-zero function (Example~\ref{ex:fpa_burn}), 
the second term in~\eqref{eq:mmutil} is zero.   
Because payments equal bids
(Example~\ref{ex:fpa_payment}), miner utility equals the exact same
quantity~\eqref{eq:fpa_obj} maximized by the allocation rule $\allocs^f$
(Example~\ref{ex:fpa_alloc}).  Thus, myopic miner utility is maximized
by following the allocation rule and setting $B_k = \allocs^f(B_1,B_2,\ldots,B_{k-1},M)$.
\end{example}

\begin{example}[Vickrey (Second-Price) Auctions Are Not MMIC]\label{ex:spa_mmic}
{\em Vickrey} (a.k.a.\ {\em second-price}) auctions play as central a
role in
traditional auction theory as first-price auctions. Their claim to
fame is that, assuming the auction is implemented by a trusted third
party,  truthful bidding (i.e., setting one's bid $\bidt$ equal
to one's value $\valt$) is a dominant strategy, meaning it 
maximizes a bidder's utility no matter what the other bidders do.
This property sure sounds like ``easy fee estimation,'' so why not use
it as a TFM?

Unfortunately, Vickrey auctions can be manipulated via fake
transactions and thus fail to be MMIC.  For example, consider a set of
transactions that all have the same gas limit and a block that has
room for three of them.  In this setting, a Vickrey auction would
prescribe including the three transactions with the highest bids and
charging each of them (per unit of gas) the lowest of these three
bids.\footnote{Actually, a Vickrey auction would prescribe charging
  the highest losing bid rather than the lowest winning bid.  The
  former is off-chain and thus unimplementable in a blockchain
  context, while the latter is on-chain and typically close enough.}
Now imagine that the top three bids are 10, 8, and 3.  If a miner
honestly executes a Vickrey auction, its revenue will be $3 \times 3 =
9$.  If the miner instead submits a fake transaction with
bid~8 and executes a Vickrey auction (with the top two real
transactions included along with the fake transaction), its net
revenue jumps to $2 \times 8 = 16$.
\end{example}

\begin{remark}[Credible Mechanisms]\label{rem:credible}
The definition of MMIC (Definition~\ref{def:mmic}) is closely related
to Akbarpour and Li's notion of a {\em credible
  mechanism}~\cite{AL20}.  Intuitively, a mechanism is credible if the
agent tasked with carrying it out has no plausibly deniable
utility-improving deviation.  For instance,
Example~\ref{ex:spa_mmic} is a proof that the Vickrey auction is not
credible in this sense.  Akbarpour and Li~\cite{AL20} study both
single-shot (a.k.a.\ ``static'') mechanisms and mechanisms that
require many rounds (such as ascending auctions); the former type
are much more practical for blockchain transaction fee mechanisms.  
Interestingly, one of the main results in~\cite[Theorem 3.7]{AL20} is that
first-price auctions with an exogenously restricted bid space are the
only static credible mechanisms.\footnote{The results in~\cite{AL20}
  assume a computationally unbounded auctioneer.  Ferreira and
  Weinberg~\cite{FW20} explore what other credible mechanisms are
  possible assuming a computationally bounded auctioneer and the
  existence of cryptographically secure hash functions.}  
All of the MMIC mechanisms 
appearing in this report---first-price auctions
(Example~\ref{ex:fpa_mmic}), the 1559 mechanism (Theorem~\ref{t:1559mmic}),
and the tipless mechanism of Section~\ref{ss:tradeoff} (Theorem~\ref{t:tipless_mmic})---can be viewed
as first-price auctions with different restricted bid
spaces.\footnote{First-price auctions correspond to the bid
  space~$[0,\infty)$; 
the 1559 mechanism to the bid space $\{
  \text{``no bid''} \} \cup [r,\infty)$, where~$r$ is the block's base fee;
and the tipless mechanism to the bid space $\{
  \text{``no bid''}, r+\delta \}$, where~$r$ is the block's
  base fee   and~$\delta$ is a protocol-defined hard-coded tip.}
\end{remark}

Returning to status quo first-price auctions, the argument in
Example~\ref{ex:fpa_mmic} highlights two of their properties:
\begin{itemize}

\item [(i)] excluding real transactions suggested by the allocation
  rule strictly decreases myopic miner utility;

\item [(ii)] including fake transactions does not increase myopic
  miner utility.  

\end{itemize}
We next pursue a stronger version of property~(ii).

\subsubsection{$\cost$-Costly Transaction Fee Mechanisms}

A stronger version of property~(ii) would state that, as with
excluding real transactions, fake transactions significantly
decrease myopic miner utility.  
First-price auctions possess this
stronger property when the maximum block size constraint is binding
(as fake transactions then displace real ones)
or when the marginal cost~$\mu$ 
is large.
Otherwise, a miner can devote any extra room in a block to fake
transactions without suffering a significant cost.

The next definition formalizes this stronger version of property~(ii).

\begin{definition}[$\cost$-Costly Transaction Fee Mechanism]\label{def:costly}
A TFM $\tfm$ is {\em $\cost$-costly} if, for every on-chain
history~$B_1,B_2,\ldots,B_{k-1}$, mempool~$M$, fake
transactions~$F$, and block~$B_k \sse M \cup F$ chosen by a miner, the
fake transactions
of~$B_k$ decrease myopic miner utility~\eqref{eq:mmutil} by at least
$\cost$ per unit of gas:
\[
u(F,B_k) \le \underbrace{u(\emptyset,B_k \cap M)}_{\text{utility w/out
    fake txs}} - \underbrace{\cost \cdot \sum_{t \in F} g_t}_{\text{cost
    of fake txs}}.
\]
\end{definition}
For example, first-price auctions are $\mu$-costly, where~$\mu$ is the
marginal cost of gas to a miner, and are not $\cost$-costly for any
$\cost > \mu$.  
We'll see later (Corollary~\ref{cor:1559costly}) that the
transaction fee mechanism proposed in EIP-1559 is generally
$\cost$-costly for larger values of $\cost$, and in this sense more
aggressively punishes fake transactions.

\subsection{Incentive Compatibility (Users)}\label{ss:uic}

Next we formalize what it means for a TFM to be game-theoretically
sound from the perspective of users---intuitively, that there is
an ``obvious' optimal bid'' when creating a new transaction.
This is also our definition of a ``good user experience'' is the sense
of easy fee estimation (see Section~\ref{s:ux}).

\subsubsection{User Utility Function}

Recall from Section~\ref{ss:model} that the value~$\valt$ 
of a transaction~$t$ is the maximum gas price the
transaction's creator would be willing to pay for its inclusion in the
current block.  We assume that a user bids in order to maximize its
net gain (i.e., the value for inclusion minus the cost for inclusion).
To reason about the different possible bids for a transaction~$t$
submitted to a mempool~$M$, we use $M(\bidt)$ to denote
the result of adding the transaction~$t$ with bid $\bidt$ to~$M$.
For simplicity, we assume that each transaction in the current mempool
has a distinct creator.

\begin{definition}[User Utility Function]\label{def:userutil}
For a TFM $\tfm$, on-chain history~$B_1,$ $B_2,\ldots,B_{k-1}$, and
mempool~$M$, 
the utility of the originator of a transaction~$t \notin M$ with value
$\valt$ and bid $\bidt$ is
\begin{equation}\label{eq:userutil}
u_t(\bidt) :=
\left( \valt -
\underbrace{p_t(B_1,\ldots,B_{k-1},B_k)}_{\text{payment to miner (per-gas-unit)}} -
\underbrace{q_t(B_1,\ldots,B_{k-1},B_k)}_{\text{fee burn (per-gas-unit)}} \right)
\cdot g_t
\end{equation}
if~$t$ is included in $B_k = \allocs(B_1,\ldots,B_{k-1},M(\bidt))$
and~0 otherwise.
\end{definition}
In~\eqref{eq:userutil}, we highlight the dependence of the utility
function on the argument that is directly under a user's
control, the bid~$\bidt$ submitted with the transaction.
We assume that a transaction creator bids to maximize the utility
function in~\eqref{eq:userutil}.\footnote{While the creator of a
  transaction~$t$ has no direct control over $\allocs$, $\prices$, or
  $\burns$, its bid~$\bidt$ is embedded in~$M(\bidt)$ and therefore
  can affect~$B_k=\allocs(B_1,\ldots,B_{k-1},M(\bidt))$.  This, in turn,
can affect $\pricet(B_1,\ldots,B_{k-1},B_k)$ and
$\burnt(B_1,\ldots,B_{k-1},B_k)$.  For example, whether or not
$\alloct(B_1,\ldots,B_{k-1},M(\bidt))=1$ generally depends on whether or
not~$\bidt$ is large relative to the bids of competing transactions in~$M$.}

\subsubsection{Bidding Strategies and Ex Post Nash Equilibrium}

Intuitively, ``easy fee estimation'' should mean that the ``obvious''
bidding strategy is optimal.  Formally, a {\em bidding strategy} is a
function~$\epnebid$ that specifies a bid $\epnebid(\valt)$ for a
transaction~$t$ as a function of the value~$\valt$ of that
transaction.
A bidding strategy is a function of the value~$\valt$
only (which is known to the transaction creator) and not, for example,
bids submitted by competing transactions (which are not).\footnote{A
bidding strategy can depend also on the blockchain history (e.g., with
  EIP-1559, on the current base fee).  For the purposes of a
  single-block game-theoretic analysis, we can take the history as
  fixed and suppress this dependence in the notation.}
For example, a plausible
bidding strategy in a first-price auction is to shade one's bid, but
not by too much, perhaps by setting $\epnebid(\valt) = .75 
\valt$ for all $\valt$.

Suppose we have in mind an ``obvious'' bidding strategy
$\epnebid(\cdot)$ for users to employ.  What does it mean that bidding
in this obvious
way is ``always optimal''?  The answer is formalized by the concept of
a symmetric ex post Nash equilibrium (symmetric EPNE).  
Intuitively, obvious bidding
should maximize a user's utility as long as all the other users are
also bidding in the obvious way.\footnote{``Symmetric'' refers to the
  fact that the obvious bidding strategy~$\epnebid(\cdot)$ is the same
  for every transaction~$t$.}
\begin{definition}[Symmetric Ex Post Nash Equilibrium (Symmetric EPNE)]\label{def:epne}
Fix a TFM $\tfm$ and the on-chain history $B_1,B_2,\ldots,B_{k-1}$.
A bidding strategy $\epnebid(\cdot)$ is a {\em symmetric ex post Nash 
  equilibrium (symmetric EPNE)} if, for every mempool~$M$ in which
all transactions' bids were set according to this strategy, and for
every transaction~$t \notin M$ with
value~$v_t$, bidding~$\epnebid(v_t)$ maximizes the
utility~\eqref{eq:userutil} of~$t$'s creator.
\end{definition}
Crucially, following the bid recommendation~$\epnebid(\valt)$ of a
symmetric EPNE does not require reasoning about competing transactions
in~$M$, other than keeping the faith that their bids were set 
according to the bid recommendations of the symmetric
EPNE.\footnote{An even stronger notion is a {\em dominant-strategy
    equilibrium}, in which~$\epnebid(\valt)$ is optimal for~$t$'s creator
  no matter what the other users do.  ``Obvious bidding'' is not a
  dominant-strategy equilibrium in the transaction fee mechanism
  proposed in EIP-1559 (see Remark~\ref{rem:not_dse}), but it is in a
  variant with hard-coded tips
  (see Theorem~\ref{t:tipless_uic} and footnote~\ref{foot:dse2}).\label{foot:dse}}

We can now define a TFM to be incentive-compatible from the user
perspective if there's always an obvious bidding strategy in the form
of a symmetric EPNE.
\begin{definition}[Incentive-Compatibility for Users (UIC)]\label{def:uic}
A TFM $\tfm$ is {\em incentive-compatible for users (UIC)} if, for
every on-chain history $B_1,B_2,\ldots,B_{k-1}$, there is a symmetric
EPNE.
\end{definition}
In this report, we identify ``mechanisms with easy fee estimation''
and ``mechanisms with good UX'' with the UIC condition of Definition~\ref{def:uic}.

\begin{example}[First-Price Auctions Are Not UIC]\label{ex:fpa_uic}
First-price auctions are not easy to reason about, in the sense that
they are not UIC.  Intuitively, the utility-maximizing bid
depends on the precise numerical values of others' bids, and not merely
on the qualitative knowledge that they are following a particular
bidding strategy.

For example, consider a block with room for one transaction, a
transaction~$t$ with value~$\valt=10$, and
suppose that all transactions other than~$t$ use the same
bidding strategy $\epnebid(\val_s) = .75 \cdot v_s$.  
If the highest value of $\val_s$ of any transaction $s \neq t$ is 10,
then the highest bid by any such transaction will be 7.5, and
the utility-maximizing bid for~$t$'s creator will be~7.51.
If the highest other value is~8, the optimal bid is~6.01; and so on.
The key point is that the optimal bid to include with the transaction
is a function not only of that transaction's value, but also of the
values of  the competing transactions (even after assuming that all
their bids are set using a known bidding strategy $\epnebid(\cdot)$).
\end{example}
Thus, in a precise sense, first-price auctions do not offer ``good
UX'' in the form of an easy-to-follow optimal bid recommendation.
We'll see later (Theorem~\ref{t:1559uic}) that the
transaction fee mechanism proposed in EIP-1559 is UIC except during
periods of rapidly increasing demand.

\subsection{Off-Chain Agreements}\label{ss:oca}

The game-theoretic guarantees in Section~\ref{ss:mmic} concern 
attacks that manipulate the contents of a block (by including fake
transactions, or more generally deviating from the allocation intended
by the transaction fee mechanism).
This section treats a different type of attack that is also
implementable at the time scale of a single block, namely collusive
agreements between miners and users.  
Recall that a set~$T$ of transactions is {\em feasible} if the total
gas $\sum_{t \in T} g_t$ is at most the maximum block size~$G$.

\begin{definition}[Off-Chain Agreement (OCA)]\label{def:oca}
For a feasible set $T$ of transactions and a
miner~$m$, an {\em off-chain agreement   (OCA)} between 
$T$'s creators and~$m$ specifies:
\begin{itemize}

\item [(i)] a bid vector~$\bids$, with~$\bidt$ indicating the bid
to be submitted with the transaction~$t \in T$;

\item [(ii)] a per-gas-unit ETH transfer~$\tau_t$ from the creator of
  each transaction~$t \in T$ to the miner~$m$.

\end{itemize}
\end{definition}
In an OCA, 
each creator of a transaction~$t$ agrees to
submit~$t$ on-chain with a bid of $\bidt$ while transferring~$\tau_t$
per unit of gas to the miner~$m$ off-chain;
the miner, in turn, agrees to mine a block~$B(\bids)$ comprising the
transactions in~$T$ (with on-chain bids $\bids$). 

\begin{example}[Moving Payments Off-Chain]\label{ex:move_off_chain}
To get a feel for OCAs, imagine a first-price auction in which 50\% of
the revenue is burned and the other 50\% is transferred to the miner.
(See also Section~\ref{ss:fpa_burn}.)  Miners and users could then
collude as follows:
\begin{enumerate}

\item Users bid zero on-chain and communicate off-chain what
  they would have bid in a standard first-price auction.

\item Miners keep 75\% of the (off-chain) bids of the transactions
  they include,
  with the other 25\% refunded to those transactions' creators.

\end{enumerate}
In the notation of Definition~\ref{def:oca}, this is the OCA $\oca$ in
which $\bids = \mathbf{0}$ and $\tau_t = .75\bid'_t$, where $\bid'_t$
denotes what $t$'s creator would have bid in a first-price auction
without fee-burning.  Compared to the ``honest'' on-chain outcome with
bids $\bids'$, miners earn 50\% more revenue and users enjoy a 25\%
discount, both at the expense of the network.
\end{example}

Given a TFM $\tfm$ and on-chain history~$B_1,B_2,\ldots,B_{k-1}$,
the utility of $t$'s creator from such an OCA $\oca$ is
given by the right-hand side of~\eqref{eq:userutil}, less its transfer to
the miner:
\begin{equation}\label{eq:oca_user}
\left( \valt - p_t(B_1,\ldots,B_{k-1},B(\bids)) -
  q_t(B_1,\ldots,B_{k-1},B(\bids)) - \tau_t \right) \cdot g_t.
\end{equation}
(Users not part of~$T$ receive zero utility.)
The miner's utility is given by the sum of on-chain and off-chain
payments received, less the costs incurred:
\begin{equation}\label{eq:oca_miner}
\sum_{t \in T} \left( \pricet(B_1,B_2,\ldots,B_{k-1},B(\bids))
+ \tau_t - \mu \right) \cdot  g_t.
\end{equation}
Adding up these utility functions---one per
transaction~$t \in T$, plus one for the miner---results in the joint
utility enjoyed by all parties in an OCA~$\oca$:
\[
u_{T,m}\oca := \sum_{t \in T} \left( \valt -
  q_t(B_1,\ldots,B_{k-1},B(\bids)) - \mu \right)
\cdot g_t.
\]
From the coalition's perspective, on-chain and off-chain payments from
the users to the miner (the $\pricet$'s and $\tau_t$'s) remain
within the coalition and thus cancel out; the fee burn (the
$\burnt$'s) is transferred outside the coalition (to the network) and
is therefore a loss.  Thus, the point of an OCA is to maximize the
joint utility---the amount of transaction value that is not lost to
the protocol or to the miner's costs.
\begin{definition}[Joint Utility]
For an on-chain history~$B_1,B_2,\ldots,B_{k-1}$,
the {\em joint utility} of the miner and users for the block~$B_k$ is
\begin{equation}\label{eq:jointutil}
\sum_{t \in B_k} \left( \valt - q_t(B_1,B_2,\ldots,B_{k-1},B_k) - \mu
\right) \cdot g_t.
\end{equation}
\end{definition}
We assume that miners and users act to maximize their joint utility.
Using transfers, a miner and users 
can then split this joint utility among themselves in an arbitrary
way.\footnote{For example, suppose an OCA increases the joint utility
  of a coalition by increasing the utility of six
  users by~1 ETH each while decreasing the miner's utility
  by~5 ETH.  The OCA transfers can then be adjusted so that all parties
enjoy strictly higher individual utility, for example by sending an
extra  $\tfrac{11}{12}$ ETH from each of these users to the miner.
Additional transfers can be used to also strictly increase the utility
of the creators of the transactions 
excluded from the block~$B_k$.}
For this reason, when analyzing OCAs, we can focus on the joint
utility~\eqref{eq:jointutil} of the miner and the creators of the
included transactions, without concern about how it might be split
among them and
the creators of the excluded transactions.

A TFM is then {\em OCA-proof} if, for every OCA,
there is an equally good on-chain outcome. 
For a set of transactions~$U$ and bids~$\bids$ for those transactions,
we denote by~$U(\bids)$ the corresponding mempool.
\begin{definition}[OCA-Proof]\label{def:ocaproof}
A TFM $\tfm$ is {\em OCA-proof} if, for every on-chain
history~$B_1,$ $B_2,\ldots,B_{k-1}$ and set~$U$ of outstanding
transactions, there exists bids~$\bids^*$ for the transactions of~$U$
such that, for the resulting on-chain outcome $B_k = \allocs(B_1,
B_2,\ldots,B_{k-1},U(\bids^*))$, 
\begin{equation}\label{eq:oca}
\underbrace{\sum_{t \in B_k} \left( \valt -
  q_t(B_1,\ldots,B_{k-1},B_k) - \mu \right)
\cdot g_t}_{\text{joint utility of on-chain outcome}}
\ge u_{T,m}\oca
\end{equation}
for every feasible subset~$T \sse U$ of transactions and OCA $\oca$
between their creators and the miner~$m$.
\end{definition}
In other words, if a TFM is {\em not} OCA-proof, there are scenarios in
which a miner and users can collude to achieve higher joint
utility---and, after defining appropriate transfers, higher
individual utilities---than in any on-chain outcome.

Intuitively, first-price auctions are OCA-proof because off-chain
payments can be costlessly replaced by on-chain bids.  The next
example formally verifies Definition~\ref{def:ocaproof}.
\begin{example}[First-Price Auctions Are OCA-Proof]
Consider a set~$U$ of transactions and set $\bid^*_t = \valt$ for every
$t \in U$.
Then, because~$\burns^f$ is the all-zero function
(Example~\ref{ex:fpa_burn}), the objective~\eqref{eq:fpa_obj}
maximized by the allocation rule~$\allocs^f$ is identical to the joint
utility~\eqref{eq:jointutil}.  Thus, the joint utility of the on-chain
outcome with bids $\bids^*$ cannot be improved upon by any OCA.
\end{example}

\begin{remark}[OCA-Proofness and Fee Burning]
OCAs are the biggest game-theoretic driver for the why and the
how of the fee burn in the transaction fee mechanism proposed in
EIP-1559.   For example, adding a fee burn to a first-price auction
destroys its OCA-proofness (Section~\ref{ss:fpa_burn}).  Meanwhile,
because of OCAs, a history-dependent base fee has no teeth unless
revenue from it is burned or otherwise withheld from the miner (Section~\ref{ss:refund}).
\end{remark}

\section{Formal Analysis of the 1559 Mechanism with Myopic Miners}\label{s:mm}

This section investigates to what extent the transaction fee mechanism
proposed in EIP-1559---henceforth, the {\em 1559
  mechanism}---satisfies the three game-theoretic guarantees identified in
Section~\ref{s:defs} (MMIC, UIC, and OCA-proofness).
Section~\ref{ss:1559desc} translates the description of the mechanism
in Section~\ref{ss:1559} into the formalism introduced in
Section~\ref{s:defs}.  Sections~\ref{ss:1559mmic}--\ref{ss:1559oca}
prove that the mechanism is always MMIC and OCA-proof, and is UIC
except during periods of rapidly increasing demand.

\begin{mdframed}[style=offset,frametitle={Game-Theoretic Guarantees
    for the 1559 Mechanism}]
\begin{enumerate}

\item Myopic miners are incentivized to follow the intended allocation
  rule, and are strictly disincentivized from including fake
  transactions in a block.

\item Except in periods of a large and sudden demand spike, there are
``obvious'' optimal bids for users: set a transaction's fee cap to its
value and its tip to cover the marginal cost of gas to the miner.

\item Miners and users can never improve their joint utility through
  an off-chain agreement.

\end{enumerate}
\end{mdframed}

\subsection{The 1559 Mechanism}\label{ss:1559desc}

Recall from Section~\ref{ss:1559} that, in the 1559 mechanism,
each block is associated with a base fee that is fixed by
the history of past blocks and independent of
the contents of the current block; we denote by
$\alpha(B_1,B_2,\ldots,B_{k-1})$ the base fee for the next block that
is determined by a
particular history~$B_1,B_2,\ldots,B_{k-1}$.  The specific
function~$\alpha$ proposed in EIP-1559 is the iteration of the base
fee update rule in~\eqref{eq:update}, although these details will not
be important for the single-block game-theoretic analysis carried out in
this section.

Recall also that, in EIP-1559, each transaction specifies a tip
$\delta_t$ and a fee cap~$c_t$.  These two parameters induce a bid
$\bidt$ for the transaction with respect to any given base fee~$r$,
namely 
\begin{equation}\label{eq:1559bid}
\bidt = \min\{ r + \delta_t, c_t \}.
\end{equation}
\begin{definition}[1559 Allocation Rule]\label{def:1559alloc}
For each history~$B_1,B_2,\ldots,B_{k-1}$ and corresponding base fee
$r = \alpha(B_1,B_2,\ldots,B_{k-1})$, 
the (intended) allocation rule~$\allocs^{\nine}$ of the 1559 mechanism 
is to include a feasible subset of outstanding transactions that
maximizes the sum of the gas-weighted bids,
less the gas costs and total base fee paid.
That is, the $\alloct^{\nine}$'s are assigned 0-1 values to maximize
\begin{equation}\label{eq:1559_obj}
\sum_{t \in M} \alloct^{\nine}(B_1,B_2,\ldots,B_{k-1},M) \cdot
(\bidt - r - \mu)
  \cdot g_t,
\end{equation}
subject to the block size constraint~\eqref{eq:feasible}.
\end{definition}
The payment rule transfers the difference between the bid and the base
fee to the miner.
\begin{definition}[1559 Payment Rule]\label{def:1559payment}
In the 1559 mechanism, letting~$r = \alpha(B_1,B_2,\ldots,B_{k-1})$,
\[
\pricet^{\nine}(B_1,B_2,\ldots,B_{k-1},B_k) = b_t - r
\]
for all~$B_1,B_2,\ldots,B_k$ and~$t \in B_k$.
\end{definition}
The burning rule burns the base fee.
\begin{definition}[1559 Burning Rule]\label{def:1559burning}
In the 1559 mechanism, letting~$r = \alpha(B_1,B_2,\ldots,B_{k-1})$,
\[
\burnt^{\nine}(B_1,B_2,\ldots,B_{k-1},B_k) = r
\]
for all~$B_1,B_2,\ldots,B_k$ and~$t \in B_k$.
\end{definition}
Formally, the {\em 1559 mechanism} is the TFM mathematically encoded
by the triple of rules $(\allocs^{\nine},\prices^{\nine},\burns^\nine)$
described in Definitions~\ref{def:1559alloc}--\ref{def:1559burning}.

\subsection{The 1559 Mechanism Is Incentive Compatible for Myopic
  Miners}\label{ss:1559mmic}

This section evaluates the 1559 mechanism from the perspective of
myopic miners, and specifically the MMIC property
(Definition~\ref{def:mmic}) and $\cost$-costliness
(Definition~\ref{def:costly}).

\begin{theorem}[The 1559 Mechanism is MMIC]\label{t:1559mmic}
The 1559 mechanism $\ninetfm$ is MMIC.
\end{theorem}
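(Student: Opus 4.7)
The plan is to unpack the myopic-miner utility function~\eqref{eq:mmutil} under the specific rules $(\prices^{\nine},\burns^{\nine})$ described in Definitions~\ref{def:1559payment}--\ref{def:1559burning}, and observe that it decomposes cleanly into a sum over real transactions and a sum over fake transactions. This will let me decouple the two decisions a miner makes (which fake transactions $F$ to manufacture, and which subset $B_k \sse M \cup F$ to include) and optimize them independently.

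First I would fix an arbitrary history $B_1,\ldots,B_{k-1}$ with base fee $r = \alpha(B_1,\ldots,B_{k-1})$ and rewrite~\eqref{eq:mmutil} by substituting $\pricet^{\nine}(B_1,\ldots,B_k) = b_t - r$ for real transactions and $\burnt^{\nine}(B_1,\ldots,B_k) = r$ for fake ones. Splitting the gas-cost term across $B_k \cap M$ and $B_k \cap F$, the utility becomes
\[
u(F,B_k) \;=\; \sum_{t \in B_k \cap M} (b_t - r - \mu)\cdot g_t \;-\; \sum_{t \in B_k \cap F} (r + \mu)\cdot g_t.
\]
Each fake transaction included in $B_k$ contributes a nonpositive summand $-(r+\mu)g_t$, so the second sum is maximized (at $0$) by including no fake transactions; in particular, picking $F = \emptyset$ loses nothing. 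With $F = \emptyset$, the utility reduces to $\sum_{t \in B_k}(b_t - r - \mu)g_t$ over feasible $B_k \sse M$, which is exactly the objective~\eqref{eq:1559_obj} that the allocation rule $\allocs^{\nine}$ maximizes subject to~\eqref{eq:feasible}. Thus setting $B_k = \allocs^{\nine}(B_1,\ldots,B_{k-1},M)$ is optimal, establishing MMIC.

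There is really no main obstacle here: the entire argument hinges on the fact that the base-fee burn and the payment-to-miner are both tied only to the current block's base fee $r$ (which the miner cannot influence within the current block, by property~(2) of EIP-1559's key ideas), so a fake transaction is pure loss to the miner and the remaining optimization is tautologically solved by $\allocs^{\nine}$. The only thing worth being careful about is separating the price contribution (only over $B_k \cap M$) from the burn contribution (only over $B_k \cap F$), as dictated by the definition of $u(F,B_k)$; once that bookkeeping is done, the claim is immediate. This also sets up the stronger $\gamma$-costliness statement (Corollary~\ref{cor:1559costly}) promised earlier, since the per-unit-gas penalty for each fake transaction is identified as exactly $r + \mu$.
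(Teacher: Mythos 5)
Your proposal is correct and matches the paper's own proof essentially line for line: both substitute the 1559 payment and burning rules into the myopic-miner utility to obtain the decomposition $u(F,B_k) = \sum_{t \in B_k \cap M} (b_t - r - \mu) g_t - \sum_{t \in B_k \cap F} (r+\mu) g_t$, conclude that fake transactions can only hurt, and then observe that the residual objective coincides with~\eqref{eq:1559_obj}. No gaps; the identification of the $(r+\mu)$ per-gas penalty as the source of Corollary~\ref{cor:1559costly} is also exactly how the paper proceeds.
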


\begin{proof}
Fix an on-chain history~$B_1,B_2,\ldots,B_{k-1}$, a mempool~$M$, and a
marginal cost of gas~$\mu \ge 0$ (as in Remark~\ref{rem:marg_costs}).
Let~$r$ denote the corresponding base fee
$\alpha(B_1,B_2,\ldots,B_{k-1})$ for the current block.
Substituting in Definitions~\ref{def:1559payment} and~\ref{def:1559burning},
myopic miner utility~\eqref{eq:mmutil} equals
\begin{equation}\label{eq:1559mmic}
u(F,B_k) = \underbrace{\sum_{t \in B_k \cap M} (\bidt-r-\mu) \cdot
  g_t}_{\text{net revenue from~$B_k$}}
-
\underbrace{\sum_{t \in B_k \cap F} (r+\mu) \cdot g_t}_{\text{cost of
    fake txs}},
\end{equation}
where~$B_k$ denotes the transactions included by the miner and~$F$ the
fake transactions that it creates.  Included fake transactions strictly
increase the second term (by~$r+\mu$ per unit of gas) while leaving the
first unaffected, so a myopic miner will only include real
transactions in~$B_k$.  In this case, myopic miner utility equals
\[
\sum_{t \in B_k} (\bidt-r-\mu) \cdot g_t,
\]
which is identical to the quantity~\eqref{eq:1559_obj} maximized by
the allocation rule $\allocs^{\nine}$ (Definition~\ref{def:1559alloc}).
Thus, myopic miner utility is maximized by following the allocation
rule and setting $B_k$ equal to $\allocs^{\nine}(B_1,B_2,\ldots,B_{k-1},M)$.
\end{proof}

From the expression~\eqref{eq:1559mmic} for myopic miner utility in
the 1559 mechanism, 
we can see immediately that it is
$\cost$-costly (Definition~\ref{def:costly}) for $\cost = r+\mu$.
\begin{corollary}[The 1559 Mechanism is $(r+\mu)$-Costly]\label{cor:1559costly}
Fix an on-chain history $B_1,B_2,\ldots,B_{k-1}$ and corresponding
base fee $r=\alpha(B_1,B_2,\ldots,B_{k-1})$ for the current block,
a mempool~$M$, and a marginal cost of gas~$\mu \ge 0$. 
The 1559 mechanism is $(r+\mu)$-costly.
\end{corollary}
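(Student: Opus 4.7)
The plan is to read the corollary directly off equation~\eqref{eq:1559mmic} derived in the proof of Theorem~\ref{t:1559mmic}. That expression already decomposes myopic miner utility under the 1559 mechanism into a term coming only from real included transactions, plus a penalty of exactly $r+\mu$ per unit of fake-transaction gas, so no further computation is really needed.

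First I would fix the on-chain history $B_1,\ldots,B_{k-1}$, the induced base fee $r = \alpha(B_1,\ldots,B_{k-1})$, the mempool $M$, a set $F$ of fake transactions, and a block $B_k \subseteq M \cup F$ chosen by the miner. Without loss of generality I would assume $F \subseteq B_k$, as noted in the footnote to Definition~\ref{def:mmutil} (fake transactions not included in the block affect nothing). Then equation~\eqref{eq:1559mmic} reads
\[
u(F,B_k) \;=\; \sum_{t \in B_k \cap M} (b_t - r - \mu) \, g_t \;-\; (r+\mu) \sum_{t \in F} g_t.
\]

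Second, I would apply the same formula to the alternative choice in which the miner creates no fake transactions and includes only $B_k \cap M$. The fake-transaction sum then vanishes, giving
\[
u(\emptyset, B_k \cap M) \;=\; \sum_{t \in B_k \cap M} (b_t - r - \mu) \, g_t.
\]
Subtracting, I obtain $u(F,B_k) = u(\emptyset, B_k \cap M) - (r+\mu) \sum_{t \in F} g_t$, which satisfies Definition~\ref{def:costly} with $\gamma = r+\mu$ (in fact with equality), establishing the corollary.

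There is no real obstacle: the statement is essentially a reading of a line already written down in the previous proof. The only subtlety worth flagging is the WLOG reduction to $F \subseteq B_k$; without it one would have $B_k \cap F$ in place of $F$ in the fake-transaction sum, so the inequality in Definition~\ref{def:costly} would need to be justified by noting that omitted fake transactions contribute zero to $u(F,B_k)$ and can be dropped from $F$ before applying the calculation above.
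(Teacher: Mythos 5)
Your proposal is correct and is exactly the paper's argument: the paper states the corollary as an immediate consequence of the expression~\eqref{eq:1559mmic} derived in the proof of Theorem~\ref{t:1559mmic}, which is precisely the read-off-and-subtract computation you perform (and your flag about the $F \subseteq B_k$ convention matches the paper's own footnote to Definition~\ref{def:mmutil}).
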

\begin{remark}[Role of the Fee Burn]\label{rem:feeburn}
If the base fee was paid to miners rather than burned, the 1559
mechanism would only be $\mu$-costly and fake transactions would be
only mildly disincentivized.  The primary motivation for the fee
burn, however, is to rule out its evasion by off-chain agreements
(see Section~\ref{ss:refund}).
\end{remark}

\subsection{The 1559 Mechanism Is Typically Incentive Compatible for Users}\label{ss:1559uic}

The 1559 mechanism is always incentive compatible for myopic miners,
no matter what the current base fee and demand for block space
(Theorem~\ref{t:1559mmic}).  We next show that
the mechanism is also incentive compatible for users, except in
periods of rapidly increasing demand.

\subsubsection{Excessively Low Base Fees}

The next definition is a proxy for a period of rapidly increasing
demand.
\begin{definition}[Excessively Low Base Fee]\label{def:low}
Let~$\mu$ denote the marginal cost per unit of gas.
A base fee~$r$ is {\em excessively low} for a mempool~$M$ of transactions
if the demand at price~$r+\mu$ exceeds the maximum block size~$G$:
\begin{equation}\label{eq:low}
\underbrace{\sum_{t \in M \,:\, \valt \ge r+\mu} g_t}_{\text{demand at
    price $r+\mu$}} > G.
\end{equation}
\end{definition}
Excessively low base fees arise from large and sudden demand
spikes.  In Example~\ref{ex:traj}
in Section~\ref{ss:toohigh}, for instance, none of the eight periods
suffer from an excessively low base fee, despite the sudden doubling
of demand.  Modifying that example so that demand more than doubles in
period~2, there is a sequence of periods with excessively low
base fees,  ending once the base fee has increased enough to bring demand
back down below 25M gas (Table~\ref{table:basefee2}).

\begin{table}[H]
\centering
{\footnotesize
\begin{tabular}{|c|c|c|c|c|c|c|c|c|}\hline
& Period 1 & Period 2 & Period 3 & Period 4 & Period 5 & Period 6 & Period 7  & Period 8\\ \hline\hline
Demand & Low & High & High & High & High & High & High & Low \\ \hline
\hline
EIP-1559 Base Fee 
& $33.33$ & $33.33$ & $37.5$ & $42.18$ & $47.46$ & $53.39$ & $60.06$ & $66.19$
\\ \hline
EIP-1559 Block Size 
& 12.5M & 25M & 25M & 25M & 25M & 25M & 24.49M & 10.04M
\\ \hline
Excessively low? & No & Yes & Yes & Yes & Yes & Yes & No  & No\\ \hline
\end{tabular}
}
\caption{An example of excessively low base fees due to a large and
  sudden jump in demand.  The marginal cost~$\mu$ of gas is~0.
``Low'' demand means the demand curve
  $D(p)=15000000-75000p$; ``high''
  means the demand curve $D(p)=35000000-175000p$.
(Here ``demand'' means the total gas
  consumed by all pending transactions with a value of~$p$ or
  more.)
The second and third rows show
the joint evolution of the base fee and block size under the EIP-1559
mechanism, assuming that the base fee matches the market-clearing
price in period~1 and that all users submit a bid equal to the minimum
of their value and the base fee.
Periods~2--6 suffer from excessively low base fees.
}\label{table:basefee2}
\end{table}

\subsubsection{The 1559 Mechanism Is UIC Except with Excessively Low
  Base Fees}\label{sss:1559uic}

When the base fee is excessively low, users must compete for scarce
block space through their tips, and the 1559 mechanism effectively
reverts back to a first-price auction.  As first-price auctions are
essentially never UIC (see Example~\ref{ex:fpa_uic}), the 1559
mechanism is not UIC when
the base fee is excessively low.  The good news is that an excessively
low base fee is the only reason why the 1559 mechanism might fail to
be UIC.  That is, whenever the base fee is not excessively low, there
is an ``obvious optimal bid'' in the form of a symmetric EPNE
(Definition~\ref{def:epne}).  This optimal bid corresponds to setting
a transaction's fee cap equal to its creator's value (i.e., $c_t =
\valt$), and a
transaction's tip equal to the marginal cost of gas to a miner (i.e.,
$\delta_t = \mu$).  
\begin{theorem}[The 1559 Mechanism Is Typically UIC]\label{t:1559uic}
Fix an on-chain history $B_1,B_2,\ldots,B_{k-1}$ and corresponding
base fee $r = \alpha(B_1,B_2,\ldots,B_{k-1})$, a marginal cost~$\mu$
of gas to miners, and a mempool~$M$ of transactions for which~$r$ is not
excessively low.  The bidding strategy
\begin{equation}\label{eq:epne}
\epnebid(\valt) = \min\{ r+ \mu, \valt \}
\end{equation}
constitutes a symmetric EPNE under the 1559 mechanism.
\end{theorem}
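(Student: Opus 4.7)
The plan is to verify the EPNE condition by direct case analysis. The bid formula $\epnebid(\valt) = \min\{r+\mu,\valt\}$ is realized in the EIP-1559 specification by setting the fee cap $c_t = \valt$ and the tip $\delta_t = \mu$, so that $\bidt = \min\{r+\delta_t,c_t\}$ as in~\eqref{eq:1559bid} equals $\min\{r+\mu,\valt\}$. Under the hypothesis that every $s \in M$ plays this strategy, all bids in $M$ satisfy $\bid_s \le r+\mu$, so each $M$-transaction's per-unit-of-gas contribution $(\bid_s - r - \mu)$ to the objective~\eqref{eq:1559_obj} maximized by $\allocs^{\nine}$ is non-positive; it is exactly zero on $S := \{s \in M : \val_s \ge r+\mu\}$ and strictly negative on $M \setminus S$. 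The non-excessively-low assumption is exactly the statement $\sum_{s \in S} g_s \le G$.

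I would then split on whether $\valt \le r+\mu$ or $\valt > r+\mu$. In the first case the prescribed bid equals $\valt$, and I would enumerate the deviations: a bid $\bidt > r+\mu$ makes $t$'s contribution strictly positive and triggers inclusion, yielding $(\valt - \bidt) g_t < 0$ because $\bidt > r+\mu \ge \valt$; the borderline bid $\bidt = r+\mu$ gives utility $(\valt - r - \mu) g_t \le 0$; and any $\bidt < r+\mu$ keeps $t$ out of $B_k$ for utility zero (either because $t$'s contribution is strictly negative or because the fee cap falls below $r$). Since bidding $\valt$ itself yields utility zero in every subcase, $\epnebid(\valt)$ is optimal.

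In the second case ($\valt > r+\mu$), the prescribed bid $r+\mu$ makes $t$'s contribution to~\eqref{eq:1559_obj} exactly zero, and then the objective-maximizing feasible subsets are precisely the feasible subsets of $S \cup \{t\}$. Because $\sum_{s \in S} g_s \le G$, the allocation rule can always choose such a subset that contains $t$, displacing some zero-contribution transactions of $S$ if necessary; under the natural convention of breaking ties in favor of inclusion, $t$ is selected, and the user's utility equals $(\valt - r - \mu) g_t$. Any deviation to $\bidt > r+\mu$ forces inclusion but yields only $(\valt - \bidt) g_t < (\valt - r - \mu) g_t$; any $\bidt < r+\mu$ gives $t$ strictly negative contribution relative to the zero-contribution members of $S$ and thus excludes $t$, for utility zero. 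Hence $\epnebid(\valt) = r+\mu$ is optimal.

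The hardest step is the feasibility analysis in the $\valt > r+\mu$ branch, where I need the non-excessively-low assumption to guarantee that the block can accommodate $t$ alongside (a subset of) the competing zero-contribution transactions in $S$, together with an explicit tie-breaking convention favoring inclusion. Absent such a convention, one can alternatively recover the conclusion by viewing bidding $r+\mu$ as the limit of deviations $\bidt \to (r+\mu)^+$, all of which force inclusion with utility approaching $(\valt - r - \mu) g_t$ from below.
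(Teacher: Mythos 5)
Your proposal is correct and follows essentially the same route as the paper's proof: verify the symmetric EPNE directly by splitting into low-value ($\valt < r+\mu$) and high-value ($\valt \ge r+\mu$) transactions, use the non-excessively-low hypothesis to guarantee feasibility of including every zero-contribution transaction, and enumerate the unilateral deviations in each case. You are in fact slightly more careful than the paper on two points---you make explicit the tie-breaking convention needed because transactions bidding exactly $r+\mu$ contribute zero to the objective~\eqref{eq:1559_obj} (the paper silently resolves this by declaring that the rule includes all transactions with bid $\ge r+\mu$), and you correctly note that overbidding strictly lowers utility because the payment equals the bid, where the paper loosely says such deviations have ``no effect.''
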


\begin{proof}
Suppose each creator of a transaction $t \in M$ sets its bid
according to the strategy~$\epnebid(\cdot)$ in~\eqref{eq:epne}; we
need to show that no creator could increase its expected
utility~\eqref{eq:userutil} by changing its bid (holding the bids of
other transactions fixed).  

The objective~\eqref{eq:1559_obj} of the 1559 allocation rule
prescribes including precisely the transactions~$t \in M$ with $\bidt
\ge r + \mu$.  
Because $\epnebid(\valt) = \min\{ r+ \mu, \valt \}$ for all $t \in M$, 
these are precisely the transactions~$t \in M$ with $\valt \ge r +
\mu$.  In particular, because~$r$ is not excessively low for~$M$,
this allocation is feasible:
\begin{equation}\label{eq:1559uic}
\underbrace{\sum_{t \in M \,:\, \epnebid(\valt) \ge r + \mu}
  g_t}_{\text{gas of included txs}}
\quad = 
\underbrace{\sum_{t \in M \,:\, \valt \ge r + \mu}
  g_t}_{\text{demand at price $r + \mu$}} \le\quad G.
\end{equation}

There are two types of transactions~$t$ to consider, high-value ($\valt
\ge r + \mu$) and low-value ($\valt < r + \mu$); see also
Table~\ref{table:uic}. 
When all bids are set according the strategy~$\epnebid(\cdot)$ in~\eqref{eq:epne},
the former
transactions are included (and pay $\epnebid(\valt) = r + \mu$ per
unit of gas) while the latter are excluded (and pay nothing).  
The utility~\eqref{eq:userutil} of $t$'s creator is $(\valt - r- \mu)
\cdot g_t \ge 0$ if~$t$ is a high-value transaction and~0 otherwise.
Every alternative bid $\hat{\bid}_t$ for a high-value transaction
either has no effect on its creator's utility (if $\hat{\bid}_t \ge
r+\mu$) or leads to~$t$'s exclusion from the block (if $\hat{\bid}_t <
r + \mu$) and reduces this utility from $(\valt - r- \mu) \cdot g_t$
to~0.
Every alternative 
bid $\hat{\bid}_t$ for a low-value transaction
either has no effect on its creator's utility or leads to~$t$'s
inclusion in the block; the latter can only occur when $\hat{\bid}_t \ge
r+\mu$, in which case the creator's utility drops from~0 to $(\valt
-\hat{\bid}_t)
\cdot g_t < 0$.  We conclude that there is no alternative bid for
any transaction of~$M$ that increases its creator's utility.
\end{proof}

Theorem~\ref{t:1559uic} and its proof show that, at its symmetric
EPNE, the 1559 mechanism acts a posted-price mechanism
(Section~\ref{ss:postedprice}) except when the base fee is excessively
low. 
\begin{mdframed}[style=offset,frametitle={The 1559 Mechanism Is
    Typically a Posted-Price Mechanism}]
The 1559 mechanism acts as a posted-price mechanism at the price
$r+\mu$, where~$r$ is the base fee and~$\mu$ is the marginal cost of
gas, except during periods of rapidly increasing demand.
\end{mdframed}

\begin{table}
\centering
\begin{tabular}{|c|c|c|}\hline
& Low-Value ($\valt < r + \mu$) & High-Value ($\valt \ge r + \mu$)\\ \hline
Bid at EPNE & $\valt$ & $r+\mu$\\
Utility at EPNE & 0 & $(\valt - r - \mu) \cdot g_t \ge 0$\\
Utility of Alternative & $\le (\valt - r - \mu) \cdot g_t < 0$ & 0\\ \hline
\end{tabular}
\caption{Proof of Theorem~\ref{t:1559uic}.  For both low- and high-value
  transactions, no unilateral deviation from the symmetric EPNE bid 
can increase a user's utility.}\label{table:uic}
\end{table}

\begin{remark}[Welfare Properties of the 1559 Mechanism]
An attractive property of the symmetric EPNE in~\eqref{eq:epne} is
that the outcome perfectly differentiates between high-value ($\valt
\ge r + \mu$) and low-value ($\valt < r + \mu$) transactions,
including the former while excluding the latter.  This outcome can be
viewed as a market-clearing outcome (Section~\ref{ss:market}) with
respect to a supply of~$G^*$ gas, where~$G^*$ denotes the demand at
price $r+\mu$.
\end{remark}

\begin{remark}[The Obvious Bid Is Not a Dominant
  Strategy]\label{rem:not_dse}
The symmetric EPNE~\eqref{eq:epne} in the proof of Theorem~\ref{t:1559uic} is not a
dominant-strategy equilibrium in the sense of
footnote~\ref{foot:dse}.  The issue arises when the creators of other 
transactions overstate their fee caps, in which case the base fee
could become excessively low with respect to the stated demand (even
though it is not with respect to the true demand).  In particular,
the equality in~\eqref{eq:1559uic} need not hold if other
transactions' bids are set arbitrarily.
\end{remark}

\begin{remark}[Expected Frequency of Excessively Low Base Fees]
Demand for EVM computation has generally been volatile, at both
short and long time scales.  For this reason, one would expect
at least occasional excessively low base fees.  It would be
interesting to predict, perhaps based on experiments using historical
demand data, the likely frequency of excessively low base fees in a post-EIP-1559 world.
\end{remark}

\subsection{The 1559 Mechanism Is OCA-Proof}\label{ss:1559oca}

Finally, we show that, under the 1559 mechanism, miners and users
cannot improve their joint utility through off-chain agreements.
A key driver of this result is that the fee burn (per unit of
gas) does not depend on the current actions of the miner or users
(cf., Section~\ref{ss:fpa_burn}).

\begin{theorem}[The 1559 Mechanism is OCA-Proof]\label{t:1559ocaproof}
The 1559 mechanism $\ninetfm$ is OCA-proof.
\end{theorem}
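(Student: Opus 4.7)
The plan is to exhibit, for any on-chain history $B_1,\ldots,B_{k-1}$ with associated base fee $r=\alpha(B_1,\ldots,B_{k-1})$ and any set $U$ of outstanding transactions, a single bid vector $\bids^*$ whose on-chain outcome matches the best joint utility achievable by any OCA. The structural fact I would lean on is that the 1559 burning rule is purely history-dependent: by Definition~\ref{def:1559burning}, $\burnt^{\nine}(B_1,\ldots,B_k)=r$ for every included $t$, irrespective of the current block's contents or bids. Consequently the joint utility~\eqref{eq:jointutil} of any block $B_k$---whether it arises on-chain or through an OCA---collapses to
\[
J(B_k) \;=\; \sum_{t \in B_k} (\valt - r - \mu)\, g_t,
\]
a quantity that depends only on the \emph{set} of included transactions.

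First I would upper-bound the OCA side. For any feasible $T \sse U$ and OCA $\oca$ between $T$'s creators and a miner $m$, Definition~\ref{def:oca} binds the miner to include exactly $T$ in its block $B(\bids)$, so $u_{T,m}\oca = J(T)$. Maximizing over feasible $T \sse U$ gives the tight upper bound
\[
\mathrm{OPT} \;:=\; \max_{T \sse U,\; T \text{ feasible}} J(T),
\]
and the maximizer can be taken to contain only transactions with $\valt \ge r+\mu$, since all other transactions contribute nonpositively to $J$.

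Next I would realize $\mathrm{OPT}$ on-chain. For each $t \in U$, set the fee cap $c_t := \valt$ and tip $\delta_t := \max\{\valt-r,0\}$, so by~\eqref{eq:1559bid} the induced bid is $\bid^*_t = \valt$ when $\valt \ge r$ and strictly below $r$ otherwise (in which case $t$ is unprofitable under the 1559 objective and is excluded by $\allocs^{\nine}$). With these bids, Definition~\ref{def:1559alloc} tells us that $\allocs^{\nine}$ chooses $B_k \sse U(\bids^*)$ to maximize $\sum_{t} \alloct^{\nine}\cdot(\bid^*_t - r - \mu)\, g_t = \sum_t \alloct^{\nine}\cdot(\valt - r - \mu)\, g_t$ over feasible subsets, which is exactly the optimization problem defining $\mathrm{OPT}$. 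Hence $J(B_k) = \mathrm{OPT} \ge u_{T,m}\oca$ for every feasible $T \sse U$ and OCA on $T$, which is precisely the OCA-proofness condition~\eqref{eq:oca}.

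The only subtlety I anticipate is handling transactions with $\valt < r$: these literally cannot appear in any on-chain 1559 block (individual rationality forces $c_t \ge r$, which would contradict $c_t \le \valt$), but they could \emph{a priori} appear in a coalition's $T$. Since their joint-utility contribution $\valt - r - \mu$ is strictly negative, dropping them only increases the OCA side, so the bound goes through. Beyond this bookkeeping, the proof is essentially the tautology that the 1559 allocation rule optimizes the same per-transaction quantity $(\valt - r - \mu) g_t$ that defines joint utility---once bids equal values---so there is no real obstacle to overcome.
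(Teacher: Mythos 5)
Your proposal is correct and follows essentially the same route as the paper's proof: set on-chain bids equal to values, observe that the history-dependent burn makes the 1559 allocation objective~\eqref{eq:1559_obj} coincide with the joint utility~\eqref{eq:jointutil}, and conclude that the resulting on-chain outcome already maximizes joint utility over all feasible subsets, so no OCA can beat it. Your added bookkeeping (realizing the bid $b^*_t=\valt$ via $c_t=\valt$, $\delta_t=\max\{\valt-r,0\}$, and noting that transactions with $\valt<r$ only hurt the coalition) fills in details the paper leaves implicit but does not change the argument.
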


\begin{proof}
Fix an on-chain history~$B_1,B_2,\ldots,B_{k-1}$ and corresponding
base fee $r=\alpha(B_1,B_2,\ldots,B_{k-1})$.
Consider a set~$U$ of transactions and set $\bid^*_t = \valt$ for every
$t \in U$.  
Then, because~$\burns^{\nine}$ is the constant function always equal to~$r$ (Definition~\ref{def:1559burning}),
the objective~\eqref{eq:1559_obj}
maximized by the allocation rule~$\allocs^{\nine}$ is identical to the joint
utility~\eqref{eq:jointutil}.  Thus, the joint utility of the on-chain
outcome with bids $\bids^*$ cannot be improved upon by any OCA.
\end{proof}

\section{Miner Collusion at Longer Time Scales}\label{s:collusion}

Section~\ref{s:mm} demonstrates that the 1559 mechanism enjoys several
game-theoretic guarantees at the time scale of a single block.  But
what about longer time scales?  For example, 
to achieve the typically-UIC guarantee in Theorem~\ref{t:1559uic}, 
the mechanism introduces a history-dependent base fee that is burned;
a natural worry is that miners may be incentivized to manipulate and
artificially decrease this base fee over time.

This section investigates the incentives for miner collusion, both under
the status quo and under EIP-1559.
Section~\ref{ss:thought} formalizes ``extreme miner collusion''
through a thought experiment in which a single miner controls 100\% of
Ethereum's hashrate.  Section~\ref{ss:fpa_thought} 
identifies the revenue-maximizing strategy for such a miner in a first-price
auction; in some cases, the miner is incentivized to 
artificially restrict the supply of EVM computation in order to boost
the bids submitted by creators of high-value transactions.
Section~\ref{ss:1559_thought} repeats the exercise for the 1559
mechanism and determines that the outcome of extreme collusion would
be similar to that with today's first-price auctions.
Section~\ref{ss:fpa_collusion} classifies different types of miner
collusion and reviews to what extent each type appears to occur
in Ethereum at present.  
Section~\ref{ss:1559collusion} argues that the game-theoretic impediments 
to double-spend, censorship, denial-of-service, and
revenue-maximizing 100\% miner strategies (including base fee
manipulation) appear as strong under EIP-1559 as under the status quo.
Finally, Section~\ref{ss:caveats} brainstorms possible reasons for why
miner collusion might nevertheless be more likely under EIP-1559 than
it is today.

\subsection{Extreme Collusion: The 100\% Miner Thought Experiment}\label{ss:thought}

The fidelity of the myopic miner model of
Sections~\ref{s:defs}--\ref{s:mm} depends on the degree of
decentralization in Ethereum mining.  For example, with extreme
decentralization, such as the hashrate being spread equally across
millions of non-colluding miners, any given miner mines a block so
rarely that there is no point to non-myopic strategies (i.e.,
strategies that forego immediate rewards in favor of future rewards).
In particular, in the 1559 mechanism, because the base fee is set by
past history and independent of the current block, 
no such miner will be interested in manipulating it.  

To meaningfully study miner deviations such as base fee manipulation,
we must therefore consider miners (or tightly coordinated mining
pools) that possess a significant fraction of the total hashrate and
strategize at time scales longer than a single block.\footnote{We
  continue to assume that users are myopic, and bid to maximize their
  utility in the current block (Definition~\ref{def:userutil}).
  Simulations by Monnot~\cite{1559sim} suggest that more complex user
  strategies do not significantly change the behavior of the mechanism
  proposed in EIP-1559.}  To get the lay of the land, we next
investigate both first-price auctions and the 1559 mechanism in the
opposite extreme scenario in which {\em all} of the hashrate is
controlled by a single miner or, equivalently, a perfectly coordinated
cartel comprising all of the miners.\footnote{A similar approach is
  taken by Hasu et al.~\cite{HPC19} in the context of Bitcoin and
  Zoltu~\cite{MZ51} in the context of EIP-1559.}
\begin{mdframed}[style=offset,frametitle={The 100\% Miner Thought
    Experiment}]
\begin{enumerate}

\item A single miner controls 100\% of the hashrate.

\item The miner acts to maximize its net revenue received from
  transaction fees over a significant period of time (e.g., thousands of
  blocks).

\item The demand curve (see Section~\ref{s:market}) is the same for
  every block, independent of the miner's actions, and known to the
  miner.

\end{enumerate}
\end{mdframed}
The second assumption clarifies that the 
thought experiments in Sections~\ref{ss:fpa_thought}
and~\ref{ss:1559_thought} will
not consider off-chain rewards, for example from a double-spend
attack, in order to isolate incentive issues specific to the
transaction fee mechanism.
The point of the third assumption is to stack the deck against a
protocol by making it as easy as possible for a miner or cartel of
miners to identify and carry out optimal deviations from the
protocol's prescriptions.

\subsection{First-Price Auctions with a 100\% Miner}\label{ss:fpa_thought}

What would a 100\% miner do under the status quo of first-price
auctions?  Let~$D(p)$ denote the demand curve---the total gas demanded
at a gas price of~$p$ gwei.  We assume that~$D(p)$ is a continuous and
strictly decreasing function, and that~$D(p) = 0$ once~$p$ is
sufficiently large.  We continue to assume that the 
demand curve is exogenous, the same for every block, and known to the
miner.

We consider strategies of the following form:
\begin{mdframed}[style=offset,frametitle={Strategies for a 100\% Miner}]
\begin{enumerate}

\item \textbf{Price-setting: } for a gas price~$p$ with $D(p) \le G$,
  include a transaction in the block if and only if its gas price is
  at least~$p$.  (As usual, $G$ denotes the maximum block
  size.)\tablefootnote{For an analogy, think of a consultant with a
    unique skill set committing to an hourly rate.}

\item \textbf{Quantity-setting: } for a quantity $q \le G$,
  include the transactions with the highest gas prices, up to a limit
  of~$q$ on the total gas.\tablefootnote{For an analogy, think of
a restriction on oil production set by the Organization of the Petroleum
Exporting Countries (OPEC).}

\end{enumerate}
\end{mdframed}
In our model, these two types of strategies are equivalent---a price-setting strategy at the price~$p$ has the same
effect as a quantity-setting strategy at the quantity~$q=D(p)$.
In either case, a creator of a transaction~$t$ 
with $\valt \ge p$ should be expected to respond by bidding the fixed
price~$p$ (enough for inclusion in the block),
and one with $\valt < p$ to bid something between~0
and~$\valt$ (in any case, being excluded from the block).

Because there are no dependencies between first-price auctions in
different blocks and no fee burn, a 100\% miner maximizes its net
revenue by maximizing its revenue from each block separately.  For a
single block, the revenue earned by a miner using a price-setting strategy
with price~$p$ (or the equivalent quantity-setting strategy) is the
price 
times the quantity willing to pay it:\footnote{For simplicity, we
  assume in this section that the marginal cost of gas to a
  miner---the parameter~$\mu$
  in Sections~\ref{s:defs}--\ref{s:mm}---is zero.  The conclusions
  of this section remain the same for a positive marginal cost.}
\begin{equation}\label{eq:pq}
R(p) := p \cdot D(p).
\end{equation}
For ease of exposition, in this section we focus on demand curves for
which the revenue~\eqref{eq:pq} is a strictly concave function (as is
the case with, for example, a linear demand curve).

Because a 100\% miner can be thought of as a monopoly on EVM
computation, there is an obvious price and quantity to focus on:
\begin{definition}[Monopoly Price and Quantity]\label{def:monopoly}
Consider a maximum block size~$G$ and a demand curve~$D(\cdot)$ for which
the revenue~\eqref{eq:pq} is a strictly concave function of price.
If~$\bar{p}$ attains the maximum in~\eqref{eq:pq}, then:
\begin{itemize}

\item [(a)] the {\em monopoly price} is the revenue-maximizing price
  or the market-clearing price, whichever is larger:\tablefootnote{See
    Section~\ref{ss:market} for the definition of a market-clearing
    price.}
\begin{equation}
p^* := \max\{ \bar{p}, D^{-1}(G) \};
\end{equation}

\item [(b)] the {\em monopoly quantity} is the revenue-maximizing 
quantity or the maximum block size, whichever is smaller:
\begin{equation}
q^* := D(p^*) = \min\{ D(\bar{p}), G \}.
\end{equation}

\end{itemize}
\end{definition}
\begin{example}[Monopoly Prices and Quantites]
Suppose the maximum block size~$G$ is 12.5M gas and the demand curve
is~$D(p) = 30000000-150000p$ (as in Figure~\ref{f:demand_curve}).  The
revenue~\eqref{eq:pq} as a function of price is
$30000000p-150000p^2$.  This function is differentiable and strictly
concave, so its unique maximum~$\bar{p}$ is the point at which the
derivative $30000000-300000p$ equals 0.  Thus, $\bar{p} = 100$ gwei and
$D(\bar{p}) = \text{15M}$ gas.  This exceeds the maximum block size of
12.5M gas, and hence the monopoly price is the market-clearing price
$D^{-1}(\text{12.5M}) = 116 \tfrac{2}{3}$ gwei.

If instead the demand curve was~$D(p) = 20000000-150000p$, $\bar{p}$
would be~$66 \tfrac{2}{3}$ gwei and~$D(\bar{p})$ would be~10M gas.
In this case, the monopoly price is strictly higher
than the market-clearing price (of~50 gwei) and the monopoly quantity is
strictly smaller than the maximum block size.
\end{example}

Price-setting at the monopoly price or quantity-setting at the
monopoly quantity both have the effect of maximizing
revenue~\eqref{eq:pq} subject to the maximum block size.  That is,
these are precisely the optimal strategies for a 100\% miner:
\begin{mdframed}[style=offset,frametitle={Optimal Strategies for a
    100\% Miner (First-Price Auctions)}]
\begin{itemize}

\item A 100\% miner would price-set at the monopoly price or
  quantity-set at the monopoly quantity.  

\item If the monopoly quantity is the maximum gas size (equivalently, the
monopoly price is the market-clearing price), a
  100\% miner would not deviate from the intended allocation rule of a 
  first-price auction (Example~\ref{ex:fpa_alloc}).

\end{itemize}
\end{mdframed}
We conclude that, with a first-price auction, extreme miner collusion
can increase transaction fee revenue if and only if the monopoly
quantity is less than the maximum block size; in this case, 
miners can boost revenues by artificially restricting the supply of
EVM computation, thereby forcing the creators of high-value
transactions to submit higher bids for their inclusion.

\begin{remark}[Detecting a Price- or Quantity-Setting Strategy]\label{rem:monopoly_detect}
Suppose a cartel of miners implemented a quantity-setting
strategy (with quantity less than~12.5M gas), or the corresponding
price-setting strategy.  Would anyone notice?
Naively executed, persistent underfull blocks would be a dead
giveaway.  But if miners include fake transactions to keep all the
blocks full, such a strategy could be difficult to conclusively
detect.
\end{remark}

\subsection{EIP-1559 with a 100\% Miner}\label{ss:1559_thought}

EIP-1559, described in Section~\ref{ss:1559}, would have two immediate
consequences for the 100\% miner thought experiment.  First, a miner
would strive to simultaneously maximize the transaction fee revenue (as
in a first-price auction) and minimize the amount of these fees lost
to the fee burn.  Second, a miner's allocation decision in one block
would affect the base fee (and hence net revenue earned) in future
blocks.  Now what's the miner's optimal strategy?

First suppose that the monopoly quantity
(Definition~\ref{def:monopoly}) is at most the target block size
of~12.5M gas.  (Recall that the maximum block size~$G$ is double this
amount under EIP-1559.)  In this case, the maximum block size may as
well be~12.5M gas---a 100\% miner will never use more than this, as
doing so would decrease its net revenue both in the current block (for
including more gas than the monopoly quantity) and in future blocks
(because of the increased base fee, as per~\eqref{eq:update}).  Thus,
the best-case scenario for a 100\% miner is to match the revenue of a
100\% miner under the status quo (Section~\ref{ss:fpa_thought}) while
simultaneously paying no fee burn.  A 100\% miner can closely
approximate this best-case scenario:
\begin{mdframed}[style=offset,frametitle={Optimal Strategy for a 100\%
    Miner (Monopoly Quantity $\le \text{12.5M gas}$)}]
\begin{enumerate}

\item Drive the base fee to
zero (or its minimum amount) from its initial value, for example
by publishing a sequence of empty blocks.

\item For all future blocks, proceed as a 100\% miner would with
  first-price auctions, by using the monopoly quantity-setting strategy.

\end{enumerate}
\end{mdframed}
Because the monopoly quantity is at most the target block size, the
base fee will remain at its minimum value forevermore.  
Notably, in this case, the outcome of extreme miner
collusion is essentially the same under EIP-1559 as under the status quo!

When the monopoly quantity is more than the target block size, a
100\% miner faces the non-trivial optimization problem of optimally
trading off the short-term revenue gain from including more than
12.5M gas in a block and the long-term revenue decrease due to higher
future base fees.  The optimal solution to this problem depends in an
intricate way on the assumed demand curve~$D(\cdot)$; a detailed
discussion of it is outside the scope of this report.  
Qualitatively, we can view this optimal strategy as an optimized
version of the quantity-setting strategy with quantity 12.5M gas,
in which the variable block size of EIP-1559 is exploited to mix
underfull and overfull blocks so as to boost net revenue (even
after accounting for the nonzero fee burn).

\begin{remark}[51\% Miner = 100\% Miner]
A miner or perfectly coordinated cartel of miners controlling 51\% of the
overall hashrate can control 100\% of the blocks on the longest chain
by refusing to extend any block mined by a miner outside the cartel.
Thus, the optimal 100\% miner strategies identified in this and the
previous section are equally well available to a 51\%
miner\footnote{Or at least, to a 51\% miner unconcerned with detectable
coordinated strategies; see Section~\ref{sss:detectable} for further
discussion.}\fnsep\footnote{Even a medium-size miner or mining
pool---over 20\% of the total hashrate, say, as with the two biggest
Ethereum mining pools~\cite{miningpools}---could conceivably benefit
from a monopoly price-setting or quantity-setting strategy, if enough
users are willing to pay a premium to avoid a 20\% chance of a
transaction being delayed by one block.\label{foot:mediumsize}}
\end{remark}

\subsection{First-Price Auctions: Do Miners Collude?}\label{ss:fpa_collusion}

We offer no prediction on whether miners would collude under EIP-1559,
for example to implement some form of the 100\% miner optimal strategy
identified in Section~\ref{ss:1559_thought}.  We can, however,
speculate in a principled way via analogy with observed miner
behavior under the status quo.

\subsubsection{Types of Miner Coordination}

Miners can coordinate their actions in a number of ways.
Next we single out three factors that may influence the likelihood of a
cartel of miners carrying out a particular coordinated strategy.
\begin{mdframed}[style=offset,frametitle={Classifying Coordinated Strategies}]
\begin{enumerate}

\item Is the coordinated strategy plausibly for the good of the entire
  Ethereum network, or clearly for the good of the miners?

\item Is the coordinated strategy easily detectable?

\item Is the cartel of miners game-theoretically robust (with an
  incentive for cartel members to remain) or game-theoretically fragile 
(with an incentive for members to secede, for example by
switching to solo mining or joining a competing mining pool)?

\end{enumerate}
\end{mdframed}

\subsubsection{Coordination for the Greater Good}

Ethereum miners do appear to coordinate their actions at times,
for example when resolving hard forks or increasing the
maximum block size (which in Ethereum is voted on by miners).  In
these cases, the goal is plausibly to maximize the health of the
Ethereum network.  For example, increases in the maximum block size
over time may have been a balancing act between minimizing
transaction fees and minimizing the centralization risk due to the
computation and communication necessary to process blocks.\footnote{A
  plausible alternative narrative is that miners are maximizing their
  rewards from transaction fees subject to acceptance of the maximum
  block size by the network.}

\begin{mdframed}[style=offset]
Miners appear to coordinate when the goal is
plausibly to maximize the health of the Ethereum network.
\end{mdframed}

\subsubsection{The Risk of Undetectable Coordination}

The key question is then whether miners will use this coordination
ability to pursue goals that are primarily in their own interest,
rather than in the interest of the network.  The risk is greatest from
undetectable strategies.
\begin{mdframed}[style=offset]
Miners should not be expected to eschew undetectable coordinated
strategies that are in their own self-interest.
\end{mdframed}
Protocols should therefore be designed to avoid such undetectable
strategies whenever possible, or at least to render them
game-theoretically fragile (see Section~\ref{sss:fragile}).

\begin{example}[Fake Transactions in Vickrey Auctions Are Undetectable]
Example~\ref{ex:spa_mmic}\\ notes that Vickrey (a.k.a.\ second-price)
auctions can be manipulated via fake transactions to boost a miner's
revenue.  Such manipulation could be difficult to detect, and a miner
(myopic or otherwise) would have a strong incentive to do it.  This
reinforces the argument against Vickrey auctions in permissionless
blockchains.
\end{example}

\subsubsection{The Lack of Detectable Coordinated Strategies in the Wild}\label{sss:detectable}

What about detectable coordinated strategies that favor the miners
over the network?  For example:
\begin{mdframed}[style=offset,frametitle={Three Types of Detectable
    Attacks by a 51\% Cartel}]
\begin{enumerate}

\item A double-spend attack via a significant blockchain
  reorganization.

\item A censorship attack in which every block referencing a
  blacklisted address is deliberately orphaned by the cartel.

\item A denial-of-service attack in which every non-empty block is
  deliberately orphaned by the cartel.

\end{enumerate}
\end{mdframed}
All three of these attacks have been rare to non-existent in
Ethereum.\footnote{Less secure blockchains, including Ethereum
  Classic, have suffered from such attacks~\cite{etc}.}  Why?  We can
only speculate on the reasons:
\begin{mdframed}[style=offset,frametitle={Possible Reasons Miners Avoid Detectable Attacks}]
\begin{enumerate}

\item Enough miners (or mining pools) are fundamentally opposed to
  deliberate attacks that might harm the Ethereum network, due to
  altruism or blind loyalty, that a 51\%  cartel cannot form.

\item Many miners are ETH holders and believe that a detectable attack
  would significantly decrease the price of ETH.\tablefootnote{Though three
    recent double-spend attacks on the Ethereum Classic blockchain
    have not significantly harmed the price of ETC; perhaps that
    blockchain's relatively low level of security has been priced in
    all along.  See also Moroz et al.~\cite{moroz} for further discussion.}

\item Many miners have some other form of vested interest in the
  health of the Ethereum network
and believe it would be significantly damaged by a detectable attack.
For example, an ASIC is effectively a call option on ETH~\cite{HN19,YZ20}.

\item Miners fear that they would be punished for a significant
  detectable attack through a hard fork.\tablefootnote{This fear is
    perhaps more relevant for ASIC miners than for GPU miners.}

\end{enumerate}
\end{mdframed}

\subsubsection{Game-Theoretic Fragility}\label{sss:fragile}

Perhaps miners will carry out a self-interested
coordinated strategy if and only if it is undetectable?  The monopoly
price- and quantity-setting strategies identified in
Section~\ref{ss:fpa_thought} indicate that reality is more complex.
Appropriately disguised versions of these strategies can be difficult to
detect (Remark~\ref{rem:monopoly_detect}), and yet there is little to
no anecdotal evidence suggesting that Ethereum miners have ever coordinated to
implement such strategies.  Again, there are many possible
explanations: 
\begin{mdframed}[style=offset,frametitle={Possible Reasons for the Lack of
    Price- and Quantity-Setting Strategies}]
\begin{enumerate}

\item Miners would implement such strategies if they could, but there
  are too many obstacles (e.g., rapidly changing and hard-to-predict
  demand)  to coordinating on a price or quantity for a
  significant length of time.

\item Miners would implement such strategies if it was in their
  self-interest, but typically the monopoly quantity for the current
  demand curve (Definition~\ref{def:monopoly}) equals the maximum
  block size and no deviation from the 
protocol's prescribed behavior is necessary.

\item Disguising such strategies is too
  difficult, so all the arguments against detectable strategies apply.

\item Implementing such strategies could hurt the throughput and
  therefore health of the Ethereum network, which many miners have a
  vested interest in.

\item A large cartel of miners would be game-theoretically fragile,
  with cartel members incentivized to secede.

\end{enumerate}
\end{mdframed}

To illustrate the last point, we proceed as in Houy~\cite{houy}.
Imagine that all miners belong to the
cartel and implement an optimal 100\% miner strategy, such as
price-setting at a monopoly price~$p^*$ that is larger than the
market-clearing price~$\bar{p}$.
Thus, all transactions with bid at least~$p^*$ get included in a block
while all the other transactions languish in the mempool; because $p^*
> \bar{p}$, blocks will not be full.

The optimal myopic miner strategy (Section~\ref{ss:mmic}), meanwhile,
would be to ignore the cutoff~$p^*$ and pack the current block as full
as possible with the transactions with the highest bids.
This strategy maximizes the miner's
short-term revenue rather than leaving money on the table to sustain
the upward price pressure on the creators of high-value transactions.  In
effect, such a myopic miner would be free riding on the sacrifices of
the other miners which prop up the bids of high-value transactions.  A
small miner is well approximated by a myopic miner,
so one might well expect such a miner to secede from the cartel to
implement the optimal strategy for a myopic miner rather than for a
100\% miner.

A cartel of miners could discourage secession by its members through
the threat of punishment.  For example, if the rest of the cartel
controls at least 51\% of the overall hashrate, the remaining miners
could refuse to extend any block that does not conform to its rules,
such as blocks that are packed with more than the monopoly quantity
worth of 
gas.\footnote{The ``feather forking'' variant of this strategy has
  the potential to succeed even with less than 50\% of the overall
  hashrate~\cite{feather}.}
However, there is little evidence of any Ethereum miners employing
such punishment strategies.  Why not?  Perhaps they have not been
needed.  Perhaps carrying them out would be too logistically complex.
Or perhaps punishment strategies are inevitably detectable and
therefore run into all the impediments listed in
Section~\ref{sss:detectable} for detectable coordinated strategies.

We hypothesize that coordinated strategies that harm the Ethereum
network and require a credible threat of punishment to sustain might
pose little risk. 
\begin{mdframed}[style={offset},
frametitle={Hypothesis: Game-Theoretic Fragility Is a Dealbreaker}]
Coordinated miner strategies that:
\begin{itemize}
\item [(i)] favor miners at the expense of the network;

\item [(ii)] require short-term sacrifices from each member for the
  good of the cartel; and 

\item [(iii)] are costly or difficult to sustain through punishment

\end{itemize}
may be rare in a well-secured blockchain such as Ethereum.
\end{mdframed}

\subsubsection{The Upshot}\label{sss:upshot}

The discussion in this section clarifies the most worrisome type
of miner coordination, which deserves special attention when designing
or modifying a protocol:
\begin{mdframed}[style=offset]
The most concerning type of coordinated miner strategy is one that:
\begin{enumerate}

\item is in the interest of miners, rather than the network;

\item is undetectable; and

\item is game-theoretically robust, with cartel members
  incentivized to remain.

\end{enumerate}
\end{mdframed}
Revisiting the four types of coordinated strategies discussed in this
section---double-spend attacks, censorship attacks, denial-of-service
attacks, and monopoly price- or quantity-setting---we see that the
first three attacks fail the second criterion while the fourth
strategy fails the third.

\subsection{EIP-1559: Will Miners Collude?}\label{ss:1559collusion}

We now speculate on the likelihood of different forms of miner
coordination (Section~\ref{ss:fpa_collusion}) in a post-EIP-1559
world.

First, all of the arguments in Section~\ref{sss:detectable} against
detectable attacks remain equally valid under EIP-1559, and the
specific attacks discussed (double-spend, censorship,
denial-of-service) remain equally detectable.

Second, while the optimal 100\% miner strategy is generally different
under EIP-1559 (Section~\ref{ss:1559_thought}) than under the status
quo (Section~\ref{ss:fpa_thought}) due to base fee manipulation, four
of the five potential impediments to implementing the latter
(identified in Section~\ref{sss:fragile}) apply also to the
former.\footnote{The exception is the second point.  Under EIP-1559,
  the 100\% miner strategy is generally different from the honest
  mining strategy even when the monopoly quantity exceeds the target block
  size (see Section~\ref{ss:1559_thought}).}  In particular, a cartel
simulating the optimal 100\% miner strategy under EIP-1559 is
game-theoretically fragile.  Analogous to a first-price auction, a
myopic miner is incentivized to pack its block as full as
possible with the transactions with the highest tips (up to 25M gas,
which is double the target block size).
This strategy maximizes
short-term miner revenue rather than leaving money on the table in
order to keep future base fees low.  In effect, such a myopic miner
would be free riding on the sacrifices of the other miners that
respect the target block size and thereby keep the base fee low.

Third, because of the 1559 mechanism's fee burn, fake transactions can
no longer be costlessly used to disguise an attack that simulates the
optimal 100\% miner strategy (cf., Remark~\ref{rem:monopoly_detect}
and first-price auctions).  Because this strategy is now either costly
or detectable, it is arguably even less likely to be used under
EIP-1559 than under the status quo.

\begin{mdframed}[style=offset]
The game-theoretic impediments to double-spend attacks, censorship attacks,
denial-of-service attacks, and
revenue-max\-imizing 100\% miner strategies (including base fee
manipulation) appear as strong under EIP-1559 as under the status
quo.\tablefootnote{One counterpoint is that the reduction of miner
  revenue on account of the fee burn would likely reduce the
  overall hashrate, lowering the cost to a saboteur of launching these
  attacks (e.g., by renting sufficient
  hashrate~\cite{51attacks,whybuy}).  Under EIP-1559, 
the block reward alone must be sufficiently high to incentivize an
  adequate amount of hashrate and consequent security.}
\end{mdframed}

\subsection{Caveats}\label{ss:caveats}

Sections~\ref{ss:fpa_collusion}--\ref{ss:1559collusion} show that,
with both first-price auctions and the 1559 mechanism, all of the most
concerning forms of miner collusion (double-spending, censorship,
denial-of-service, revenue-maximization) are detectable,
game-theoretically fragile, or both.  But is this really enough
evidence to conclude that the harmful effects of miner collusion will
be no worse under EIP-1559 than they are now?  This section plays
devil's advocate and suggests some possible complications.

Because of the burned base fee revenues, many miners appear to view
EIP-1559 as 
taking away some of their profits and handing them over to ETH
holders.\footnote{There is merit to this argument 
but also some counterbalancing factors.
First, because
  Ethereum mining has a relatively low barrier to entry and exit, 
a decrease in aggregate miner rewards should lead to a decrease in 
the overall hashrate, with the least profitable miners exiting (see
e.g.~\cite{EOB17,leshno}).
This, in turn, increases the relative hashrate (and corresponding
fraction of miners' rewards) of the miners who remain.  
(Ethereum security suffers as a result but remains propped up by
the block reward~\cite{auer,budish}, which EIP-1559 leaves untouched.)

Second, there is evidence that an increasing share of
Ethereum transaction fees are paid by transactions vying for special 
treatment within a block (e.g., being placed first so as to execute
prior to all other transactions in the same block)~\cite{mev}.
Provided the willingness to pay of such transactions is significantly
higher than the market-clearing price at the target block size of
12.5M gas---the quantity that the base fee proxies for---miners should
continue to collect significant fees from them through their tips in
the 1559 mechanism.}
For example, of the nine miners responding to a questionnaire by
Beiko~\cite{beiko}, six wrote that ``they would not implement it under
any circumstances.''  This strong negative reaction suggests that
EIP-1559 may galvanize miners to sustain collusion to a degree not yet
seen under the status quo.  

An immediate issue is miner adoption, and the plan for the deployment
and acceptance of EIP-1559 should be explicitly discussed.  For
example, can the Ethereum Foundation effectively dictate its use?  Or
is the plan to first secure support from major projects built on top
of Ethereum (e.g., the USDC stable coin),
thereby forcing miners' hands?  Or should further support from miners
be sought out directly, and perhaps explicitly incentivized?

A second concern is that the 1559 mechanism's fee burn could change
the norms around what types of miner collusion are culturally
acceptable (e.g., coordinating on a new maximum block size) versus
unacceptable (e.g., a censorship attack).  For example, imagine that
miners coordinated their actions to avoid the base fee but otherwise
acted as in a first-price auction with a maximum block size of 12.5M
gas.\footnote{Such coordination can be implemented with a variation of
  the strategy in Section~\ref{ss:1559collusion}: first drive the base
  fee to zero, for example by publishing a sequence of empty blocks,
  and then use the quantity-setting strategy with quantity 12.5M gas
  for all future blocks (thereby keeping the base fee at zero
  forevermore).}  Like the optimal 100\% miner strategy in
Section~\ref{ss:1559collusion}, such coordination is
game-theoretically fragile and requires each miner to leave immediate
revenue on the table that could otherwise be collected by including
more than 12.5M gas worth of transactions in a block.  On the other
hand, this coordinated strategy could be much less damaging to the Ethereum
network than
something like a censorship attack or throttling the transaction rate
to boost miner revenues.  If such a strategy was widely perceived as mostly
harmless---perhaps unlikely in this instance, given the Ethereum
community's general enthusiasm for counteracting inflation with a fee
burn---it could conceivably find more purchase among miners.

\section{Alternative Designs}\label{s:alt}

Does EIP-1559 need to work the way that it does?  Are there
alternative designs that accomplish the same goals in a better or
simpler way?

Sections~\ref{ss:refund} and~\ref{ss:fpa_burn} argue that the
seemingly orthogonal goals of easy fee estimation and fee burning are
in fact inextricably linked through the threat of off-chain agreements.
Section~\ref{ss:forward} investigates a design that pays revenue from
transaction fees forward to miners of future blocks, an alternative to
fee burning with similar game-theoretic properties.
Section~\ref{ss:beos} recaps a recent transaction fee mechanism
proposal by Basu et al.~\cite{beos}.
Section~\ref{ss:tradeoff} discusses an
alternative design that, relative to the 1559 mechanism, favors UIC
over OCA-proofness.  Section~\ref{ss:update} explores the
possibilities for alternative base fee update rules.

\subsection{Paying the Base Fee to the Miner}\label{ss:refund}

The 1559 mechanism achieves a ``good user experience,'' in the form of
a typically-UIC guarantee (Theorem~\ref{t:1559uic}).  
At first glance, the proof of
this guarantee seems to hinge on two assumptions: (i) the base fee is
determined only by past history and independent of the current block;
and (ii) the base fee is high enough that the demand for gas is at
most the maximum block size.  Where does fee burning come in?

Specifically, consider the following alternative design in which
base fee revenues are passed on to the miner of the block; we call
this the {\em 1559-R mechanism}.
(Here ``R'' stands for ``refund.'')  The allocation rule~$\allocs^R$
is identical to that of the 1559 mechanism (the rule~$\allocs^{\nine}$ in
Definition~\ref{def:1559alloc}).  Miners can no longer be counted on
to exclude transactions with bid less than the base fee, so assume
that the protocol automatically treats as invalid any transaction~$t$
in a block with a bid~$\bidt$ that is less than that block's base
fee~$r$.  The new payment rule~$\prices^R$ is identical to the payment
rule~$\prices^f$ of a first-price auction
(Example~\ref{ex:fpa_payment}), with the miner collecting the entire
bid (i.e., the minimum of the fee cap and the sum of the base fee and
tip) as revenue.  The new burning rule~$\burns^R$ is also the same as
in a first-price auction---the all-zero rule~$\burns^f$.

Unfortunately, with a simple off-chain agreement, the 1559-R mechanism
devolves into a first-price auction (with block size~25M gas, double the
target):\footnote{The base fee of the 1559-R mechanism would therefore
  eventually increase to its maximum level.}
\begin{enumerate}

\item Users bid~$r$ on-chain and communicate off-chain what
  they would have bid in a standard first-price auction.

\item If a miner includes a transaction~$t$ with off-chain
  bid~$\bidt$, $t$'s creator transfers $\bidt-r$ gwei per unit of gas
  to the miner.  (When $\bidt < r$, this should be interpreted as a
  refund of $r-\bidt$ per unit of gas from the miner to~$t$'s creator.)

\end{enumerate}
In the notation of Definition~\ref{def:oca}, this is the OCA $\oca$ in
which $\bids = \mathbf{r}$ and $\tau_t = (\bid'_t-r)$, where~$\bid'_t$
denotes what $t$'s creator would have bid in a standard first-price auction.
\begin{prop}[The 1559-R Mechanism Is Equivalent to a First-Price Auction]\label{prop:1559r}
For\\ every set of transactions and base fee, there is a
one-to-one correspondence between the outcomes possible in a
first-price auction and the outcomes possible in the 1559-R mechanism
with an OCA (with the same maximum block size).
\end{prop}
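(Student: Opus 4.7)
The plan is to treat an ``outcome'' as a triple consisting of the feasible set $B_k$ of included transactions, the total per-unit-gas amount paid by the creator of each $t \in B_k$ (summing on-chain payment, fee burn, and any off-chain transfer), and the per-transaction per-unit-gas amount received by the miner. The proposition then asserts a bijection between such triples realizable in a first-price auction (with maximum block size $G = 25\text{M}$ gas) and those realizable in the 1559-R mechanism via an OCA (with the same $G$ and any fixed base fee $r$). Because the miner retains dictatorial control over the allocation in both settings (Remark~\ref{rem:minerscontrol}), I only need to match the per-transaction cash flows; the correspondence will be driven by the identity $b'_t = b_t + \tau_t$ that trades an on-chain bid in a first-price auction for an (on-chain bid, off-chain transfer) pair in the 1559-R+OCA setup.

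In one direction, given a first-price auction outcome in which $t \in B_k$ bids $b'_t$, I would define the 1559-R+OCA outcome in which every $t \in B_k$ submits the on-chain bid $b_t = r$ and agrees to the off-chain transfer $\tau_t = b'_t - r$. The validity constraint $b_t \ge r$ holds with equality; by the first-price payment rule $\prices^R = \prices^f$ the creator's on-chain cost is $r$ per unit of gas, and by the all-zero burning rule $\burns^R$ there is no burn, so the creator's total per-unit-gas cost is $r + (b'_t - r) = b'_t$, and the miner's per-$t$ receipt is also $b'_t$. All three components of the outcome triple agree.

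In the other direction, given a 1559-R+OCA outcome in which $t \in B_k$ has on-chain bid $b_t \ge r$ and off-chain transfer $\tau_t$, I would construct the first-price auction outcome in which $t \in B_k$ bids $b'_t = b_t + \tau_t$ and the miner includes exactly $B_k$ (which it is free to do, since any feasible subset is realizable under the miner's dictatorial control). The resulting on-chain payment of $b'_t$ per unit of gas matches both the creator's total cost and the miner's per-$t$ revenue in the given 1559-R+OCA outcome. The two constructions are manifestly inverse on outcome triples.

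The main point requiring care is the interpretation of ``outcome'' in the proposition: it is not a bijection between specific bid vectors (which live in different spaces, and the 1559-R side additionally carries off-chain transfers), but between the induced allocations and cash flows. Once this convention is fixed, no real obstacle remains, because the constant base fee $r$ can be completely absorbed into a renaming of the on-chain/off-chain split $b'_t \leftrightarrow (r, b'_t - r)$; the fact that $\tau_t$ may be negative (a refund) is what makes the renaming surjective onto all first-price outcomes, including those with $b'_t < r$.
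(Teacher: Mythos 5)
Your proof is correct and follows essentially the same route as the paper's: one direction normalizes every on-chain bid to the base fee $r$ with off-chain transfers $b'_t - r$ (negative transfers interpreted as refunds), and the reverse direction collapses the net payment $b_t + \tau_t$ back into a single first-price bid. The paper's version is terser but rests on the identical correspondence, so no substantive difference.
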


\begin{proof}
As noted above, the outcome of the bids~$\bids'$ in a first-price
auction is equivalent to the outcome of the on-chain bids~$\mathbf{r}$
under the 1559-R mechanism with off-chain transfers $\bids'-\mathbf{r}$.
In the other direction, for an outcome of the 1559-R mechanism and an
OCA in which the net payment from the creator of an included
transaction~$t$ to the miner is~$a_t$, the outcome of a first-price
auction in which the (on-chain) bids $\bids'$ are the same as
$\mathbf{a}$ (for included transactions) or~0 (for excluded
transactions) is equivalent.
\end{proof}

\begin{mdframed}[style=offset]
Transferring the revenue from the base fee of a block to the miner of
that block is economically equivalent to having no base fee.  In this
sense, a base fee provides UX improvements only if it is burned (or
otherwise withheld from the miner).
\end{mdframed}

\begin{remark}[Partial Refund of the Base Fee]
Proposition~\ref{prop:1559r} shows that, because of the possibility of
off-chain agreements, burning 0\% of the base fee is economically
equivalent to having no base fee at all.  More generally, burning an
$\alpha$ fraction of the base fee is economically equivalent to having
a fully burned base fee that is~$\alpha$ times as large.  For example,
consider a scenario in which the 1559 mechanism's base fee would
stabilize at~$r^*$.  If instead half of the base fee was burned,
one would expect the new mechanism to stabilize at a base fee
of~$2r^*$, with $r^*$ of it burned and the rest divvied up between
the miner and users via an OCA.\footnote{The designers of the NEAR
  blockchain, possibly unaware of this point, recently
  deployed a version of the 1559 mechanism in which 70\% of the
  base fee is burned and the remaining 30\% is transferred to smart
  contracts that   were used in the previous epoch~\cite{near_white}.
See Hasu~\cite{near} for further discussion.}
\end{remark}

\begin{remark}[Implementing the Monopoly Price Under the 1559-R Mechanism]\label{rem:1559r}
A second (if less important) issue with the 1559-R mechanism is that, unlike with
the 1559 mechanism, there are scenarios in which 
a coordinated miner strategy meets all three of the criteria in
Section~\ref{sss:upshot}---favoring the miners at the expense of
the network, undetectable, and game-theoretically robust.

In more detail, suppose the demand curve~$D(\cdot)$ is the same for
every block and, at the monopoly price~$p^*$, the demand (i.e.,
monopoly quantity) $D(p^*)$ is less than the target block size of
12.5M gas.  
Suppose also that the marginal
cost~$\mu$ of gas to a miner is negligible.  Miners can now simulate
the revenue-maximizing $p^*$-price-setting strategy
(Section~\ref{ss:fpa_thought}) simply by keeping the base fee at~$r^*$
at all times: (i) increase the base fee to~$r^*$, using fake
transactions as necessary; (ii) keep the base fee at~$r^*$ forevermore
by publishing 12.5M gas blocks, each including~$D(p^*)$ total gas of
real transactions and the balance in fake transactions.  This strategy
is game-theoretically robust because, given that past blocks have
resulted in a base fee of~$p^*$, a myopic miner maximizes its
immediate revenue by including all eligible transactions (i.e., those with
bid at least~$p^*$) and is not harmed by the inclusion of additional
fake transactions.
\end{remark}

\subsection{Fee-Burning First-Price Auctions}\label{ss:fpa_burn}

Alternatively, suppose we wanted a transaction fee mechanism with
a fee burn but didn't care about easy fee estimation.  Why not stick
with  first-price auctions, but burn all (or part) of the fees?  
Formally, this is the TFM $\tfm$ with $\allocs =\allocs^f,
\prices=\burns^f$, and $\burns=\prices^f$.

The problem is again the threat of off-chain agreements.
Intuitively, first-price auctions in which all payments are burned are
not OCA-proof because miners and users would be incentivized to
move all their payments off-chain.  
The next proposition formally shows that this mechanism fails to
satisfy Definition~\ref{def:ocaproof}.
\begin{prop}[Fee-Burning First-Price Auctions Are Not
  OCA-Proof]\label{prop:fpa_burn_not_ocaproof}
The fee-burn\-ing first-price auction $(\allocs^f,\burns^f,\prices^f)$
is not OCA-proof.
\end{prop}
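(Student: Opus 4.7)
The plan is to exhibit, for a fixed on-chain history, an instance in which every choice of on-chain bids produces strictly smaller joint utility than some off-chain agreement. The key observation is that in the fee-burning first-price auction $(\allocs^f,\burns^f,\prices^f)$, the burning rule is pay-as-bid, so every included transaction loses its entire bid to the burn---a loss that the coalition can avoid by replacing the on-chain bid with an off-chain transfer to the miner, directly in the spirit of Example~\ref{ex:move_off_chain}.

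Concretely, I would take a mempool $U=\{t\}$ consisting of a single outstanding transaction with gas limit $g_t>0$ and value $\valt>\mu$, assuming a strictly positive marginal gas cost $\mu>0$ (as is typical in practice, per Remark~\ref{rem:marg_costs}). For the OCA, set $T=\{t\}$, on-chain bid $\bidt=0$, and pick any off-chain transfer $\tau_t$ with $\mu<\tau_t<\valt$, so that both the miner's net revenue $(\tau_t-\mu)g_t$ and the user's net utility $(\valt-\tau_t)g_t$ are strictly positive. Because the burning rule charges the on-chain bid, the joint utility~\eqref{eq:jointutil} of this OCA equals
\[
u_{T,m}\oca \;=\; (\valt-0-\mu)\cdot g_t \;=\; (\valt-\mu)\cdot g_t.
\]

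Next, I would show that no on-chain bid $\bidt^*$ matches this value. The allocation rule $\allocs^f$ maximizes~\eqref{eq:fpa_obj}, so it includes $t$ only when $\bidt^*\ge\mu$. If $\bidt^*<\mu$, then $t$ is excluded and the on-chain joint utility is $0<(\valt-\mu)g_t$. If $\bidt^*\ge\mu$ and $t$ is included, the pay-as-bid burn makes the on-chain joint utility $(\valt-\bidt^*-\mu)g_t\le(\valt-2\mu)g_t<(\valt-\mu)g_t$, where the strict inequality uses $\mu>0$. Hence the defining inequality~\eqref{eq:oca} fails for this OCA against every $\bidt^*$, which certifies that $(\allocs^f,\burns^f,\prices^f)$ is not OCA-proof.

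The only step meriting a second look is the tie-breaking at $\bidt^*=\mu$, where $\allocs^f$ is indifferent between including and excluding $t$; either resolution still falls short of the OCA by at least $\mu g_t$, so the argument is unaffected. I do not foresee a genuine obstacle: the proof reduces to the accounting identity that, per unit of included gas, on-chain play necessarily burns at least $\mu$ (the minimum bid compatible with inclusion) while the OCA burns nothing, and this $\mu\cdot g_t$ gap is exactly the joint-utility advantage of going off-chain.
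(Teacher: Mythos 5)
Your proof is correct, and it reaches the conclusion by a genuinely different counterexample than the paper's. The paper's proof works in the regime where the marginal cost $\mu$ is negligible: it takes a mempool with several transactions, assumes a \emph{unique} value-maximizing feasible subset $T$, and derives the contradiction from a dichotomy---any bid vector with a positive bid incurs a nonzero burn, while the all-zero bid vector leaves the allocation rule $\allocs^f$ indifferent among all feasible subsets and hence unable to reliably select $T$. Your argument instead assumes $\mu>0$ strictly and uses a single transaction; the strict gap comes from the accounting fact that inclusion under $\allocs^f$ forces an on-chain bid of at least $\mu$, all of which is burned, so on-chain play loses at least $\mu\cdot g_t$ of joint utility relative to the OCA with $\bidt=0$. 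This buys a cleaner, fully rigorous one-transaction argument with no appeal to tie-breaking or to genericity of the value-maximizing subset (the paper's ``leads to an arbitrary feasible set $T'$\ldots generally different than $T$'' is the loosest step of its proof, and you avoid it entirely). What it costs is coverage of the boundary case $\mu=0$: there, your single-transaction instance fails, since the all-zero bid with $t$ included attains joint utility $\valt\cdot g_t$, matching the OCA. Since the paper elsewhere emphasizes that $\mu$ may be extremely small (and smaller still under proof of stake), and its own proof is pitched precisely at the negligible-$\mu$ regime, it is worth noting that your construction and the paper's are complementary: yours handles $\mu>0$, the paper's handles $\mu\approx 0$, and together they show the failure of OCA-proofness is not an artifact of either regime. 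As a counterexample to the universally quantified OCA-proofness property, exhibiting the failure for some admissible $\mu$ suffices, so your proof does establish the proposition as stated; just be explicit that the $\mu>0$ hypothesis is doing real work rather than being a harmless normalization.
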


\begin{proof}
Consider a non-empty set~$U$ of transactions with $\valt > 0$ for
every~$t \in U$, and assume that the marginal cost~$\mu$ of gas to a
miner is negligible.
Assume also that there is a unique feasible subset~$T \sse U$
of transactions maximizing the total value
\[
\sum_{t \in T} v_t \cdot g_t,
\]
and denote this maximum-possible total value by~$V > 0$.  The miner~$m$ and
users can obtain joint utility~$V$ through an OCA $\oca$ between the
creators of~$T$ and~$m$ in which $\bidt = 0$ and (for example) $\tau_t
= \valt/2$ for every $t \in T$.  

The joint utility~\eqref{eq:jointutil} of an on-chain outcome 
is~$V$ if and only if the
included transactions are precisely~$T$ and there is zero fee burn.
Every bid vector $\bids^*$ in which $\bid^*_t > 0$ for at least one
transaction~$t$ leads to a non-zero fee burn 
(on account of maximizing~\eqref{eq:fpa_obj})
and hence cannot achieve joint utility~$V$.  
Meanwhile, the all-zero
bid vector $\bids^* = \mathbf{0}$ leads to an arbitrary feasible set
$T' \sse U$ of transactions,
which is generally different than~$T$.
\end{proof}
Moreover, in the obvious OCA for the miner and users to employ in the
proof of Proposition~\ref{prop:fpa_burn_not_ocaproof},
the on-chain bids are zero and so there is no fee burn whatsoever!

\begin{mdframed}[style=offset]
Burning the fees of a first-price auction moves all payments
off-chain and leads to zero fee burning.  In this sense, a
non-trivial fee burn requires a base fee.
\end{mdframed}
\begin{remark}[Partial Fee-Burning]
The same argument and conclusion apply more generally to a
first-price auction in which any fixed positive fraction of the fees
are burned.
\end{remark}

\subsection{Paying the Base Fee Forward}\label{ss:forward}

Section~\ref{ss:refund} shows that, for a block's base fee to be
economically meaningful, revenues from it cannot be passed on to the
miner of the block.  Perhaps the simplest way to withhold this
revenue, as in the current EIP-1559 spec~\cite{1559spec}, is to
burn these revenues, effectively issuing a lump-sum refund to all ETH
holders.  An alternative solution, discussed explicitly
in~\cite{vb16}, is to transfer these revenues to one or more miners of
{\em other} blocks.

\subsubsection{The $\ell$-Smoothed Mechanism}

Concretely, consider the variant of the 1559 mechanism in which,
for some window length~$\ell$ (hard-coded into the protocol), the base
fee revenues from a block are split equally among the miners of the
next~$\ell$ blocks.  (The~1559 mechanism can be thought of as
the special case in which~$\ell=0$.)  Thus, a miner of a block receives a
$1/\ell$ fraction of the sum of the base fee revenues from the
previous~$\ell$ blocks, along with all of the tips from the current
block.

We can define the {\em $\ell$-smoothed mechanism} as follows.  
Fix a blockchain history~$B_1,B_2,\ldots,B_{k-1}$ with $k \ge \ell+1$.
Let $r_i = \alpha(B_1,B_2,\ldots,B_{i-1})$ denote the base
fee of block~$B_i$, where~$\alpha$ is the iteration
of the EIP-1559 update rule~\eqref{eq:update}.
Let~$R_k = \beta(B_1,B_2,\ldots,B_{k-1})$ denote the paid-forward base
fee revenues:
\[
\beta(B_1,B_2,\ldots,B_{k-1}) := \frac{1}{\ell} \sum_{i=k-\ell}^{k-1}
r_i \cdot G_i,
\]
where~$G_i = \sum_{t \in B_i} g_t$ denotes $B_i$'s size in gas.
The allocation, payment, and burning rules of the $\ell$-smoothed
mechanism are formally identical to those of the 1559 mechanism
(Definition~\ref{def:1559alloc}--\ref{def:1559burning}), 
with the understanding that
the burning rule (a constant function always equal to~$r_k$)
now indicates a payment that is paid forward to future miners rather
than burned.  Technically, the paid-forward base fee revenues~$R_k$
should be added to a miner's utility function
(Definition~\ref{def:mmutil}), but because~$R_k$ is independent of the
miner's current actions, it has no effect on the optimal strategy of
a myopic miner (or user).  In effect, $R_k$ serves as a fixed bonus
added to the standard block reward.

\subsubsection{Properties of the $\ell$-Smoothed Mechanism}

Because users are indifferent to how their payments are directed, and
because a myopic miner cares only about its revenue from the current
block, all of the game-theoretic guarantees for users and myopic
miners satisfied by the 1559 mechanism (Theorem~\ref{t:1559mmic},
Corollary~\ref{cor:1559costly}, Theorem~\ref{t:1559uic}, and
Theorem~\ref{t:1559ocaproof}) carry over to the $\ell$-smoothed
mechanism (for any~$\ell$).
\begin{theorem}[Guarantees for the $\ell$-Smoothed Mechanism]\label{t:smoothed}
For every $\ell \ge 0$, the $\ell$-smoothed mechanism is:
\begin{itemize}

\item [(i)] MMIC;

\item [(ii)] $(r+\mu)$-costly, where~$r$ is the current base fee
  and~$\mu$ is the marginal cost of gas;

\item [(iii)] UIC, provided the current base fee is not excessively
  low for the current demand; and

\item [(iv)] OCA-proof.

\end{itemize}
\end{theorem}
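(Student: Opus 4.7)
The plan is to reduce each of the four claims to the corresponding guarantee already established for the 1559 mechanism in Section~\ref{s:mm}, exploiting the observation that the $\ell$-smoothed mechanism and the 1559 mechanism share identical allocation, payment, and burning rules; they differ only in where the ``burned'' revenue is eventually directed, and this difference is invisible to users and to myopic miners acting on the current block.

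For (i) and (ii), I would write out the myopic miner's utility in the $\ell$-smoothed mechanism as the expression~\eqref{eq:mmutil} used for the 1559 mechanism plus the additive term $R_k=\beta(B_1,\ldots,B_{k-1})$ received as a fixed bonus from past blocks. Because $R_k$ depends only on the on-chain history prior to the current block, it is a constant with respect to the miner's two choice variables $F$ and $B_k$. Consequently the argmax over $(F,B_k)$ is unchanged and MMIC follows from Theorem~\ref{t:1559mmic}; likewise, the $(r+\mu)$-costliness inequality of Definition~\ref{def:costly} is preserved because the constant $R_k$ cancels out of both sides, so (ii) follows from Corollary~\ref{cor:1559costly}.

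For (iii), the user utility function~\eqref{eq:userutil} depends only on the per-gas-unit payment $p_t$ and burn $q_t$ charged to a transaction's creator, not on where the protocol sends the burned amount. Since the payment and burning rules of the $\ell$-smoothed mechanism coincide pointwise with those of the 1559 mechanism, the bidding strategy $\epnebid(v_t)=\min\{r+\mu,v_t\}$ from the proof of Theorem~\ref{t:1559uic} remains a symmetric EPNE whenever~$r$ is not excessively low; the case analysis in Table~\ref{table:uic} applies without modification.

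For (iv), the key observation is that Definition~\ref{def:oca} confines the coalition to a single miner of the current block together with that block's users; the miners of blocks $B_{k+1},\ldots,B_{k+\ell}$ who eventually receive the paid-forward revenue are outside this coalition. Thus $q_t$ still represents a loss from the coalition's perspective, the joint utility~\eqref{eq:jointutil} is unchanged (again up to the constant $R_k$ added to the miner's utility), and the proof of Theorem~\ref{t:1559ocaproof} transfers verbatim: setting $\bid^*_t=v_t$ aligns the allocation rule's objective~\eqref{eq:1559_obj} with the joint utility, so no OCA can strictly improve on the resulting on-chain outcome. The only subtle point, which I would flag explicitly in the write-up, is precisely this scoping of the coalition---were one to consider cross-block collusion that brought the future recipients of $R_{k+1},\ldots,R_{k+\ell}$ into the coalition, paid-forward revenue would remain internal and OCA-proofness would no longer be automatic; this is the main obstacle to extending the result beyond single-block OCAs.
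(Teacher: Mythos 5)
Your proposal is correct and matches the paper's reasoning exactly: the paper justifies Theorem~\ref{t:smoothed} with precisely the observation that the three rules are formally identical to those of the 1559 mechanism and that the paid-forward revenue $R_k$ enters the miner's utility only as an action-independent additive constant, so all four single-block guarantees carry over; your write-up simply spells out the reduction claim by claim in more detail than the paper does. Your closing caveat about the coalition being scoped to a single block is also consistent with the paper, which defers exactly that cross-block concern to its separate discussion of sustained miner collusion.
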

Theorem~\ref{t:smoothed} holds no matter how the base fee and
pay-forward rewards are defined (i.e., for any functions~$\alpha$
and~$\beta$).

The discussion on sustained collusion by miners under EIP-1559
(Section~\ref{s:collusion}) applies also to the $\ell$-smoothed
mechanism, with some small changes.  First, for a demand curve with
monopoly quantity more than 12.5M gas, a 100\% miner will be better
off in the $\ell$-smoothed mechanism (with $\ell \ge 1$) because it
will avoid the non-zero fee burn it would otherwise have paid (see
Section~\ref{ss:1559_thought}).  Second, the reasoning behind the
game-theoretic fragility (Section~\ref{sss:fragile}) of a cartel of
miners simulating a 100\% miner strategy is more complicated.  As
before, lost tip revenue disincentives a cartel member from
manipulating the base fee downward---the only manipulation of concern
when base fee revenues are burned.  With the base fee revenues
returned to a 100\% miner by the $\ell$-smoothed mechanism, it's now
also important that fake transactions are costly
(Theorem~\ref{t:smoothed}(ii)) to disincentive manipulations of the
base fee upward.  Finally, with base fee revenues going to miners
rather than ETH holders, the caveats in Section~\ref{ss:caveats}
become moot.

\subsubsection{Pros and Cons of the $\ell$-Smoothed Mechanism}

A basic question, worthy of lengthy debate by the Ethereum community,
is: Who should benefit from the user payments that are inevitably
generated by a fully utilized blockchain?  The fee burn in the 1559
mechanism explicitly favors ETH holders, while the $\ell$-smoothed
mechanism favors Ethereum miners.  Different stakeholders in Ethereum
will of course have their own reasons for preferring one over the
other.

A second trade-off between the 1559 and $\ell$-smoothed mechanisms
concerns whether variability in demand (and hence fees) translates to
variability in security or in the issuance of new currency.  In the
1559 mechanism, every block changes the money supply in two ways:
minting new coins for the block reward (currently 2~ETH), and burning
the coins used to pay the base fee.  Because the base fee rises and
falls with demand, Ethereum's inflation rate would be variable and
unpredictable.  On the other hand, assuming negligible tips, every
block confers roughly the same total reward to the miner (the block
reward); the security of the Ethereum network scales with this total
reward~\cite{auer,budish} and should therefore also stay
relatively constant (modulo fluctuations in the price of ETH).
Meanwhile, in the $\ell$-smoothed mechanism, inflation would be as
predictable as it is under the status quo (currently around 4\%
annually).  Instead, total miner reward would vary with the revenue
generated by the base fee, leading to an unpredictable level of
security (though never less than that with the 1559 mechanism).

Finally, because of its variable total reward, the $\ell$-smoothed
mechanism is vulnerable to certain attack vectors that would be
fruitless under the 1559 mechanism, especially when~$\ell$ is small.
For example, imagine that~$\ell=1$ and a miner~$m_1$ mines a
block~$B_1$ with an unusually large sum~$R$ of transaction fees.
This windfall would be reaped by the miner~$m_2$ 
of the next block~$B_2$; suppose further that the sum of transaction
fees in~$B_2$ is much less than~$R$.  At this juncture, a miner~$m_3$
might consider trying to extend~$B_1$ with a block~$B_3$ in order to
orphan~$B_2$; if other miners happen to extend~$B_3$ rather
than~$B_2$, $m_3$ will effectively have stolen the reward of~$R$
from~$m_2$.\footnote{This is similar to the undercutting attack
  of~\cite{ccs16} for a regime in which transaction fees dominate
  block rewards; see Section~\ref{ss:benefits} for further
  discussion.}
Such examples suggest choosing a large value of~$\ell$ (e.g.,
$\ell=1000$) to guarantee that consecutive blocks will have nearly
identical total rewards associated with them.

\begin{remark}[A Blended Mechanism]
The 1559 and $\ell$-smoothed mechanisms can be easily
blended to balance the competing concerns of miners and ETH holders
and the variability in issuance and security.  For example, for a
parameter $\lambda \in [0,1]$, a mechanism could burn a $\lambda$
fraction of the base fee revenues and pay forward the remaining
$1-\lambda$ fraction.  Theorem~\ref{t:smoothed} and the subsequent
discussion on miner collusion remain valid for such blended
mechanisms.
\end{remark}

\subsection{The BEOS Mechanism}\label{ss:beos}

A variant of the ``pay-it-forward'' design philosophy in
Section~\ref{ss:forward} was proposed also by Basu et al.~\cite{beos}
for a transaction fee mechanism that is not directly related to the
1559 mechanism.  We next explain a slightly simplified version of
their proposal, which we call the {\em BEOS mechanism} (after
its proposers).

There is a fixed block size, say 12.5M gas, and no base fee.  The
first key idea is to charge all transactions included in a block a
common price (per unit of gas), namely the lowest bid of an included
transaction.  Miner revenue is then the block size (in gas) times the
lowest bid of an included transaction, and so a revenue-maximizing miner 
may exclude transactions in order to boost the lowest included
bid.\footnote{This is exactly the ``monopolistic price'' mechanism
  proposed by Lavi et al.~\cite{LSZ19}; they were motivated by the
  problem of maximizing the security provided by transaction fees (at
  the expense of economic efficiency) in a future in which Bitcoin's
  block rewards are negligible.  This mechanism is MMIC; is
  ``approximately UIC,'' in the sense that truthful bidding is an
  approximately dominant strategy for users as the number of
  users grows large~\cite{LSZ19,Y18}; and is not
  OCA-proof (on account of failing to maximize the joint utility of
  the miner and users).}  For example, for a block with room for three
transactions and a mempool containing three transactions with bids 10,
8, and 3, a revenue-maximizing miner would include the
first two transactions while excluding the third (to earn revenue $2
\times 8 = 16$).  (Cf., Example~\ref{ex:spa_mmic}.)

The second key idea is to automatically charge only a minimum
transaction fee---for example, just enough to cover the marginal
cost~$\mu$ of gas---to all transactions in any block that is not
(almost) full.  This rule by itself is toothless and leads to an
equivalent mechanism, as a miner can costlessly extend its favorite
underfull block with minimum bid~$b$ to a full block with minimum
bid~$b$ using fake transactions (all with bid~$b$).

The final key idea in the BEOS mechanism is to pay transaction fees
forward, with the transaction fee revenue from a block~$B$ split
evenly between~$B$'s miner and the miners of the~$\ell-1$ subsequent blocks.
Thus, the miner of a block gets a $1/\ell$ fraction of the transaction
fee revenue in that block, along with a $1/\ell$ fraction of the
combined revenue of the preceding $\ell-1$ blocks.  As a result, for
$\ell \ge 2$, fake transactions now carry a cost: the miner pays their
full transaction fees but recoups only a $1/\ell$ fraction of them as
revenue.

The BEOS mechanism is arguably simpler than that proposed in EIP-1559,
as there is no base fee to keep track of.  Its game-theoretic
guarantees are considerably weaker, however.  While the ``pay it
forward'' idea helps discourage fake transactions, the BEOS mechanism
is not in general MMIC.\footnote{Basu et al.~\cite{beos} prove that
  the mechanism becomes ``approximately MMIC'' in the case of a very
  large number of transactions with i.i.d.\ valuations drawn from a
  distribution with bounded support.}  It is ``approximately UIC'' as
the number of users grows large, in the sense that no bidding strategy
generates significantly more utility than truthful bidding.  It is not
OCA-proof (for~$\ell \ge 2$), for the same reasons that a first-price
auction with fee burning is not OCA-proof
(Proposition~\ref{prop:fpa_burn_not_ocaproof}).
Thus, from a game-theoretic perspective, the BEOS mechanism does not
appear competitive with the 1559 mechanism.

\subsection{The Tipless Mechanism: Trading Off UIC and OCA-Proofness}\label{ss:tradeoff}

The 1559 mechanism uses tips to achieve OCA-proofness in all blocks
(Theorem~\ref{t:1559ocaproof}), at the expense of losing the UIC
condition in blocks with excessively low base fees (see
Section~\ref{sss:1559uic}).  This section presents an alternative
design with the opposite trade-off---one that is always UIC, and
OCA-proof except in blocks with an excessively low base fee.

\subsubsection{The Tipless Mechanism}

We next define the {\em tipless mechanism}, so-called because it is
essentially the 1559 mechanism with constant and hard-coded tips
rather than variable and user-specified tips.  As with the 1559
mechanism, each block has a base fee $r =
\alpha(B_1,B_2,\ldots,B_{k-1})$ that depends on past blocks and is
burned (or alternatively, paid forward as in Section~\ref{ss:forward}).
The creator of a transaction~$t$ specifies a fee cap~$c_t$
but no tip.  This parameter induces a bid $\bidt$ for the transaction
with respect to any given base fee~$r$, namely
\begin{equation}\label{eq:tipless_bid}
\bidt = \min\{ r + \delta, c_t \}.
\end{equation}
Here $\delta$ is a hard-coded tip to incentivize miners to include
transactions---for example, equal to (or perhaps slightly higher than)
the marginal cost~$\mu$ of gas to miners.\footnote{More
  generally, the hard-coded tip~$\delta$ could be adjusted over time in
  the same way as the block reward, through social
  consensus and hard forks.}
The only difference between the tipless mechanism and the 1559
mechanism is the number of user-specified parameters and their
interpretation as bids relative to the current base
fee---that is, the types of bidding strategies available to users.  
The allocation, payment, and burning rules of the tipless
mechanism are formally identical to those of the 1559 mechanism (Definitions~\ref{def:1559alloc}--\ref{def:1559burning}).
Given that all the tips are the same and cover a miner's marginal cost
of gas, the allocation rule~\eqref{eq:1559_obj} boils down to
packing a block as full as possible with transactions~$t$ with a
bid~$\bidt \ge r+\delta$.  The creator of an included transaction
pays~$r+\delta$ gwei per unit of gas, of which~$r$ is burned
and~$\delta$ is transferred to the miner.\footnote{This variant of the
  1559 mechanism has been implemented in the NEAR
  protocol~\cite{near_white}; see Hasu~\cite{near} for further
  discussion.}

\subsubsection{Properties of the Tipless Mechanism}

The proof that the 1559 mechanism is incentive compatible for myopic
miners (MMIC)
depends only on the form of the allocation, payment, and burning rules
of the mechanism; it is agnostic to the process by which transactions'
bids are set (see Theorem~\ref{t:1559mmic}).
Thus, the same proof applies equally well to the
tipless mechanism.
\begin{theorem}[The Tipless Mechanism is MMIC]\label{t:tipless_mmic}
The tipless mechanism is MMIC.
\end{theorem}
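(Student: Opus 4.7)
The plan is to reuse the argument of Theorem~\ref{t:1559mmic} essentially verbatim, exploiting the observation flagged in the statement itself: the allocation, payment, and burning rules of the tipless mechanism are defined to coincide with those of the 1559 mechanism (Definitions~\ref{def:1559alloc}--\ref{def:1559burning}); only the map from user-chosen parameters to an effective bid~$\bidt$ changes. Since the myopic-miner utility function~\eqref{eq:mmutil} is written purely in terms of $\pricet$, $\burnt$, $\mu$, and the gas limits---with no reference to how the $\bidt$'s were produced---substituting the 1559 payment and burning rules yields exactly the expression~\eqref{eq:1559mmic} regardless of which mechanism's bid formula was used.

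Given that, the two halves of the old proof go through with no modification. First, I would note that each fake transaction $t \in F \cap B_k$ contributes $-(r+\mu)g_t$ to $u(F,B_k)$ through the burning term and gas cost while leaving the real-transaction sum untouched (payments on fake transactions flow from the miner to itself and cancel), so any myopic miner strictly prefers $F=\emptyset$. Second, once $F=\emptyset$, the utility collapses to $\sum_{t \in B_k} (\bidt - r - \mu) \cdot g_t$, which is precisely the objective~\eqref{eq:1559_obj} maximized by $\allocs^{\nine}$; hence setting $B_k = \allocs^{\nine}(B_1,B_2,\ldots,B_{k-1},M)$ is optimal.

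The only point that warrants a sanity check, and the closest thing to an obstacle, is that the tipless mechanism imposes a restricted bid space---by~\eqref{eq:tipless_bid}, each $\bidt$ lies in $\{0\} \cup [r, r+\delta]$---and one must confirm this does not create a profitable deviation unavailable in the 1559 setting. I would dispatch this in a single sentence: the bids of fake transactions drop out of~\eqref{eq:1559mmic} entirely (they affect neither the burn nor the gas cost term), so restricting those bids cannot help the miner; and for real transactions the restricted range only shrinks the set of inputs on which the allocation rule is evaluated, without changing the fact that the miner's utility and the allocation rule's objective are identical functions of~$B_k$. I anticipate no further difficulty, and the entire argument should fit into a proof of roughly the same length as Theorem~\ref{t:1559mmic}, with a one-line pointer back to that calculation in place of reproducing it.
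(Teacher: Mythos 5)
Your proposal matches the paper's argument exactly: the paper proves this theorem with the single observation that the proof of Theorem~\ref{t:1559mmic} depends only on the form of the allocation, payment, and burning rules (which the tipless mechanism shares with the 1559 mechanism) and is agnostic to how the bids~$\bidt$ were produced. Your extra sanity check on the restricted bid space is harmless and correctly resolved (though note that under~\eqref{eq:tipless_bid} the attainable bids form the interval $[0,r+\delta]$ rather than $\{0\}\cup[r,r+\delta]$).
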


Now consider a block in which the base fee~$r$ is excessively low
(Definition~\ref{def:low}), meaning that the demand for gas at
price~$r+\delta$ is more than the maximum block size~$G$.  In the 1559
mechanism, the creators of transactions willing to pay at
least~$r+\delta$ must then compete for inclusion through their tips.
As a result, analogous to a first-price auction
(Example~\ref{ex:fpa_uic}), in this case the mechanism is not
incentive compatible for users (UIC)---there are no ``obvious optimal
parameters'' to associate with a transaction.

In the tipless mechanism, such transaction creators do not have the
vocabulary to differentiate themselves by offering to pay extra.  As a
result, the mechanism remains UIC even in blocks with an excessively
low base fee.
\begin{theorem}[The Tipless Mechanism is UIC]\label{t:tipless_uic}
The tipless mechanism is UIC.
\end{theorem}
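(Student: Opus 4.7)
My plan is to exhibit the truthful bidding strategy $b^*(v_t)$, corresponding to setting the fee cap $c_t = v_t$, as a symmetric EPNE, and to verify the equilibrium condition by a case analysis on $v_t$ relative to $r+\delta$, where $r$ is the current base fee. Unlike Theorem~\ref{t:1559uic}, I expect the argument to go through uniformly, without any assumption that the base fee is not excessively low; delivering this uniformity is precisely the point of the tipless variant.

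The first step is to unpack what the proposed strategy does to the mempool. When every other transaction $s \in M$ satisfies $c_s = v_s$, equation~\eqref{eq:tipless_bid} gives $b_s = \min\{r+\delta, v_s\}$, so $s$ is eligible for inclusion (bid at least $r+\delta$) if and only if $v_s \ge r+\delta$, and every eligible transaction submits the identical bid $r+\delta$. The tipless allocation rule (inherited from Definition~\ref{def:1559alloc}) therefore sees only the gas limits of the eligible transactions when choosing a feasible subset, and each included transaction pays exactly $r+\delta$ per unit of gas.

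Next I would verify the EPNE condition by considering the creator of a deviating transaction $t \notin M$. In the low-value regime ($v_t < r + \delta$), truthful bidding excludes $t$ and yields utility $0$; any alternative $c_t' \ge r + \delta$ that causes inclusion forces the strictly negative utility $(v_t - r - \delta) g_t$, and any $c_t' < r + \delta$ also yields exclusion. In the high-value regime ($v_t \ge r+\delta$), truthful bidding makes $t$ eligible and yields utility $(v_t - r - \delta) g_t \ge 0$ whenever the miner's chosen subset includes it; any alternative $c_t' \ge r+\delta$ induces the identical bid and hence---because the allocation depends on the mempool only through bids and gas limits---the identical inclusion decision and identical utility, while any $c_t' < r+\delta$ guarantees exclusion at utility~$0$.

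The step I expect to need the most care is the high-value case in the excessively-low-base-fee regime, precisely the regime that broke the analogous theorem for the 1559 mechanism. The reason I expect the argument nevertheless to succeed is that the tipless mechanism restricts the effective bid space to $\{\text{no bid},\ r+\delta\}$ (cf.\ footnote~\ref{foot:dse}), so even when more than $G$ gas worth of transactions are eligible, no individual user can outbid the others. I would therefore make the EPNE verification insensitive to how the miner tie-breaks among eligible transactions, observing that the creator of $t$ is indifferent across all allocations consistent with their bid. This is the trade-off advertised in the section heading: UIC in every regime, bought at the expense of OCA-proofness when the base fee is excessively low.
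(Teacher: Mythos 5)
Your proposal is correct and follows essentially the same route as the paper: exhibit $c_t=\valt$ (inducing the bid $\min\{r+\delta,\valt\}$) as a symmetric EPNE, split into the low-value and high-value cases, and observe that in the excessively-low-base-fee regime no deviation can change a transaction's induced bid and hence (given tie-breaking that is consistent and independent of fee caps, which the paper assumes in a footnote) cannot change the allocation. The only quibble is your closing phrase that the creator is ``indifferent across all allocations consistent with their bid''---a high-value creator strictly prefers inclusion; what actually closes the argument, and what you correctly state earlier, is that no deviation can alter the inclusion decision.
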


\begin{proof}
Fix an on-chain history $B_1,B_2,\ldots,B_{k-1}$ and corresponding
base fee $r = \alpha(B_1,B_2,\ldots,B_{k-1})$, and a set~$T$ of
transactions.  
Suppose the creator of a transaction $t \in T$ sets its fee cap equal
to its maximum willingness to pay, corresponding to the bid
\begin{equation}\label{eq:dse}
\epnebid(\valt) = \min \{ r+\delta, \valt \}.
\end{equation}
Could some other bid be better?  
For a low-value transaction (with $\valt < r + \delta$), every
alternative  bid $\hat{\bid}_t$ either has no effect on $t$'s utility or
leads to~$t$'s inclusion in the block; the latter only occurs when
$\hat{\bid}_t \ge r+\delta$, in which case the creator's utility drops
from~0 to $(\valt -\hat{\bid}_t) \cdot g_t < 0$.  
For a high-value transaction (with $\valt \ge r + \delta$), every
alternative  bid $\hat{\bid}_t$ either has no effect on the creator's
utility or, if the alternative bid triggers~$t$'s exclusion, drops its
utility from a nonnegative number $(\valt - r- \delta) \cdot g_t \ge
0$ to~0.\footnote{If~$r$ is an excessively low base fee and the demand at
  price~$r+\delta$ is more than the maximum block size, the miner
  maximizes its revenue by packing its block as full as
  possible (as the tip-per-unit-gas~$\delta$ is the same for every
  transaction).  That is, the miner
includes the feasible set of transactions that
 maximizes the total gas used.  We assume that, if there is a tie
 between two or more   such feasible sets, the miner breaks the tie in a
  consistent way, independent of transactions' fee caps.}
We conclude that the bid in~\eqref{eq:dse} is always
utility-maximizing for~$t$'s creator.\footnote{Moreover, the bidding strategy
  $\epnebid(\cdot)$ is a symmetric dominant-strategy
    equilibrium in the sense of footnote~\ref{foot:dse}.
That is, the suggested bid~$\epnebid(\valt)$ is utility-maximizing for
$t$'s creator no matter what the other bids are
(i.e., even if the other bids differ from those suggested by the
strategy~$\epnebid(\cdot)$).\label{foot:dse2}}
\end{proof}

Further, the tipless mechanism is OCA-proof except during periods of
rapidly increasing demand.
\begin{theorem}[The Tipless Mechanism Is Typically OCA-Proof]\label{t:tipless_ocaproof}
Fix an on-chain history $B_1,B_2,\ldots,B_{k-1}$ and corresponding
base fee $r = \alpha(B_1,B_2,\ldots,B_{k-1})$, and a set~$T$ of
transactions for which~$r$ is not excessively low.  
With~$\delta = \mu$, the tipless mechanism is OCA-proof.
\end{theorem}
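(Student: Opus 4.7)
The plan is to mirror the proof of Theorem~\ref{t:1559ocaproof}, with one extra observation to handle the fact that in the tipless mechanism, a user's bid is capped at $r+\delta$ rather than tracking $v_t$ directly. Fix the history $B_1,\ldots,B_{k-1}$, set $r=\alpha(B_1,\ldots,B_{k-1})$, and pick any set $U$ of outstanding transactions. I would propose the bids $c_t^* = v_t$ for every $t \in U$; since $\delta = \mu$, these induce on-chain bids $b_t = \min\{r+\mu, v_t\}$ via~\eqref{eq:tipless_bid}.

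Next I would identify the block produced by the allocation rule. Under~\eqref{eq:1559_obj}, the included transactions form a maximum-gas feasible subset of $\{t \in U : b_t \ge r+\mu\}$, which with our bids equals $\{t \in U : v_t \ge r+\mu\}$. Because $r$ is not excessively low for $U$ (Definition~\ref{def:low}), this entire set is already feasible, so the allocation rule includes all of it: $B_k = \{t \in U : v_t \ge r+\mu\}$. Then, because $\burns^{\nine}$ is the constant $r$ regardless of the contents of $B_k$, the joint utility~\eqref{eq:jointutil} of this on-chain outcome equals
\[
\sum_{t \in U \,:\, v_t \ge r+\mu} (v_t - r - \mu) \cdot g_t.
\]

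To finish, I would bound the joint utility of an arbitrary OCA. For any feasible $T \sse U$ and any OCA $\oca$ between $T$'s creators and the miner, the per-gas burn $\burnt^{\nine}$ on each included transaction is again the constant $r$, so the joint utility simplifies to $\sum_{t \in T}(v_t - r - \mu)g_t$. This expression is maximized over feasible $T \sse U$ by choosing exactly those transactions with nonnegative contribution, namely $T = \{t \in U : v_t \ge r+\mu\}$; this choice is feasible precisely because $r$ is not excessively low. The resulting maximum matches the joint utility of the on-chain outcome above, verifying Definition~\ref{def:ocaproof}.

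The main obstacle, relative to the 1559 case, is that bids can no longer convey arbitrarily large valuations---they saturate at $r+\delta$---so the allocation objective~\eqref{eq:1559_obj} no longer literally coincides with the joint utility. This is exactly where the ``not excessively low'' hypothesis earns its keep: it removes any tie-breaking ambiguity in the allocation rule among feasible subsets of high-value transactions, forcing the miner to include the full jointly-optimal set rather than some strict subset that would leave joint utility on the table.
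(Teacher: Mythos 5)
Your proposal is correct and follows essentially the same route as the paper: write the joint utility as $\sum_{t}(v_t - r - \mu)\cdot g_t$ using the constant per-gas burn $r$, observe that it is maximized by including exactly the transactions with $v_t \ge r+\mu$ (feasible because $r$ is not excessively low), and realize that outcome on-chain with the bids $\min\{r+\mu, v_t\}$. Your closing remark about where the ``not excessively low'' hypothesis is needed matches the paper's subsequent remark on why OCA-proofness fails when the base fee is excessively low.
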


\begin{proof}
The joint utility~\eqref{eq:jointutil} of the miner and users for the
current block~$B_k$ is
\begin{equation}\label{eq:tipless}
\sum_{t \in B_k} ( \valt - r - \mu) \cdot g_t.
\end{equation}
Because~$r$ is not excessively low for~$T$, the total gas consumed by
transactions~$t$ with $\valt \ge r + \mu$ is at most the maximum block
size~$G$.  The joint utility~\eqref{eq:tipless} is therefore maximized
by including precisely these transactions.
This outcome can be achieved on-chain (with~$\bidt = \min \{r
+\mu,\valt \}$ for each $t \in T$), and thus cannot be improved upon
by an OCA.
\end{proof}

\begin{remark}[The Tipless Mechanism Is Not Always OCA-Proof]
The tipless mechan\-ism is not generally OCA-proof when the base fee~$r$ is
excessively low (even with $\delta = \mu$).
In this case, a miner is instructed by the allocation rule to pack its
block as full as possible using transactions with bid at
least~$r+\mu$.  With an excessively low base fee,
the feasible subset of such transactions that maximizes the block size
$\sum_{t \in T} g_t$ is generally different from the feasible subset that
maximizes the joint utility $\sum_{t \in T} (v_t-r-\mu) \cdot g_t$.\footnote{For
  example, with~$G=2$ and $r+\mu=1$, consider one eligible
  transaction with $\valt = 2$ and $g_t = 1$ and another with $\valt =
  1$ and $g_t = 2$.}  The miner and users can then strictly increase
their joint utility with an OCA that instead includes the latter
subset of transactions (for example, with transfers arranged
to share the increase in joint utility equally among the miner and
users).
\end{remark}

\subsubsection{Pros and Cons of the Tipless Mechanism}

Perhaps the strongest argument in favor of the tipless mechanism over
the 1559 mechanism is its simplicity.  On the user side, there are
several simplifications.  The creator of a transaction~$t$ only has to
specify one parameter (a fee cap~$c_t$) rather than two (a fee
cap~$c_t$ and a tip~$\delta_t$).  The ``obvious optimal bid'' in the
tipless mechanism (setting~$c_t=\valt$) is optimal for every block and
no matter what the bids of the competing transactions.
The ``obvious optimal bid'' in the 1559 mechanism
(setting~$c_t=\valt$ and~$\delta_t=\mu$) is optimal only in blocks
without an excessively low base fee, and only after assuming that
other transactions' bids were set in the same way.  On the miner side,
the revenue-maximizing strategy simplifies to maximizing the block
size while using only transactions with a bid that is at least~$r+\mu$
(where~$r$ denotes the current base fee).  Relatedly, miners have no
levers by which to pressure users to increase their tips (cf., footnote~\ref{foot:mediumsize}).

What about the mechanism's drawbacks?
First, the hard-coded tip~$\delta$ is yet another somewhat arbitrary 
parameter than may need to be adjusted over time through network
upgrades.\footnote{Possible counterargument: with so many such
  parameters already (e.g., opcode gas costs~\cite{yellowpaper}),
  what's one more?}
Second, when there are blocks with excessively low base fees (due to
rapidly increasing demand), OCA-proofness breaks down.  At such times,
one might expect miners and users to simulate the on-chain tips of the
1559 mechanism with an off-chain agreement.  Even with a base fee that
is not excessively low, such agreements might be used to accommodate
transaction creators angling for a specific block position (as opposed
to mere inclusion).\footnote{Depending on how miners choose to break ties among
eligible transactions for inclusion in such a block, on-chain
shenanigans may also be possible (e.g.~\cite{random_priority}).}

\subsection{Alternative Base Fee Update Rules}\label{ss:update}

The game-theoretic guarantees for the 1559 mechanism
(Sections~\ref{s:mm}--\ref{s:collusion}) and inseparability of easy
fee estimation and fee withholding
(Section~\ref{ss:refund}--\ref{ss:forward}) argue strongly for a
history-dependent base fee, the revenues from which are burned or
otherwise withheld from a block's miner.  Accordingly, in this section
we consider only designs with such a base fee.

But how, exactly, should the base fee be computed from the
blockchain's history?  The MMIC
(Theorem~\ref{t:1559mmic}), typically-UIC (Theorem~\ref{t:1559uic}),
and OCA-proof (Theorem~\ref{t:1559ocaproof}) guarantees from
Section~\ref{s:mm} hold no matter how the base fee is set.  The
impediments to miner collusion identified in Section~\ref{s:collusion}
likewise give little guidance as to how the base fee should evolve
over time.  The goal of this section is to clarify the assumptions
baked into the update rule in the current EIP-1559 spec~\eqref{eq:update}
and identify a few axes along which to experiment.

\subsubsection{Assessing Update Rules}\label{sss:assess}

The ideal base fee for a block is the market-clearing price for the
current mempool and block size (see Section~\ref{ss:market}).  An
ideal base fee update rule would magically guess this price,
immediately adjusting to sudden changes in demand.  A good base fee
update rule should reasonably approximate this magical one, without
introducing any undue incentives for base fee manipulation by miners
and users or vulnerabilities to outside attacks.
\begin{mdframed}[style=offset,frametitle={Desiderata for a Base Fee Update Rule}]
\begin{enumerate}  

\item Adjusts upward reasonably quickly after a sudden spike in demand.

\item Adjusts downward reasonably quickly after a sudden drop in
  demand.

\item Adjusts slowly enough to avoid overreacting to small or very
  short-lived changes in demand.

\item Cannot be manipulated by a cartel of users and/or miners in a
  game-theoretically robust way (cf., Section~\ref{sss:fragile}).

\item Expensive for an attacker to exploit.

\end{enumerate}
\end{mdframed}
How quickly is ``reasonably quickly''?  How expensive is
``expensive''?  Such questions are outside the scope of this report
and best answered through experimentation and community discussion.
The rest of this section assesses the update rule in the EIP-1559 spec
according to these criteria and suggests some alternatives to explore.

\subsubsection{Decomposable Update Rules}

In principle, the base fee~$r$ for a block~$B_k$ can be an
arbitrary function~$\alpha$ of the blockchain history
$B_1,B_2,\ldots,B_{k-1}$:
\[
r = \alpha(B_1,B_2,\ldots,B_{k-1}).
\]
In practice, however, the base fee should not be overly burdensome to
compute.  This point motivates the next definition.
\begin{definition}[Decomposable Update Rule]\label{def:decompose}
An update rule~$\alpha$ is {\em decomposable} if it can be written
\[
\alpha(B_1,B_2,\ldots,B_{k-1}) = \zeta(B_{k-1})  \cdot \alpha(B_1,B_2,\ldots,B_{k-2}),
\]
where~$\zeta$ is the {\em adjustment function}.
\end{definition}
Definition~\ref{def:decompose} encodes two different restrictions.
First, the base fee of a block should depend on only the base fee and
the contents of the most recent block.  Second, the adjustment
function~$\zeta$ depends only on the contents of the most recent
block~$B_{k-1}$ and not on its base fee.
\begin{example}[The EIP-1559 Update Rule Is Decomposable]\label{ex:1559zeta}
The update rule~\eqref{eq:update} in the EIP-1559 spec is decomposable
with 
\begin{equation}\label{eq:1559zeta}
\zeta(B_{k-1}) = 1+
\frac{1}{8} \cdot \left( \frac{g(B_{k-1}) - G_{target}}{G_{target}}
\right),
\end{equation}
where $g(B) = \sum_{t \in B} g_t$ denotes the size (in gas)
of block~$B$ and~$G_{target}$ a target block size (e.g., 12.5M gas).
\end{example}
A base fee computed by a decomposable update rule can be expressed in
a compact product form.
\begin{prop}[Product Form for Decomposable Update Rules]\label{prop:product}
If~$\alpha$ is a decomposable update rule with adjustment
function~$\zeta$ and~$r_0$ is the base fee of the genesis block~$B_1$,
then for every blockchain history~$B_1,B_2,\ldots,B_{k-1}$,
\begin{equation}\label{eq:product}
\alpha(B_1,B_2,\ldots,B_{k-1}) = r_0 \cdot \prod_{i=1}^{k-1} \zeta(B_i).
\end{equation}
\end{prop}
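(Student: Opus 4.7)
The plan is to prove the identity by induction on the history length $k-1$. The argument amounts to iterating the one-step recurrence provided by Definition~\ref{def:decompose}, so the proof is essentially mechanical once the base case is set up correctly.

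For the base case, take $k=1$ so that the history preceding the block whose base fee we are computing is empty. By hypothesis, the genesis block $B_1$ has base fee $r_0$, so $\alpha$ evaluated at the empty history equals $r_0$; this matches the right-hand side of~\eqref{eq:product}, which is $r_0$ multiplied by an empty product. For the inductive step, assume the formula holds for every history of length $k-2$. Applying the decomposability property once gives
$$
\alpha(B_1, B_2, \ldots, B_{k-1}) \;=\; \zeta(B_{k-1}) \cdot \alpha(B_1, B_2, \ldots, B_{k-2}).
$$
Substituting the inductive hypothesis into the second factor and absorbing $\zeta(B_{k-1})$ into the existing product yields~\eqref{eq:product}, completing the induction.

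The argument has no real obstacle: Definition~\ref{def:decompose} was tailored precisely so that iterated application telescopes into a product. The only mild subtlety is the notational convention that $\alpha$ applied to the empty history coincides with the genesis base fee $r_0$, which is exactly what the hypothesis of the proposition asserts; this is what makes the base case and the inductive step mesh cleanly.
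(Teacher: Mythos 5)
Your proof is correct: the paper states Proposition~\ref{prop:product} without an explicit proof, treating it as an immediate consequence of iterating Definition~\ref{def:decompose}, and your induction (with the base case identifying $\alpha$ on the empty history with the genesis base fee $r_0$) is exactly that intended argument made explicit. Nothing is missing.
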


Non-decomposable update rules are more complex but could potentially
be useful.  For a reasonably natural example, suppose we wanted to
limit the lifetime over which any given block affects the base fee:
\begin{example}[Sliding Windows Are Not Decomposable]
Consider a base fee update rule that depends on only the most recent
$\ell$ blocks, for some parameter~$\ell$ (e.g., 100 or 1000):
\[
\alpha(B_1,B_2,\ldots,B_{k-1}) = r_0 \cdot \prod_{i=k-\ell}^{k-1} \zeta(B_i),
\]
where $k$ is assumed to be at least~$\ell+1$, and~$r_0$ and~$\zeta$
denote the initial base fee and adjustment function, respectively.
Because the
change in base fee depends on both the block entering the sliding
window (the most recent one) and the block exiting this window
(from~$\ell+1$ blocks back), this update rule is not decomposable.
\end{example}

\begin{remark}[Oscillatory Behavior of Decomposable Update Rules]
M.\ Ferreira, D.\ Moroz, and M.\ Stern (personal communication,
October 2020) point out that, in certain pathological scenarios,
decomposable update rules can oscillate between two base fees rather
than converge to a market-clearing base fee, even during a period of
stable demand.  For example, consider
such a rule with an
adjustment function~$\zeta$ satisfying $\zeta(B) = \tfrac{3}{2}$
for maximum-size blocks~$B$ and $\zeta(B) = \tfrac{2}{3}$ for empty
blocks~$B$.  Suppose the current base fee is~$r$ and there is a huge
mempool of transactions, the fee caps of which all happen to land in the
interval $[1.1r, 1.4r]$.  (Assume that all tips are negligible.)  What
happens next?  

Because all transactions are willing to pay the current
base fee of~$r$, $r$ is an excessively low base fee and the next miner
will produce a maximum-size block.  As a
result, the base fee will jump from~$r$ to~$\tfrac{3}{2}r$, at which
point no transactions are willing to pay the base fee!  The next miner
has no choice but to produce an empty block, and the base fee will
return to~$r$.  This oscillation between the base fees~$r$
and~$\tfrac{3}{2}r$, and between maximum-size and empty blocks, could
in principle continue forever.\footnote{With the adjustment function in the
  current EIP-1559 spec~\eqref{eq:1559zeta}, such an oscillation will stop
  eventually, although possibly only after a large number of blocks
  (depending on how tightly concentrated transactions' bids are).}

Such oscillatory behavior may be unlikely in a real deployment, given
the variety of tools and considerations likely to be used when
specifying the bidding parameters for a transaction.  If necessary,
Ethereum clients could explicitly inject randomness into these
parameters to avoid such pathological outcomes.
\end{remark}

Having noted that non-decomposable update rules may be worth
experimenting with, we now narrow our focus to decomposable rules.

\subsubsection{What Should the Adjustment Function Depend On?}

Designing a decomposable update rule boils down to designing its
adjustment function~$\zeta$.  By assumption, this function depends
only on the contents of the most recent block~$B_{k-1}$.  In
principle, the function $\zeta(B)$ could depend on $B$'s contents in
arbitrarily complex ways.  In the EIP-1559 adjustment
function~\eqref{eq:1559zeta}, $\zeta(B)$ depends only on the total gas
$g(B) = \sum_{t \in B} g_t$ used in~$B$, and not on any finer-grained
information about its transactions.

While it's easy to imagine alternative adjustment functions, care must
be taken with the incentives.  As a cautionary tale, consider an
adjustment function~$\zeta$ that tries to do away with
variable-size blocks through its dependence on the bids
attached to the transactions in a block~$B$.  
\begin{example}[Incorporating Bids into the Adjustment Function]\label{ex:bad}
In this design, the
target block size and the maximum block size are the same (e.g., 12.5M
gas).  If a block~$B$ has size less than the maximum, then the
adjustment function satisfies $\zeta(B) < 1$
and the base fee decreases for the next block (as in EIP-1559).  For a
full block~$B$, the adjustment function considers the minimum (or
average, or median, or\ldots) tip of a transaction in the block.  If
this statistic is close to~0, the base fee remains unchanged
($\zeta(B) = 1$); if it's significantly larger than~0, the base fee
increases ($\zeta(B) > 1$).  

The problem?  When the current base fee is excessively low, there is
no disincentive to the users or miners from colluding to keep it low.
For suppose miners and users moved the tip market off-chain, similar
to the proof of Proposition~\ref{prop:fpa_burn_not_ocaproof}.  Users
and miners are indifferent to whether payments are on- or off-chain,
as the fee burn and gas costs are the same either way.  But now the
on-chain tips are all~0 and the base fee will not increase.
\end{example}

\begin{remark}[OCA-Proofness vs.\ Miner Collusion]
The off-chain agreement in Example~\ref{ex:bad} does not violate
OCA-proofness (users and miners are equally well off with the OCA, 
but not strictly better off) and hence does not contradict  the
aforementioned fact that the 1559 mechanism remains OCA-proof no 
matter how its base fee is computed.
However, the OCA in Example~\ref{ex:bad} does show that this variant
of the 1559 mechanism encourages the most concerning type of 
coordinated miner strategy (Section~\ref{sss:upshot})---one that
favors the miners at the expense of the network, is potentially
undetectable, and is game-theoretically robust.
\end{remark}
More generally, Example~\ref{ex:bad} illustrates how OCAs can be used
to manipulate any attempt to incorporate the bids attached to a
block's transactions into the adjustment function.  Given these
dangers, it is unsurprising that the adjustment function in EIP-1559
depends only on the gas consumed by the included transactions, and
it is unclear if any additional information could be safely used.

\subsubsection{The Functional Form of the Adjustment Function}

Even after committing to an adjustment function that is a function
solely of the block size there remains flexibility in the function's
form and parameters.  The choices of these in the EIP-1559 adjustment
function~\eqref{eq:1559zeta} appear fairly arbitrary and are prime
candidates for experimentation; we next offer some possible
alternatives.

By assumption, we are now considering update rules of the form
$\zeta(B) = f(g(B))$, where~$f$ is a univariate real-valued function
and $g(B) = \sum_{t \in T} g_t$ denotes the size of block~$B$.
Only nondecreasing functions are sensible choices for~$f$---big
blocks suggest excess demand and that the base fee should be
increased, small blocks that it should be decreased.  Any continuous
such function~$f$ effectively has a ``target block size,'' 
meaning a gas threshold~$G_{target}$ such that $f(G_{target}) = 1$.

What functional form should~$f$ have?  The current EIP-1559
spec uses an adjustment function~\eqref{eq:1559zeta} with the form 
\begin{equation}\label{eq:f}
f(x) = 1 + h(x),
\end{equation}
where~$h$ is an increasing linear function with $h(G_{target})=0$.
Linear functions are attractive for their simplicity, but a nonlinear
function~$h$ might well strike a better balance between the competing
goals listed in Section~\ref{sss:assess}.  

V.\ Buterin (personal communication, October 2020) suggests an
alternative functional form, motivated by the fact that
the function $1+x$ is well approximated by~$e^x$ 
when~$x$ is small (where $e=2.718\ldots$ is Euler's number):
\begin{equation}\label{eq:exp}
f(x) = e^{h(x)},
\end{equation}
where~$h$ is an increasing function equal to~0 at $G_{target}$.
Decomposable update rules with an adjustment function of this form are
especially aesthetically appealing when written in product
form~\eqref{eq:product}:
\[
\alpha(B_1,B_2,\ldots,B_{k-1}) 
= r_0 \cdot \prod_{i=1}^{k-1} \zeta(B_i)
= r_0 \cdot \prod_{i=1}^{k-1} e^{h(g(B_i))}
= r_0 \cdot \exp \left\{ \sum_{i=1}^{k-1} h(g(B_i)) \right\}.
\]
For example, plugging in the function $h(x) =
(x-G_{target})/8G_{target}$ used in the current EIP-1559
spec:
\begin{eqnarray*}
\alpha(B_1,B_2,\ldots,B_{k-1}) 
& = & r_0 \cdot \exp \left\{ \frac{1}{8} \sum_{i=1}^{k-1}
  \left( \frac{g(B_i)-G_{target}}{G_{target}} \right) \right\}\\
& = & r_0 \cdot \exp \left\{ \frac{1}{8} \left( \frac{\sum_{t \in B_1 \cup
        \cdots \cup B_{k-1}} g_t}{G_{target}} -(k-1) \right) \right\}.
\end{eqnarray*}
The final expression makes clear that this base fee depends only on
the amount of gas consumed to date (along with the block height~$k$
and initial base fee~$r_0$), and not on how this gas was distributed
across the past blocks.  The adjustment function~\eqref{eq:1559zeta}
proposed in EIP-1559 does not have this property; for example, two
blocks with size equal to the target leave the base fee unchanged,
while an empty block followed by a block with size double the target
(or vice versa) have the cumulative effect of multiplying the base fee
by~$\tfrac{63}{64}$.

\begin{remark}[Compromising with Taylor Approximations]
Exponential functions are less convenient numerically than
polynomials.  The adjustment function in~\eqref{eq:f} can be viewed as
a degree-1 polynomial approximation of the exponential adjustment
function~\eqref{eq:exp}.  A natural compromise is to instead use the 
degree-2 polynomial approximation suggested by the exponential
function's Taylor series:
\[
f(x) = 1 + h(x) + \frac{h(x)^2}{2}.
\]
For example, plugging in the function $h(x) =
(x-G_{target})/8G_{target}$ gives a novel adjustment function:
\[
\zeta(B) = 1 + \frac{1}{8} \cdot \frac{g(B)-G_{target}}{G_{target}} +
\frac{1}{128} \frac{(g(B)-G_{target})^2}{G_{target}^2}.
\]
\end{remark}

\subsubsection{Choosing the Rate of Change}\label{sss:chooserate}

One ``magic number'' that jumps out from the adjustment
function~\eqref{eq:1559zeta} proposed in EIP-1559 is the factor
of~$\tfrac{1}{8}$, which controls how rapidly the base fee can change
from one block to the next.  More generally, an important design
question is the minimum and maximum values that an adjustment
function~$\zeta(B)$ can take on (in~\eqref{eq:1559zeta}, $\tfrac{7}{8}$
and $\tfrac{9}{8}$, respectively).  The goal should be to strike a
balance between the desiderata listed in Section~\ref{sss:assess}.

The factor of~$\tfrac{1}{8}$ in~\eqref{eq:1559zeta} means that a
sequence of maximum-size blocks (with double the target size) would
double the base fee in under 1.5 minutes (assuming one new block on
average every 13--15 seconds~\cite{blocktime}) and increase it by an
order of magnitude in under~5 minutes.  A sequence of empty blocks
would decrease the base fee at a similar, slightly faster, rate.
Thus, for demand shocks that persist for tens of minutes or more, the
base fee should have sufficient time to adjust.  The base fee would
not respond much to short-lived demand shocks, although sudden demand
increases would be mitigated by the additional throughput offered by
variable-size blocks (cf., Example~\ref{ex:traj}).  Overall, for balancing
the first three desiderata in Section~\ref{sss:assess}, the initial
choices of the factors~$\tfrac{7}{8}$ and~$\tfrac{9}{8}$ for the
minimum and maximum change in base fee seem as good as any.  However,
this design choice should clearly be revisited after there is more
data from experiments with and deployments of the 1559
mechanism.\footnote{For example, the factor of~$\tfrac{1}{8}$ could be
  added to the list of hard-coded parameters whose values are
  revisited with every network upgrade, joining the block reward,
  opcode gas costs, and so on.}

A different principled way to derive a maximum rate of change for the
base fee is to consider an attacking cartel of miners that strives to
overwhelm the network with a sequence of maximum-size blocks 
(cf., the fifth goal in
Section~\ref{sss:assess}).  For example, consider the adjustment function
in~\eqref{eq:1559zeta} and suppose that the minimum base fee is 1 gwei
and the maximum block size is 25M gas.  Five minutes of such a
``double-full block attack,'' starting from the minimum-possible base
fee and assuming that all blocks during this period are mined by the
cartel, would typically cost at least
\[
\underbrace{25000000}_{\text{max block size (gas)}} \times \quad
\sum_{i=1}^{20} \underbrace{\left( \frac{9}{8}
  \right)^{i-1}}_{\substack{\text{base fee of}\\ \text{$i$th block (gwei)}}}
\approx \text{1.9 ETH};
\]
thirty minutes would cost roughly
\[
25000000 \times \quad
\sum_{i=1}^{120} \left( \frac{9}{8}
  \right)^{i-1} \approx \text{275000 ETH},
\]
or roughly 165 million USD at an exchange rate of 600 USD/ETH; and so
on.  Similar calculations can be used to reverse engineer an
appropriate maximum rate of base fee change from a
target cost for a double-full block attack of a given
duration.

\begin{remark}[Variable Block Sizes vs.\ Variable Rate of Block
  Creation]
Short (e.g., five-minute) double-full block attacks
  appear unlikely to significantly harm the Ethereum network, provided
the existing vulnerabilities to adversarially constructed
blocks~\cite{broken_metre} are addressed.
A sequence of~$n$ double-full blocks in a given time period imposes
roughly the same load on the network as~$2n$ target-size blocks during
the same period.  Because blocks are effectively created  by a Poisson
process rather than deterministically, the Ethereum network must
already accommodate short periods during which the gas consumption is
double its expectation.\footnote{And with the proof-of-stake design in ETH 2.0, the rate of block
creation will be roughly deterministic; there, the new variability in
block sizes under EIP-1559 will effectively be canceled out
by the variability eliminated from the rate of block creation.}
\end{remark}

\subsubsection{Choosing the Block Elasticity}\label{sss:elastic}

A second ``magic number'' in EIP-1559 is the ratio of~2 between the
maximum and target block sizes.  Holding the target block size fixed,
why not a larger maximum block size?  Or a smaller one?

For flexibility and to absorb short and sudden demand spikes (cf.,
Example~\ref{ex:traj}), a bigger maximum block size is better.  The
problem with a big maximum block size is the computation and
bandwidth required by full nodes to process blocks, and the
consequent risks of greater centralization.
A ratio of~2 is one simple compromise
between these two competing forces.

A ratio of~2 between the maximum and target block size is also
convenient because only a 51\% cartel of miners could significantly
manipulate the base fee or the long-run average block size.  (For
example, with a 49\% cartel, the non-colluding
miners can negate maximum-size blocks with empty blocks and vice
versa.)
With a ratio of only~$\tfrac{3}{2}$, say, one of two compromises must
be struck: (i) leave the base fee adjustment function as
in~\eqref{eq:1559zeta}, in which case a 34\% cartel could
manipulate the base fee downward (as it would now take two maximum-size
blocks to negate an empty block produced by the cartel); or (ii) make
the adjustment function in~\eqref{eq:1559zeta} asymmetric so that empty and
maximum-size blocks continue to negate each other, in which case a
34\% cartel could reduce the long-run average block size to less than the
target block size, thereby reducing throughput (again by producing empty
blocks).  Ratios bigger than~2 seem less problematic, as a 34\% cartel
of miners would presumably not want to manipulate a burned base fee
upward.

Overall, these points suggest taking the ratio between the maximum
and target block size as large as possible, subject to the network
having the computational resources to process a short burst of
maximum-size blocks.  The ``best'' choice of this parameter may
evolve over time, and could be added to the list of parameter choices 
that are revisited with each network upgrade.

\section{Additional Remarks}\label{s:add}

\subsection{Side Benefits of EIP-1559}\label{ss:benefits}

This report assesses the transaction fee mechanism proposed in
EIP-1559 from the perspective of easy fee estimation for Ethereum
users (formalized by the ``typically-UIC'' guarantee of
Theorem~\ref{t:1559uic}).  
Several byproducts of the design are of value in their own right. 

First, as we observed in Section~\ref{ss:toohigh}, 
easy fee estimation and the introduction of
variable block sizes should decrease the
variance in transaction fees during
periods of changing demand.

Second, EIP-1559 introduces fee burning through its burned base fee.
Fee burning (or otherwise withholding base fee revenues from a block's
miner) is necessary for the base fee to be economically meaningful
(see Section~\ref{ss:refund}), but arguably is a ``necessary good''
rather than a ``necessary evil.''  Ethereum's current rate of
inflation---due to block, uncle, and nephew rewards---is roughly
4\%.  If transaction fees continue to be high, and a significant
portion of them are burned, the inflation rate will decrease and
could even turn negative.\footnote{For example, in September 2020,
  Ethereum miners made more money from transaction fees than from
  block rewards~\cite{bigfees}.}\fnsep\footnote{The inflation rate
  will become less predictable, however.}  In any case, because burned
fees are effectively a lump sum refund to ETH holders, the value of
ETH would be tied directly to the intensity of network usage.
Additionally, burned fees must be paid on-chain and in ETH, thereby
imbuing ETH with unique functionality.\footnote{In contrast, mere
  transfers between users and miners can be moved off-chain and paid
  using a different asset (e.g., USDT).}

Third, EIP-1559's base fee can serve as a difficult-to-manipulate
proxy for the current market-clearing gas price, which can in
turn enable a variety of new smart contracts (e.g., gas futures
markets).

Finally, 
there are well-documented incentive issues when transaction fees
dominate block rewards, for example the incentive for a miner to launch an
undercutting attack that forks a block with an unusually large amount
of transaction fees~\cite{ccs16}.  
By directing transaction fees away from
miners and to the network, EIP-1559 decreases the importance of
transaction fees to miners and makes such attacks less
attractive.\footnote{Though if the bulk of transaction fees come from
  a small number of transactions willing to pay much more than the
  base fee (e.g., submitted by front-running bots), such attacks will
  remain an issue.}

\subsection{The Escalator: EIP-2593}\label{ss:escalator}

EIP-2593 (a.k.a.\ the ``escalator'') is another proposal, orthogonal
to EIP-1559, that strives to improve the user experience through more
convenient fee estimation~\cite{2593spec}.\footnote{The idea behind
  this proposal was inspired by Miller and Drexler~\cite{MD88}.}
Its goal is not to change Ethereum's transaction fee mechanism (which
would remain a first-price auction), but rather to make bidding easier
for Ethereum users through a richer menu of bidding options.
Specifically, rather than a single gas price, an Ethereum
transaction would now come equipped with four bidding-related
parameters: 
\begin{itemize}

\item [(i)] the smallest block height at which the transaction is
valid, and a bid for that block; 

\item [(ii)] the largest block height at
which the transaction is valid, and a bid for that block.  

\end{itemize}
Bids for
all intermediate blocks are then determined automatically via linear
interpolation.  For example, a bid of~100 for block~10 and~150 for
block~20 induces the bids 105, 110, \ldots, 145 for blocks~11--19.
One would expect an impatient user to specify a relatively short
interval of blocks and a relatively high bid for the first block.  A
patient user, who favors a cheap price over immediate inclusion, would
presumably opt for a long interval and a low initial bid.

An Ethereum user could simulate the functionality of EIP-2593 by
rebroadcasting a transaction with successively higher gas prices.  The
goal of EIP-2593 is to automate this process in-protocol, eliminating
the added computational burden of resubmitted transactions.

EIP-2593 increases the number of bidding parameters relative to the
status quo---in effect, adding a rate of increase parameter to the
existing gas price parameter.  More parameters means
more in-protocol bidding options for users, but they also potentially
complicate the task of choosing a bidding strategy.\footnote{Hasu and
  Konstantopoulos~\cite{2593analysis} point out that another possible
  drawback of supporting richer bidding strategies is a decrease in
  privacy, with more clues about a user's preferences publicly
  available on-chain.}
EIP-2593 also locks users into a single type of bidding strategy (with
a linear bid increase), even though a user might be better served by
a different type of strategy (e.g., a more general concave or convex
function).

\begin{remark}[Combining EIP-1559 and EIP-2593]
EIP-2593 was initially proposed in part as an alternative to EIP-1559,
as a way to make fee estimation easier for users without introducing
any major changes to the Ethereum protocol.  The two proposals are
easily combined, however, by plugging in EIP-2593's linear bidding
strategies to set transaction tips in EIP-1559; the base fee would
evolve independently, according to the usual update rule
in~\eqref{eq:update}.
Because EIP-1559's tips are necessary primarily
in blocks with an excessively low base fee (see
Definition~\ref{def:low} and Theorem~\ref{t:1559uic}), EIP-2593's additional functionality may be
relevant only in the occasional period of rapidly increasing demand.\footnote{Monnot~\cite{2593sim} explores this
  design through simulations, with inconclusive results.}
\end{remark}

The scope of EIP-2593 is narrower than that of EIP-1559 and it makes much
less radical changes to the status quo.  The good news is that
the former proposal accordingly carries less risk than the latter; the
bad news is that it offers none of the side benefits listed in
Section~\ref{ss:benefits}.  For the specific objective of easier fee
estimation, the arguments currently justifying EIP-1559 
appear stronger and more rigorous than those
for EIP-2593.  In particular, because EIP-2593 retains the 
first-price transaction fee mechanism, there is no hope for ``obvious
optimal bids'' in the sense of the ``typically-UIC'' guarantee for the
1559 mechanism (Theorem~\ref{t:1559uic}).

\section{Conclusions}\label{s:conc}

Does EIP-1559 offer an improvement over Ethereum's current transaction
fee mechanism?  The biggest potential benefits of the proposed
changes are as advertised: easy fee estimation, in the form of an
``obvious optimal bid'' outside of periods of rapidly increasing
demand (Theorem~\ref{t:1559uic}); lower variance in transaction fees
due to increased flexibility in block size (Section~\ref{ss:toohigh});
game-theoretic robustness to protocol deviations and off-chain
agreements, both at the scale of a single block
(Theorems~\ref{t:1559mmic} and~\ref{t:1559ocaproof}) and of multiple
blocks (Section~\ref{s:collusion}); and reduced inflation due to fee
burning (Section~\ref{ss:benefits}).  

Most of the major risks in implementing EIP-1559 are the same as those
for any major change to the Ethereum protocol: implementation errors;
a fork caused by some parties rejecting the changes; extra complexity
at the consensus layer; additional parameters to be tweaked with every
network upgrade; and the spectre of unforeseeable downstream consequences.
Additional risks specific to EIP-1559 include the possibility of a
hostile reception by miners (due to lost revenue from burned
transaction fees) and a coordinated response
(Sections~\ref{ss:1559collusion}--\ref{ss:caveats}); and a new (if
expensive) attack vector enabled by variable-size blocks
(Sections~\ref{sss:chooserate}--\ref{sss:elastic}).

Reasonable people will disagree on whether the benefits of EIP-1559
justify the risks in adopting it.  Those who subscribe to a ``why fix
what isn't (too badly) broken'' philosophy may prefer to stick with
the status quo.  For those who believe that consensus-layer innovation
should continue to be a central part of Ethereum's future, however,
the arguments in favor of EIP-1559 are strong.

\end{document}